\definecolor{lightblue}{rgb}{0.68, 0.85, 0.9}
\definecolor{amber}{rgb}{1.0, 0.49, 0.0}
\definecolor{Green}{rgb}{0.0, 0.5, 0.0}
\definecolor{purple}{rgb}{0.7,0,0.7}
\newcommand{\ceff}[0]{c_{\rm{eff}}}
\newcommand{\ex}[0]{\ensuremath{\mathrm{e}}}
\newcommand{\ZZ}{\ensuremath{\mathbb{Z}}}
\newcommand{\zhat}{\ensuremath{\widehat{Z}}}
\newcommand{\brs}[3]{\overline{\Sigma({#1},{#2},{#3})}}
\newcommand*\xbar[1]{
  \hbox{
    \vbox{
      \hrule height 0.5pt 
      \kern0.5ex
      \hbox{%
        \kern-0.1em
        \ensuremath{#1}%
        \kern-0.1em
      }%
    }%
  }%
}
\renewcommand{\vec}[1]{\mathbf{#1}}
\theoremstyle{plain}
\newtheorem{thm}{Theorem}
\newtheorem{defn}{Definition}
\newtheorem{prop}{Proposition}
\newtheorem{cor}{Corollary}
\newtheorem{coj}{Conjecture}
\newtheorem{lemma}{Lemma}
\theoremstyle{remark}
\newtheorem{exmp}{Example}
\newtheorem{rem}{Remark}
\font\myfont=cmr12 at 11pt 
\title{$\ceff$ from Surgery and Modularity}
\author[$\hphantom{}$]{\myfont Shimal Harichurn\orcidlink{ 0000-0002-2361-916X}\thanks{sharichurn.research@gmail.com}}
\author[1]{\myfont Mrunmay Jagadale \orcidlink{ 0000-0002-7950-4636} \thanks{mjagadal@caltech.edu}}
\author[2]{\myfont Dmitry Noshchenko
\orcidlink{0000-0002-9639-5603}
\thanks{dsnoshchenko@stp.dias.ie}}
\author[1]{\myfont Davide Passaro \orcidlink{0000-0001-5368-4635} \thanks{dpassaro@caltech.edu}}
\affil[1]{Walter Burke Institute for Theoretical Physics, California Institute of Technology, Pasadena, CA 91125, USA}
\affil[2]{School of Theoretical Physics, Dublin Institute for Advanced Studies, 10 Burlington Road, Dublin 4, D04 C932, Ireland}
\begin{document}

\hfill
\begin{tabular}{r}
DIAS-STP-25-20
\end{tabular}

{\let\newpage\relax\maketitle}

\begin{abstract}
$\zhat$ invariants, rigorously defined for negative definite plumbed 3-manifolds, are expected -- on physical grounds -- to exist for every closed, oriented 3‑manifold.
Several prescriptions have been proposed to extend their definition to generic plumbings by reversing the orientation of a negative definite plumbing, thus turning it into a positive definite one.
Two existing proposals are relevant for this paper: 
(i) the regularized $+1/r$-surgery conjecture combined with the false–mock modular conjecture, 
and (ii) a construction based on resurgence and a false theta function duality.  
In this note, we compare these proposals on the class of Brieskorn homology spheres
$\brs{s}{t}{rst\pm1}$ and find that they are incompatible in general.
Our diagnostic is the effective central charge, $c_{\rm eff}$, which governs the asymptotic growth of coefficients of $\zhat$.

First, we prove that the upper bound on $\ceff$ from prescription (i) is governed by the Ramanujan theta function, which regularizes the surgery formula.  
Second, we develop numerical and modular tools that deliver the lower bounds as well as exact values via mixed mock‑modular analysis. Complementing this, we also study $\ceff$ for negative definite plumbed 3-manifolds which allow for a better comparison of pairs of 3-manifolds related by orientation reversal.

As a result, we find that for some Brieskorn spheres the surgery and false‑mock prescriptions violate the expected relation between $c_{\rm eff}$, Chern–Simons invariants and non‑abelian flat connections.  
These findings underscore $c_{\rm eff}$ as a sensitive probe of the ``positive side'' of $\zhat$‑theory.
\end{abstract}

\newpage

\tableofcontents

\newpage

\section{Introduction}

$\widehat{Z}$ invariants are quantum $q$-series invariants for closed, oriented $3$-manifolds.
They were originally introduced by Gukov, Pei, Putrov and Vafa \cite{gukov2020bps}.
The $\zhat$ invariants were shown to enjoy two complementary realizations: one physical and one mathematical. 
The physical perspective starts with a six‑dimensional $\mathcal N=(2,0)$ theory on $D^{2}\!\times\!S^{1}\!\times\!M_{3}$. 
Compactifying this theory over $M_{3}$ yields a three‑dimensional $\mathcal{N}\!=\!2$ theory $T[M_{3}]$ on the solid torus $D^2 \times S^1$; 
the half–index of $T[M_{3}]$ on $D^{2}\!\times_{q}\!S^{1}$ with boundary condition $b\in \operatorname{Tor}H_1(M_{3})/\mathbb Z_2$ is precisely $\zhat_{b}(M_3;\tau)$. 
Compactifying instead over $D^{2}\!\times\!S^{1}$ gives complex Chern–Simons theory on $M_{3}$.
The defining relation between the level‑$k$ partition function, a.k.a. Witten–Reshetikhin–Turaev (WRT) invariant, and $\widehat{Z}$’s can then be seen as a finite‑dimensional $S$–transform:
\begin{equation}\label{eq:Zhat-WRT}
Z_{\text{CS}}(M_3;k)
  =(i\sqrt{2k})^{\,b_{1}(M_{3})-1}
  \sum_{a,b\in\operatorname{Tor}H_{1}(M_{3})/\mathbb Z_{2}}
  \mathrm{e}^{2\pi i k\,\ell k(a,a)}\,S_{ab}\,
  \widehat{Z}_{b}(M_3;q)\Big|_{q = \mathrm{e}^{\frac{2\pi i}{k}}}\, .
\end{equation}
Here $b_1(M_3)$ is the first Betti number, and the summation on the right side is over abelian flat connections on $M_3$, which are one-to-one with the boundary conditions of $T[M_3]$.
The authors of \cite{gukov2020bps} also supplied a purely mathematical construction of $\zhat_{b}$ whenever $M_{3}$ is a negative definite plumbed manifold, i.e. it is presented by a weighted graph $\Gamma$ whose adjacency plumbing matrix $M$, obtained by adding the weights of the nodes of $\Gamma$ to the diagonal, is negative definite.

Various structural properties of $\widehat{Z}$ invariants (as well as generalizations thereof) for such manifolds were further studied in \cite{cheng20193d, park2020higher, Ri:2022bxf, Gukov:2023srx, Cheng:2023row, Cheng:2024vou}.  
Let $L=\#\!\operatorname{Vert}(\Gamma)$ and $\deg(v)$ degree of a vertex $v\in\operatorname{Vert}(\Gamma)$; one obtains for each boundary condition $b \simeq (2\operatorname{Coker}M+\delta)/\mathbb Z_{2}$ the convergent $q$-series
\begin{equation}
\label{eq:zhat-positive-plumbing}
\;
\widehat{Z}_{b}(M_3;q)=q^{-\frac{3L+\sum_{v}a_{v}}{4}}\,
\underset{\text{v.p.}}{\oint_{|z_{v}|=1}}
\;\prod_{v}\frac{dz_{v}}{2\pi i z_{v}}\,
(z_{v}-z_{v}^{-1})^{\,2-\deg(v)}\;
\Theta_{-M,b}(z;q)~,\;
\end{equation}
where the lattice theta–function is
$\Theta_{-M,b}(z;q)=\sum_{\ell\in 2M\mathbb Z^{L}+b}
   q^{-\frac{(\ell,M^{-1}\ell)}{4}}
   \prod_{i=1}^{L} z_{i}^{\ell_{i}}$, $a_v$ is the weight associated with vertex $v$ and $q=\mathrm{e}^{2\pi i\tau}$.
Crucially, this series is not convergent when the manifold in question is a positive definite plumbed manifold.

$\zhat$ invariants can thus be thought of as a $q$-series refinement of WRT invariants. In particular, when $M_3$ is a negative definite plumbed homology sphere, the linear combination of radial limits of $\widehat{Z}_b(M_3;\tau)$ at roots of unity recovers the corresponding WRT invariant (in other words, the equation \eqref{eq:Zhat-WRT} holds, as shown in \cite{Murakami:2022cjk}).
One motivation for this refinement is due to the search for a categorified invariant of a $3$-manifold, rooted in the categorification programme initiated by Crane–Frenkel in four‑dimensional TQFT \cite{crane1994four}.
Since the discovery of $\zhat_{b}$, notable progress on categorified invariants of 3-manifolds has been achieved (e.g. in \cite{akhmechet2023lattice,Gukov:2024vbr,akhmechet2025knot,Moore:2025pvq}) where new categorified invariants have arisen from certain deformations of $\zhat_{b}$. 
The construction of a proper mathematical definition for the $\zhat_{b}$ for a general 3-manifold remains an open problem and serves as a broader motivation for our work.

\paragraph{} Continuing the line of research of \cite{gukov2020bps}, Cheng, Chun, Ferrari, Gukov and Harrison \cite{cheng20193d} approached the positive‑plumbing problem by combining results from modularity and intuition coming from the relation between the $\zhat_{b}$ and Chern-Simons theory.
From the modularity point of view, the authors of \cite{cheng20193d} show that for the simplest non‑trivial negative plumbings -- star‑shaped graphs giving Seifert homology spheres with three singular fibers -- the $\zhat_{b}$ is explicitly written as a linear combination of false theta‑functions, Eichler integrals of unary theta functions.
These functions are known to be quantum modular forms, modern extensions of classical modularity \cite{zagier2010quantum}. 
In Chern–Simons theory orientation reversal sends the level $k$ to its negative
$$
Z_{\text{CS}}(M_3;k) = Z_{\text{CS}}(\overline{M_3};-k)~.
$$
Therefore, via
$$
q \;=\;\ex^{2\pi i\tau}=\ex^{2\pi i/k},
$$
orientation reversal should implement the inversion $q\leftrightarrow q^{-1}$. 
However, applying this transformation naively to the plumbing formula \eqref{eq:zhat-positive-plumbing} generically leads to divergent series.
Nonetheless, the authors of \cite{cheng20193d} show that for some manifolds the $\zhat$ invariants can be rewritten as $q$-hypergeometric sums that converge on both sides of the unit circle, thus admitting a well-defined $q\leftrightarrow  q^{-1}$ transformation.
Consider, for example, the Poincaré homology sphere $\Sigma(2,3,5)$.
Its $\zhat$ invariant can be shown to be equal (up to a factor of $q$) to a linear combination of false theta functions:
$$
\zhat_0(\Sigma(2,3,5);\mathrm{e}^{2\pi i\tau}) \sim \tilde{\theta}^1_{30,1}(\tau) + \tilde{\theta}^1_{30,11}(\tau) + \tilde{\theta}^1_{30,19}(\tau) + \tilde{\theta}^1_{30,29}(\tau)\,.
$$
The authors of \cite{cheng20193d} show that this can also be written in a $q$-hypergeometric form
$$
\zhat_0(\Sigma(2,3,5);q) \sim q^{\frac{1}{120}}\left(2-\sum_{n\geq 0} \frac{(-1)^n q^{\frac{n(3n-1)}{2}}}{(q^{n+1};q)_n}\right)~.
$$
The hypergeometric series converges for both $\abs{q}<1$ and $\abs{q}>1$. 
Therefore, following the physical expectation from Chern-Simons theory, we can apply $q\leftrightarrow q^{-1}$ transformation. Remarkably, it leads to a relation to the order 5 mock theta function of Ramanujan $\chi_0(q)$
$$
q^{\frac{1}{120}}\left(2-\sum_{n\geq 0} \frac{(-1)^n q^{\frac{n(3n-1)}{2}}}{(q^{n+1};q)_n}\right) \xleftrightarrow{\ \  q \leftrightarrow q^{-1}} q^{-\frac{1}{120}}\left(2-\sum_{n\geq 0} \frac{ q^{n^2}}{(q^{n+1};q)_n}\right) = q^{-\frac{1}{120}}(2-\chi_0(q)).
$$

Examples like these, where linear combinations of false theta functions are exchanged by mock theta functions, shed light on the modular structure of the $q$-series associated to a 3-manifold upon orientation reversal. The above had motivated the false-mock conjecture for $\zhat$ invariants:

\begin{coj}[The False-Mock Conjecture, \cite{cheng2020three,Cheng:2024vou}]\label{con:False-Mockv2}
Let $M_3$ be a 3-manifold for which the $\zhat$ invariants take the form
    \begin{equation}
        \zhat_b(M_3;\mathrm{e}^{2\pi i\tau}) = \mathrm{e}^{2\pi i\tau c} \left(\tilde{\vartheta}(\tau)+p(\tau)\right),
    \end{equation}
where $\tau\in\mathbb{H}^+$, $c\in \mathbb{Q}$, $\tilde{\vartheta}(\tau)$ is the Eichler integral of a unary theta function $\vartheta(\tau)$ of weight $3/2$ and $p(\tau)$ is a polynomial in $q=\mathrm{e}^{2\pi i\tau}$. Then the $\zhat$ invariant of the manifold with reversed orientation is
    \begin{equation}
        \zhat_b(\overline{M_3};\mathrm{e}^{2\pi i\tau}) = \mathrm{e}^{-2\pi i\tau c} \left(f(\tau)+p(-\tau)\right),
    \end{equation}
where $f(\tau)$ is a weight $1/2$ weakly holomorphic (mixed) mock modular form. 
Moreover, 
its completion $\hat f$, transforming as a modular form of weight $1/2$ under a certain congruence subgroup of ${\rm SL}_2(\ZZ)$, has the following properties. It is given by 
\begin{equation}
\hat f = f  - \vartheta^\ast - \sum_{i\in I} g_i \vartheta_i^\ast
\end{equation}
where $I$ is a finite set, $\vartheta_i$ is a theta function, and $g_i$ with any $i\in I$ is a modular function for some discrete subgroup of ${\rm SL}_2(\ZZ)$ that either vanishes or has an exponential singularity at all cusps. 
\end{coj}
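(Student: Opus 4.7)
The plan is to outline a route to verification, noting that since the statement is posed as a conjecture, a complete proof will likely proceed case-by-case rather than uniformly. I would first restrict to the class of Seifert fibered homology spheres, where the hypothesis decomposition $\zhat_b(M_3;\ex^{2\pi i\tau}) = \ex^{2\pi i\tau c}(\tilde\vartheta(\tau) + p(\tau))$ is already established in \cite{cheng20193d} via the contour integral \eqref{eq:zhat-positive-plumbing} together with repeated applications of the Jacobi triple product. Beyond this class even the hypothesis is nontrivial; its extension to broader plumbings would require an inductive unfolding of the plumbing tree and should be deferred to a later stage of the argument.

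Next, I would treat the Eichler-integral part and the polynomial separately. Since $p(\tau)$ is a finite Laurent polynomial in $q$, its continuation under $\tau\mapsto -\tau$ is tautologically $p(-\tau)$; the substantive content is the orientation-reversed counterpart of $\tilde\vartheta$. Zagier's quantum modularity endows $\tilde\vartheta$ with a cocycle equal to the period integral of the weight $3/2$ unary theta function $\vartheta$, so the candidate $f(\tau)$ on the opposite half-plane is characterized by having matching $C^\infty$ asymptotic expansions at every rational. Existence of such a weight $1/2$ weakly holomorphic mock modular form with shadow $\vartheta$ can be extracted from the false-mock duality of Bringmann, Nazaroglu, and collaborators, which for unary theta shadows produces an essentially unique $f$ up to weakly holomorphic modular forms; matching normalizations to $\tilde\vartheta$ on the real line is then a finite-dimensional linear algebra problem.

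The structural claim about the completion, $\hat f = f - \vartheta^\ast - \sum_{i\in I} g_i\vartheta_i^\ast$, would then be handled by Zwegers' recipe: one subtracts the non-holomorphic Eichler integral $\vartheta^\ast$ of the shadow to enforce modular transformation, and introduces the mixed pieces $g_i\vartheta_i^\ast$ precisely when multiple theta components are coupled by the plumbing data. The cuspidal behavior of the $g_i$ (vanishing or exponential singularity at all cusps) would be verified from explicit $q$-hypergeometric representations of $\zhat$ when available; for low-rank Seifert cases these reduce to identities from Slater's list or of Rogers--Ramanujan type, as witnessed by the $\Sigma(2,3,5)$ example.

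The main obstacle is the absence of a uniform mechanism producing $f$ and the envelope functions $g_i$ from the plumbing data alone. As the number of singular fibers or the rank of $-M$ grows, the relevant mock Jacobi pieces are not classically catalogued, and the requisite $q$-hypergeometric identities are not known to follow from any general scheme. A truly general proof would likely require either a canonical TQFT-level definition of $\zhat$ on both sides of the unit circle which produces the dual $q$-series by construction, or a systematic theory of weight $1/2$ mixed mock modular forms whose shadows exhaust the unary theta functions arising from negative-definite plumbings.
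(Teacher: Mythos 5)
The statement you are addressing is Conjecture~\ref{con:False-Mockv2}, which the paper does not prove: it is imported from \cite{cheng2020three,Cheng:2024vou} as a working hypothesis and is only \emph{tested} (and, for the $\ceff$ consequences, partially falsified in combination with the surgery formula) in the body of the paper. So there is no proof in the paper to compare against, and your text, which you frame as a ``route to verification,'' is likewise not a proof. That framing is honest, but several intermediate claims are asserted with more confidence than they can bear, and it is worth naming where the argument would actually break.

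The central gap is the existence-and-uniqueness step for $f$. You claim that a weight $1/2$ weakly holomorphic mock modular form with shadow $\vartheta$ and the correct asymptotics at rationals is ``essentially unique up to weakly holomorphic modular forms,'' and that fixing it against $\tilde\vartheta$ is ``a finite-dimensional linear algebra problem.'' The space of weight $1/2$ weakly holomorphic modular forms on a congruence subgroup is infinite-dimensional, and the matching condition --- agreement of the full $C^\infty$ asymptotic expansions of $f$ and $\tilde\vartheta$ at \emph{every} rational --- imposes infinitely many constraints; nothing in your outline shows a solution exists, let alone that finitely many conditions pin it down. This is precisely why the known cases ($\Sigma(2,3,5)$, etc.) go through ad hoc $q$-hypergeometric identities that converge on both sides of $|q|=1$, and why the paper treats the positive-definite side as conjectural. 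Relatedly, your prescription for the mixed pieces ---  introduce $g_i\vartheta_i^\ast$ ``precisely when multiple theta components are coupled by the plumbing data'' and verify their cusp behavior ``from explicit $q$-hypergeometric representations when available'' --- is circular: the representations are exactly what is not available in general, as you concede in your final paragraph. In short, your proposal correctly identifies the obstruction but does not overcome it; the statement remains a conjecture, and no step of your outline upgrades it to a theorem even for the restricted Seifert class beyond the handful of examples already treated in \cite{cheng20193d,Cheng:2024vou}.
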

The false-mock conjecture predicts the modular properties of $\zhat$ invariants for positively-definite plumbed manifolds and provides one possible way to constrain the space of the invariants of positively plumbed Brieskorn spheres through modularity. 

Let $p_1, p_2, p_3 \in \mathbb{N}_+$ be pairwise coprime integers, and denote by $\overline{\Sigma(p_1,p_2,p_3)}$ the Brieskorn homology sphere represented by a positive definite plumbing.\footnote{In this paper we use Brieskorn spheres, Brieskorn homology spheres and integer homology spheres interchangeably.}
In this paper we consider Brieskorn spheres of the class $\overline{\Sigma(s,t,rst\pm 1)}$.
These are obtained by performing $+1/r$-surgery on torus knot complements $S^{3}\setminus T_{s,\mp t}$, where $T_{s,\mp t}$ denotes a torus knot with $s$ twists around the meridian and $t$ twists around the longitude ($-t$ indicates negative twists).
With the exception of Section \ref{sec:ceff on the negative side}, all the manifolds we consider are integer homology spheres, which admit a unique abelian flat connection and therefore a single $\widehat{Z}$ invariant $\widehat{Z}_0(q)$.
In what follows, to simplify the notation we will omit the subscript, simply writing $\widehat{Z}$ in place of $\zhat_{b}(q)$. 

For $\overline{\Sigma(s,t,rst\pm 1)}$, the $\zhat$ invariants admit a conjectural expression, obtained from their surgery description via the $F_K$ invariant of 3-manifolds with boundary \cite{gukov2021two}.
This expression is known as the regularized $+1/r$-surgery formula, suggested by Park \cite[Conjecture 5]{park2021inverted}, whose details we recall in Section \ref{sec:preliminaries}. 
The regularized $+1/r$ formula is also known to have a close connection to mock modularity and regularized indefinite theta series.
In particular, Cheng, Coman, Kucharski, Sgroi and Passaro \cite{Cheng:2024vou} leveraged the False–Mock Conjecture \cite{cheng2020three} to recast the expression of the regularized $+1/r$-surgery formula in a linear combination of indefinite theta series, well known mock modular forms \cite{Zwegers2008MockTF}.
\\

Parallel developments on the $\widehat{Z}$ focused on resurgence and Borel resummation.
An important conjecture about $\widehat{Z}$ for any 3-manifold is the trans-series expansion conjecture \cite{Gukov:2016njj}. 
It states that $\widehat{Z}$ can be obtained by Borel resummation of finitely many perturbative contributions, each associated with a saddle of the Chern–Simons path integral
\begin{equation}\label{eq:zhat trans-series expansion}
\widehat{Z}(M_3;\mathrm{e}^{-\hbar}) \stackrel{\mathrm{resum.}}{=}\;
\sum_{\alpha\in\mathcal{M}_{\rm flat}} n_{\alpha}\, \mathrm{e}^{-\frac{2\pi i}{\hbar}\, \widetilde{\mathrm{CS}}_{\alpha}(M_3)}\, Z^{\alpha}_{\rm pert}(\hbar)\,,
\end{equation}
where $\hbar:=-\frac{2\pi i}{k}$ and $Z^{\alpha}_{\rm pert}(\hbar)$ denotes the perturbative expansion of the Chern–Simons path integral around the non-abelian flat connection $\alpha$ with lifted Chern-Simons invariant $\widetilde{\mathrm{CS}}_{\alpha}(M_3)\in\mathbb{C}$, for some choice of a lift of $\alpha$ to the universal cover of the space of all flat connections $\mathcal{M}_{\mathrm{flat}}$ (see \cite{Costin:2023kla} for details).

The moduli space of flat $\mathrm{SL}(2,\mathbb{C})$ connections on Seifert homology spheres (of which Brieskorn spheres is a special case) is well-understood \cite{fintushel1990instanton, kirk1991representation,boden2006sl,munoz2025counting,andersen2020resurgenceanalysisquantuminvariants}.
Denote $[x] = x+ \mathbb{Z}$ for $x\in\mathbb{Q}$. 
Then the set of Chern-Simons invariants
is given by
\begin{equation}\label{eq:CS invariants for Brieskorn spheres}
    \mathrm{CS}(\overline{\Sigma(p_1,p_2,p_3)}) = \{[0]\}\cup \left\{\left[\frac{m^2}{4p_1p_2p_3}\right]:\ \text{$m$ not divisible by any of $p_1,p_2,p_3$}\right\}~.
\end{equation}

The main characteristic of $\widehat{Z}$ invariant we focus on is the growth rate of its coefficients, which we encode in a quantity denoted by $\ceff$.
Physically, $\ceff$ corresponds to the notion of effective central charge for 2d CFTs given by the Cardy formula \cite{cardy1986operator}, and was recently generalised to 3d $\mathcal{N}=2$ theories by Gukov and Jagadale in \cite{Gukov:2023cog}. 
It can also be interpreted as a measure of entanglement \cite{karch2024universal}. 

In our case, we consider $\ceff$ of 3d $\mathcal{N}=2$ theory $T[M_3]$ associated to a 3-manifold.
The $\ceff$ encodes the asymptotic growth of the coefficients of $\widehat{Z}(M_3;q) = \sum_{n} a_nq^n$:

\begin{equation}\label{eq:general c_eff}
    a_n \sim \Re \left[ \exp\left( \sqrt{
    \frac{2\pi^2}{3}c_{\rm eff}n
    }+2\pi i \omega n \right) \right]
\end{equation}
for some $\omega\in \mathbb{Q}$.
While the general meaning of $\ceff$ for a 3-manifold and its connection to known topological invariants largely remains mysterious, it is expected that $\ceff$ should be associated with a distinguished non-abelian complex flat connection on $M_3$, when $M_3$ is an integer homology sphere \cite{Gukov:2023cog}. 
From the resurgent structure, the trans-series conjecture of $\widehat{Z}$ invariant \cite{Gukov:2016njj} and expression \eqref{eq:CS invariants for Brieskorn spheres}, it is expected that
\begin{equation}\label{coj:z-hat at leading order}
    \ceff[\zhat(\brs{s}{t}{rst\pm 1};\tau)] \stackrel{\mathrm{}}{=} 24\left(\frac{m^2}{4st(rst\pm 1)}-l\right)\,,
\end{equation}
for some integers $l,m$, such that $m$ is not divisible by any of $s$, $t$ and $rst\pm1$.

This expectation has been confirmed for $\overline{\Sigma(2,3,5)}$ and $\overline{\Sigma(2,3,7)}$ in \cite{Gukov:2023cog} using the modular properties of $\widehat{Z}$, assuming the regularized $+1/r$ surgery conjecture. We aim to investigate this for a larger class of Brieskorn spheres.

\subsection*{A brief summary of the results}

The goal of this paper is to clarify the above conjectures by studying the compatibility of the regularized $+1/r$-surgery formula for $\widehat{Z}$ invariants with the trans-series expansion \eqref{eq:zhat trans-series expansion} and the expression for $\ceff$ \eqref{coj:z-hat at leading order}, as well as $q$-series arising from resurgence. We aim to investigate in detail instances of Brieskorn spheres and to verify whether the $c_{\text{eff}}$ computed using the $+1/r$ surgery-indefinite theta series proposal can be expressed in terms of some Chern-Simons invariant of a flat connection. Our main result is
\begin{thm}\label{thm:conjecture fails}
    The conjectural regularized $+ \frac{1}{r}$-surgery formula for $\widehat{Z}$ and the expected form of $\ceff$ \eqref{coj:z-hat at leading order} are generically incompatible with each other.
\end{thm}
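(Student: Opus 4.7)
The plan is to prove Theorem \ref{thm:conjecture fails} by exhibiting an explicit counterexample: a Brieskorn sphere $\overline{\Sigma(s,t,rst\pm 1)}$ for which the value of $c_{\rm eff}$ computed from the regularized $+1/r$-surgery formula provably fails to lie in the discrete set
$$
\mathcal{C}_{s,t,rst\pm 1}:=\left\{24\!\left(\tfrac{m^2}{4st(rst\pm 1)}-l\right):\ l\in\mathbb{Z},\ m\in\mathbb{Z},\ \gcd(m,s)=\gcd(m,t)=\gcd(m,rst\pm 1)=1\right\}
$$
predicted by \eqref{coj:z-hat at leading order}. Since this set is discrete with explicit gaps, it suffices to pin the value of $c_{\rm eff}$ tightly enough (via matching upper and lower bounds) to fall inside a gap.

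First, I would use the regularized $+1/r$-surgery formula, as recast in \cite{Cheng:2024vou}, to write $\widehat{Z}(\overline{\Sigma(s,t,rst\pm 1)};q)$ as a sum of indefinite theta-type objects whose regularization is driven by a Ramanujan theta function. The growth of the coefficients of the regularizing Ramanujan theta function is controlled by the standard Hardy--Ramanujan saddle point analysis, yielding a computable upper bound $c_{\rm eff}^{\rm up}(s,t,r)$ that depends only on the shape of the regularization kernel and the parameters $s,t,r$. This gives the first half of the sandwich.

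Second, I would use the mixed mock-modular description implied by the False-Mock Conjecture \ref{con:False-Mockv2}: expressing the surgery output as $f(\tau)+p(-\tau)$ with $f$ a weight-$1/2$ weakly holomorphic mixed mock modular form whose completion $\hat f$ is known. Applying Rademacher-type circle-method or Wright-type expansions to the non-holomorphic completion, together with an analysis of the principal parts of $f$ at the cusps of the relevant congruence subgroup, yields a matching \emph{lower} bound and, in favourable cases, the \emph{exact} exponential growth rate of the Fourier coefficients. Combining the two bounds fixes $c_{\rm eff}$ either exactly or to within a window strictly smaller than the minimal gap of $\mathcal{C}_{s,t,rst\pm 1}$.

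Third, I would execute this program for a concrete small instance of $\overline{\Sigma(s,t,rst\pm 1)}$ (a natural candidate is one of the $r\geq 2$ members of the family, where new Chern--Simons values do not appear but the regularization kernel pushes $c_{\rm eff}^{\rm up}$ away from the expected $m^2/(4st(rst\pm 1))$ values), supplemented by high-precision numerical checks on the first several hundred Fourier coefficients to confirm the analytic bounds. The main obstacle I anticipate is controlling the cancellations between $f$ and $p(-\tau)$: generically the polar parts of $f$ at the relevant cusp dominate, but for isolated parameter choices the dominant cusp contribution could cancel and secretly restore agreement with \eqref{coj:z-hat at leading order}. To rule this out, I would need to show that for the chosen example the leading Kloosterman sum at the dominant cusp is non-zero and that no mock correction $g_i\vartheta_i^\ast$ from Conjecture \ref{con:False-Mockv2} cancels it — this is where the mixed mock-modular bookkeeping (degrees of $\vartheta_i$, multiplier systems, and cuspidal behaviour of the $g_i$) does the real work. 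Once the cancellation is excluded, the upper and lower bounds pinch $c_{\rm eff}$ to a value outside $\mathcal{C}_{s,t,rst\pm 1}$, contradicting \eqref{coj:z-hat at leading order} and thereby proving the theorem.
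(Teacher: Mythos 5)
Your proposal follows essentially the same strategy as the paper: an upper bound on $\ceff$ extracted from the Ramanujan theta function regularizing the surgery formula (Propositions \ref{prp:upper_bound} and \ref{prop:general upper bounds}), a matching lower bound or exact value from the mixed mock-modular structure of the indefinite-theta rewriting of \cite{Cheng:2024vou} (Section \ref{sec:Mixed mock modularity and resurgence}) together with numerical log-concavity estimates, and then an explicit Brieskorn sphere pinched to a value of $\ceff$ outside the discrete set predicted by \eqref{coj:z-hat at leading order}. The only substantive differences are that the paper's first counterexamples already occur at $r=1$ (e.g.\ $\overline{\Sigma(3,4,13)}$ with exact $m=3\sqrt{13/5}$, and $\overline{\Sigma(3,4,11)}$ where no integer $m$ fits between the bounds), and the exact growth rate is obtained by directly computing the asymptotics of the Zwegers-style modular completion under the $S$-transform rather than via a Rademacher/Kloosterman expansion conditional on the False--Mock Conjecture.
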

The first such inconsistency appears for $\overline{\Sigma(3,4,13)}$, for which we find the exact values $l=0$ and $m=3\sqrt{\frac{13}{5}}$, which do not correspond to any Chern-Simons invariant of a flat connection.

Another noteworthy outcome emerged alongside a parallel study of 
$\ceff$ by Adams, Costin, Dunne, Gukov, and Öner \cite{ACDGO}, which aimed to compute $\ceff$ using the resurgent analysis framework proposed in \cite{Costin:2023kla}.
A direct comparison between their results and ours shows that the two approaches generally produce different values of $\ceff$.

Our take on $\widehat{Z}$ invariants and $\ceff$ can be summarized via the triangle in Figure \ref{fig:three-part comparison}.

\begin{figure}
\[\begin{tikzcd}[column sep=0]
	& {\text{\textsc{Modularity}}} \\
	{\text{\textsc{Resurgence}}} && {\text{\textsc{Numerics}}}
	\arrow[tail reversed, from=1-2, to=2-1]
	\arrow[tail reversed, from=1-2, to=2-3]
	\arrow[tail reversed, from=2-3, to=2-1]
\end{tikzcd}\]
\caption{The study of $\ceff$ via numerical analysis, modularity and resurgence will shed light on the existing conjectures about $\widehat{Z}$ on the other side. For us, Brieskorn spheres $\overline{\Sigma(s,t,rst\pm 1)}$ are the most suitable class of examples, since they can be studied from various angles independently.}
\label{fig:three-part comparison}
\end{figure}
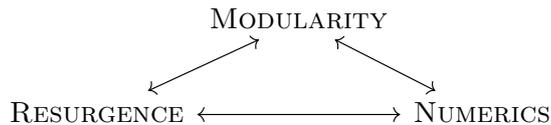

In order to study the other side of Brieskorn spheres, one can use three independent approaches: mock modularity, resurgence, and numerical analysis. 
When combined, all three provide a very powerful set of tools at our disposal, which will shed light on the general structural properties of $\widehat{Z}$ invariants.
The two cases $\overline{\Sigma(s,t,rst- 1)}$ and $\overline{\Sigma(s,t,rst+ 1)}$ turn out to be quite different. 
The $\zhat$ invariants of $\overline{\Sigma(s,t,str+1)}$ obtained by the regularized $+1/r$ formula have expressions as mock modular forms \cite{Cheng:2024vou}.
In this case $\ceff$ is computed by using mock modularity properties of $\widehat{Z}$.
On the other hand, mock modular expressions for the same invariants of $\overline{\Sigma(s,t,str-1)}$ are not known.
In these cases, we rely on estimates for the bounds of $\ceff$ and we can also compare these bounds with $\ceff$ found from resurgence \cite{ACDGO}.
  This approach turns out to be sufficient to provide counterexamples to \eqref{coj:z-hat at leading order}.

The paper is organized as follows. After recalling preliminary notions and necessary conjectures in Section \ref{sec:preliminaries}, in Section \ref{sec:Bounds on ceff} we establish an upper bound of $\ceff$ for our class of Brieskorn spheres. 
We find that this upper bound comes entirely from the regularization factor in the conjectural $+1/r$-surgery formula and thus depends only on the surgery slope, but not on the choice of a knot.
In Section \ref{sec:Numerical analysis of ceff} we provide the numerical analysis of $\widehat{Z}$ coming from the regularized $+1/r$ formula. First, we analyze the structure of singularities of $\widehat{Z}$ on the unit circle and conjecture that for $+1$-surgeries they are situated at $q=1$. Second, we describe a numerical method of finding the lower bounds for $\ceff$, and combine this with the results from Section \ref{sec:Bounds on ceff}.
In Section \ref{sec:Mixed mock modularity and resurgence} we proceed with the mock modular analysis of $\widehat{Z}$ for positive definite manifolds. We define the modular completion of $\widehat{Z}$ invariants by combining the modular completions of the respective summands. We then provide a comprehensive asymptotic analysis of every part of a modular completion of $\widehat{Z}$ in order to extract information about $\ceff$ and to compare with the trans-series expansion conjecture.
In Section \ref{sec:ceff on the negative side} we provide proofs for $\ceff$ on the negative definite side, which nicely complements the above results and makes the overall picture more coherent.
Finally, in Section \ref{sec:Case studies} we investigate the concrete examples of Brieskorn spheres and apply all three techniques in order to compute and compare $\ceff$.

\section{Preliminaries}\label{sec:preliminaries}

We begin by discussing some preliminaries. 
First, we introduce the general notion of $\ceff$ for a $q$-series, which measures the growth rate of its coefficients. 
Second, we review the notions of mock modularity and the properties of indefinite theta series which we will leverage. Third, we recall the conjectural expression for the $\widehat{Z}$ invariant obtained from the regularized surgery formula and review its modular properties in the case of reverse-oriented Brieskorn spheres.
In the following sections, these will enable us to produce the exact results on $\ceff$ and to address the main points of this paper.

\subsection{\texorpdfstring{$\ceff$}{ceff} for \texorpdfstring{$q$}{q}-series and the basic example}

Let $P(q)\in \ZZ[[q]]\setminus\{0\}$ be a formal non-zero $q$-series with integer coefficients. The following is a slightly revised definition from \cite{Gukov:2023cog}:
\begin{defn}[$\ceff$ for $q$-series]\label{ceff-func}
  Let 
  \begin{equation}
    P(q) = \sum_{n\geq 0} a_n q^{n}\,, \quad a_n\in \ZZ\,,
  \end{equation}
  such that $\lim_{n\to \infty}a_n \neq 0$.
  We define 
  \begin{equation}\label{eq:ceff-functor}
    \ceff(P(q)) \coloneq \frac{3}{2\pi^{2}}\limsup_{n\to\infty}\frac{(\log|a_n|)^{2}}{n}.
  \end{equation}
\end{defn}
In other words, $\ceff$ measures the growth rate of the coefficients of a $q$-series. According to Conjecture 1.2 in \cite{Gukov:2023cog}, if $P(q)$ is a supersymmetric index of a three-dimensional $\mathcal{N} = 2$ theory, $\ceff(P(q))$ measures the growth of degeneracies of supersymmetric (BPS) states, and is therefore analogous to the notion of effective central charge in 2d CFTs.

A class of $q$-series we consider in this paper comes from complex-valued functions $f(q)$ defined on the unit disk. In order to formulate the precise correspondence between $\ceff$ and the asymptotic behavior of $f(q)$, we first need to understand its singularities.

\begin{defn}\label{dfn:dominant singularity}
Let $f(q)$ be a complex function defined in $|q|\leq 1$ such that $f(q)$ has essential singularities on the unit circle and is analytic inside the unit disc.
We say $q= \mathrm{e}^{2 \pi i \omega}$ is a dominant singularity of $f(q)$ if for all $\phi \in \mathbb{Q}/\mathbb{Z}$,
\begin{equation}
    \limsup_{ \hbar \rightarrow 0^{+} } \left[ \hbar \log|f(\mathrm{e}^{- \hbar + 2 \pi i \omega })| -  \hbar \log|f(\mathrm{e}^{- \hbar + 2 \pi i \phi })| \right] \geq 0\, .
\end{equation}
\end{defn}

Using this notion, we can relate $\ceff$ to the growth rate of the coefficients of $f(q)\stackrel{q\to 0}{=}\sum a_n q^n$. Namely, assume that $f(q)$ has a dominant singularity at $q=\mathrm{e}^{-2\pi i\omega}$ for some $\omega\in\mathbb{Q}\backslash \mathbb{Z}$. Then, as was argued in \cite{Gukov:2023cog} using the saddle point analysis,

\begin{equation}
    a_n \sim \Re \left[ \exp\left( \sqrt{
    \frac{2\pi^2}{3}c_{\rm eff}n
    }+2\pi i \omega n \right) \right].
\end{equation}

\begin{rem}
    Note that in Definition \ref{ceff-func} we use a simplifying assumption that $\ceff\in\mathbb{R}$,
    which is less general than the one from \cite{Gukov:2023cog}. However, this assumption holds for $q$-hypergeometric functions whose essential singularities are at roots of unity, in particular, for the regularized $+1/r$-surgery conjecture for
    $\widehat{Z}(\overline{\Sigma(s, t, st\pm 1)};q)$. One can shift the dominant singularity to $q=1$ by a suitable normalization.
\end{rem}

\begin{exmp}\label{exmp:q-pochhammer ceff}
    Consider a prototype example which will be relevant for us in the following. Define the infinite $q$-Pochhammer symbol as
    \begin{equation}
        (x;q)_\infty = \prod_{k=0}^{\infty}(1-xq^k)\,.
    \end{equation}
    Denote $(q)_{\infty} := (q;q)_{\infty}$, and take the function $f(q)=(q)_{\infty}^{-1}$,\footnote{From the CFT/VOA perspective, this function gives the character of a free boson VOA. It is also related to Dedekind $\eta$ function by a factor: $\eta(q)=q^{\frac{1}{24}}(q)_\infty$.} whose series expansion around the origin is given by
    \begin{equation}
        \frac{1}{(q)_{\infty}} \stackrel{q\to 0}{=} 1 + q + 2 q^2 + 3 q^3 + 5 q^4 + 7 q^5 + \dots
    \end{equation}
    Note that $f(q)$ has essential singularities at all roots of unity and is analytic inside the unit disc $|q|<1$.
    By analyzing the radial limits of $\log|f(q)|$ as $q$ approaches various roots of unity and applying Definition \ref{dfn:dominant singularity}, we deduce that the location of the most dominant singularity is at $q=1$.

    Asymptotic analysis of its coefficients can be carried out easily. Let $q=\mathrm{e}^{-\hbar}$ and consider $\hbar \to 0^{+}$. We can write
    \begin{equation}
        \frac{1}{(x;q)_{\infty}} = 
        \exp \left( \frac{1}{\hbar}\sum_{k=0}^{\infty}\frac{(-1)^kB_k\hbar^k}{k!}\mathrm{Li}_{2-k}(x) \right),\quad |x|<1\,,
    \end{equation}
    where $B_k$ is the $k$-th Bernoulli number and $\operatorname{Li}_2(z)$ is the dilogarithm function
    \begin{equation}
      \operatorname{Li}_{s}(z) = \sum_{k=1}^{\infty}\frac{z^{k}}{k^{s}}~.
    \end{equation}
    
    By substituting $x=\mathrm{e}^{-\hbar}$, we get
    \begin{equation}
        \frac{1}{(q)_{\infty}} \sim \exp \left( \frac{1}{\hbar} [\mathrm{Li}_2(1)+O(\hbar)] \right)\,.
    \end{equation}
    Noting that $\mathrm{Li}_2(1)=\frac{\pi^2}{6}$ and using the relationship between the asymptotics of a $q$-series and the growth rate of its coefficients,
    \begin{equation}
        f(q) \sim \exp \left(\frac{C^2}{4h}+O(h) \right) \quad \iff \quad a_n\sim \exp \left( C\sqrt{n} \right)\,,
    \end{equation}
    we finally obtain
    
    \begin{equation}
    a_n \sim \exp\left( \sqrt{
    \frac{2\pi^2}{3}n
    }\right)\,,
    \end{equation}
    
    i.e. $\ceff\left((q)_{\infty}^{-1}\right)=1$.
    We can also plot $\log|a_n|$ against $n$ and see how fast the asymptotic value is approached, Figure \ref{fig:Denominator}.
    \begin{figure}
        \centering        \includegraphics[width=0.75\linewidth]{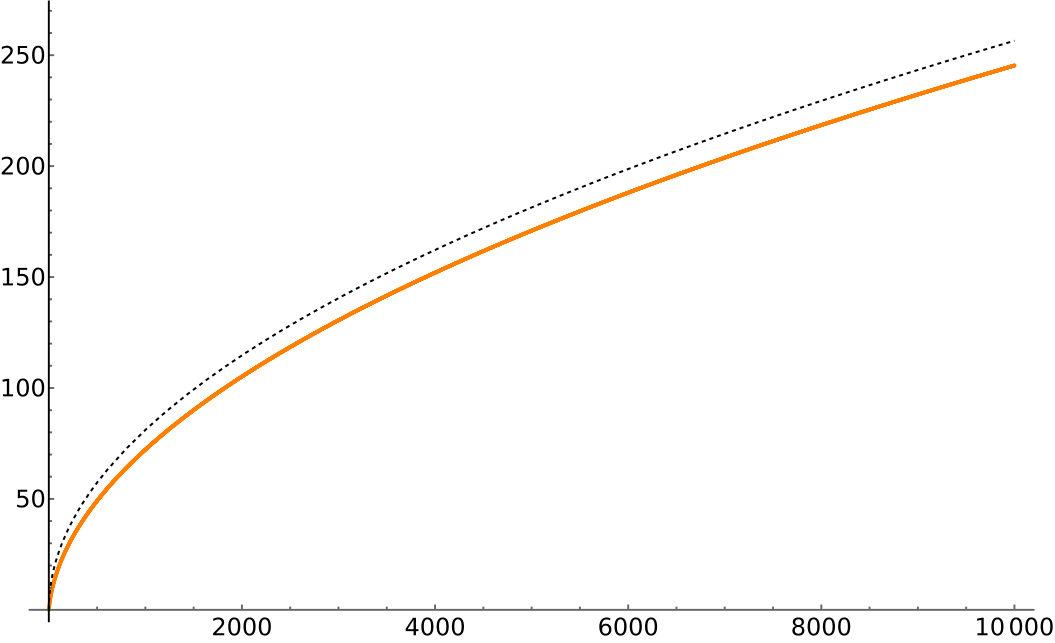}
        \caption{The growth $\log |a_n|$ shown by the \textcolor{orange}{orange} line, and $\log$ of the asymptotic, $\sqrt{
    \frac{2\pi^2}{3}n
    }$ shown by dashed line, for the first $10^4$ coefficients of the function $(q)_{\infty}^{-1}$.}
        \label{fig:Denominator}
    \end{figure}
\end{exmp}

\subsection{Mock modularity}

Modular forms have featured predominantly in Math and Physics in the past century, appearing in many, seemingly disconnected areas. 
In mathematics, for example, modular forms play a crucial role in the study of elliptic curves. 
In this context, the Taniyama-Shimura-Weil conjecture (now a theorem) states that every rational elliptic curve is associated with a modular form \cite{knapp2000forms}. This theorem was instrumental in the proof of Fermat's last theorem by Andrew Wiles.
Modular forms play a key role also in representation theory. 
The famous Monstrous Moonshine conjecture states that the coefficient of the j-invariant, a modular function, encodes information about the representations of the Monster group, the largest of the sporadic simple groups; this connection was proven by Richard Borcherds using Vertex Operator Algebras \cite{Borcherds:1992jjg}.
In theoretical physics modular forms appear in the computation of black hole entropy, which in some contexts is given by the Cardy formula, involving modular forms like the Dedekind $\eta$ function and Jacobi forms \cite{deBoer:2006vg,Korpas:2017qdo,Goldstein:2020yvj,Chattopadhyaya:2021rdi,karch2024universal}. 

It is well known that modular forms are complex functions of the upper half complex plane $\mathbb{H}$ which benefit from a certain symmetry by the action of congruence subgroups of $\mathrm{SL}_{2}\left( \mathbb{Z} \right)$. 
This symmetry reflects the discrete symmetries of the upper half complex plane and it is often referred to as modular symmetry. 
Given the wide application of modular forms, it is interesting to investigate occurrences where the modular symmetry of modular forms is broken in meaningful ways.

One class of functions where the modular symmetry is ``minimally'' broken is that of mock modular forms.
The first examples of mock modular forms were discovered by Ramanujan and are the primary focus of his last notebook.
Mock modular forms have a fascinating history in that examples of mock modular forms were known before a proper definition was established.
To this day, Ramanujan's classification of mock modular forms into mock theta function orders remains largely mysterious.

A concrete definition of mock modular forms was only given in 2002 by Zwegers in his PhD thesis \cite{Zwegers2008MockTF}. His groundbreaking work showed how mock theta functions fit into the broader context of modular forms. A key aspect of mock modular forms is the trade-off between a modular anomaly and a holomorphic anomaly.
Mock modular forms are holomorphic functions of the upper half plane which are not modular, but which admit a modular ``completion'' -- the addition of a non-holomorphic Eichler integral of a modular form.
In working with mock modular forms, then one needs to balance the modular anomaly of the mock modular form with the holomorphic anomaly of the completion. More specifically, we will take mock modular forms to be the following.

\begin{defn}\label{defn:mock modular form}
  Let $k\in \frac{1}{2}\mathbb{Z}$. A mock modular form $h(\tau)$ of weight $k$ is the first member of a pair of functions $\left( h,g \right)$ where
  \begin{enumerate}
    \item $h$ is a holomorphic function of the upper complex half plane with at most exponential growth at all cusps.
    \item The function $g\left( \tau \right)$, called the \emph{shadow} of $h$, is a holomorphic modular form of weight $2-k$.
    \item The sum $\hat{h}=h+\tilde{g}^{*}$, called the completion of $h$, is a non-holomorphic modular form of weight $k$, where $\tilde{g}^{*}$ is the non-holomorphic Eichler integral of the shadow.
  \end{enumerate}
\end{defn}
Here we take the non-holomorphic Eichler integral $\tilde{g}^{*}\left( \tau \right)$ of a modular form $g\left( \tau  \right)$ of weight $2-k$ to be the solution of
\begin{equation}
  \left( 4\pi\,\mathrm{Im}(\tau) \right)^{k}\frac{\partial \tilde{g}^{*}}{\partial\bar{\tau}}= - 2\pi i \overline{g\left( \tau \right)}~.
\end{equation}
If the modular form $g\left( \tau \right)$ has  expansion $g\left( \tau \right)= \sum_{n>0}b_{n}q^{n}$ then we fix the solution of the non-holomorphic Eichler integral to be
\begin{equation}
  \tilde{g}^{*}\left( \tau \right)= \bar{b}_{0}\frac{\left( 4\pi\,\mathrm{Im}(\tau) \right)^{1-k}}{k-1}+\sum_{n>0}n^{k-1} \bar{b}_{n}\Gamma\left( 1-k, 4\pi n\tau_{2} \right)q^{-n}~.
\end{equation}
Here $b$'s are the  coefficients of $g(\tau)$, implicitly defined in the line above.

A common choice is to restrict $g\left( \tau \right)$ to the space of cusp forms, in which case the Eichler integral can also be defined as
\begin{equation}
  \tilde{g}^{*}\left( \tau \right) = \left( \frac{i}{2\pi} \right)^{k-1}\int_{-\bar{\tau}}^{\infty}\left( z + \tau \right)^{-k}\overline{g\left( -\overline{z} \right)}dz~.
\end{equation}

Notably, mock modular forms are not closed under multiplication by a modular form. That is, if $\left(f,g\right)$  is a pair $(\text{mock modular form}, \text{shadow})$ and $\eta$ is a modular form, then, to construct the completion of $\eta(\tau)f(\tau)$ one would need to multiply the shadow contribution by $\eta(\tau)$ to get something modular. 
On the other hand, mixed mock modular forms generalize mock modular forms in such a way as to be closed under multiplication by modular forms of a given weight $\ell$.
\begin{defn}\label{defn:mixed mock modular form}
  Let $k,\,\ell\in\tfrac12\ZZ$. A mixed mock modular form $h(\tau)$ of weight $k|\ell$ is the first member of the triplet $\left( h,\left\{ f_{0},\dots,f_{N} \right\},\left\{g_{0},\dots,g_{N}\right\}  \right)$ where
  \begin{enumerate}
    \item $h$ is a holomorphic function of the upper half plane with at most exponential growth at all cusps.
  \item  The functions $g_{i}\left( \tau \right)$ are holomorphic modular forms of weight $2-k+\ell$.
  \item  The functions $f_{i}\left( \tau \right)$ are holomorphic modular forms of weight $\ell$.
  \item The sum $\hat{h}\coloneq h+\sum_{j}f_{j}\tilde{g}^{*}_{j}$, called the completion of $h$, is a holomorphic modular form of weight $k$.
  \end{enumerate}
\end{defn}
To avoid confusion we shall refer to classical mock modular forms as being ``pure'' mock modular forms.

\subsubsection{Indefinite theta series}\label{indefinietthetaseriessubsec}
Indefinite theta series are examples of mock modular forms. 
In the following, we will leverage proposals for $\zhat$ invariants for positive definite plumbed manifolds which hinge on indefinite theta series.
To that end, we will briefly summarize their theory.

Theta series on negative definite lattices constitute a large class of examples of holomorphic modular forms.
When the signature of the lattice is indefinite the sum over the whole lattice fails to converge and the sum must be regularized.
A useful regularization was found by G{\"o}ttsche and Zagier \cite{gottsche1996jacobi} whereby the sum reduced to a holomorphic theta series by restricting the summation to cones of signature $\left(1,n-1\right)$.
In his thesis, Zwegers \cite{Zwegers2008MockTF} was able to determine the modular properties of these objects by constructing a modular completion and demonstrating these as examples of mock modular forms.
A more generic case of lattices of signature $(2,n-2)$ was studied by Alexandrov et al. \cite{Alexandrov_2018}.

Let $A$ be a $2 \times 2$ symmetric, non-degenerate matrix with integer entries and signature $(1,1)$. This matrix gives us a quadratic form $Q$ and a bilinear form $B$, defined as follows,
\begin{align}
    Q: \ZZ^{2} &\rightarrow \ZZ & B: \ZZ^{2} \times \ZZ^{2} &\rightarrow \ZZ \endline
     n & \mapsto n^{T} A n  & (x,y) &\mapsto x^{T}A y.
\end{align}
The quadratic form $Q$ of signature $(1,1)$ is extended from $L=\ZZ^{2}$ to $L\otimes_\ZZ \mathbb{R}$ by
\begin{equation}
  \norm{x}^{2} \coloneq x^{T} A x, \quad x \in L\otimes_\ZZ \mathbb{R}~.
\end{equation}
$Q$ splits the two dimensional plane into two pairs of conical regions distinguished by the sign of $Q(c)$ for $c\in\ZZ^{2}$ and separated by rays of zero norm (see Figure \ref{fig:indefinite_signature} for an illustration).

\begin{figure}
    \centering\includegraphics[width=0.35\linewidth]{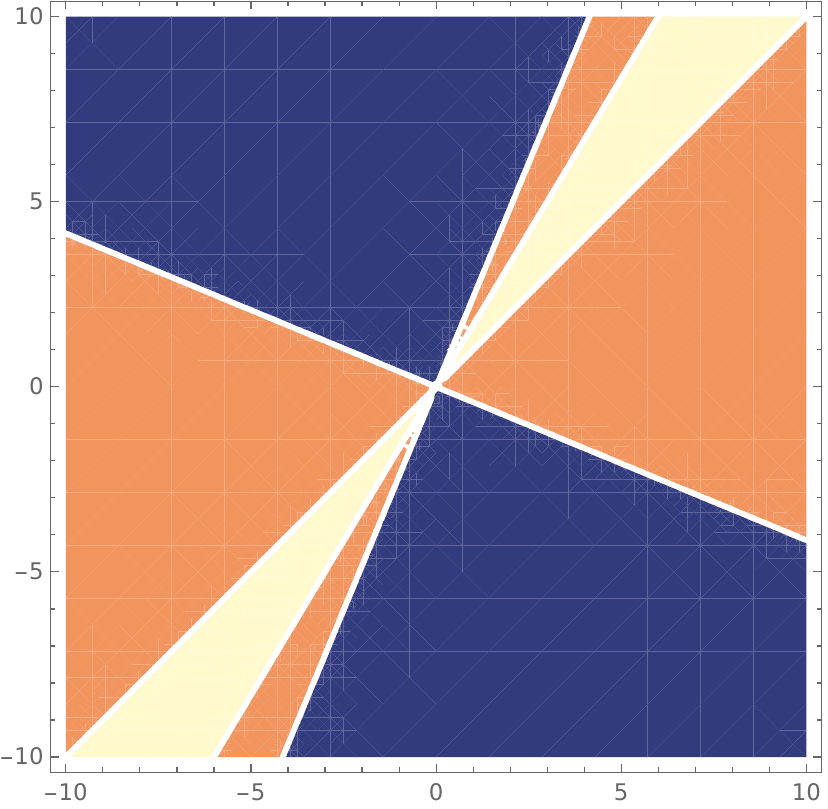}
    \caption{Splitting of the plane into conical regions determined by the indefinite quadratic form defined by $A=\begin{pmatrix}
        1 & 1 \\
        1 & -1
    \end{pmatrix}$ and the regularization factor $\rho^{\mathsf{c}_1, \mathsf{c}_2}$, with $\mathsf{c}_1=\begin{pmatrix} 0 \\ 1\end{pmatrix}$ and $\mathsf{c}_2=\begin{pmatrix} 1 \\ 4\end{pmatrix}$. The two connected components with $Q(c)<0$ and $Q(c)>0$ are shown in blue and orange, respectively. The yellow shaded region is the region where $\rho^{\mathsf{c}_1, \mathsf{c}_2}(n)\neq 0$.}
    \label{fig:indefinite_signature}
\end{figure}

Given a vector $c_{0}$ with $Q(c_{0})<0$, we can specify the connected component containing $c_{0}$ as

\begin{equation}\label{eq:indef-theta-cone}
  C_{B}(c_0)= \left\{c \in \mathbb{R}^{2}: Q(c)<0,\ B\left(c,c_0\right)<0\right\}.
\end{equation}

Note that the resulting set of vectors depends only on the region where $c_0$ lies in, but not on its particular choice. For example, we get the upper (lower) blue cone in Figure \ref{fig:indefinite_signature} when choosing $c_0$ to be $(-1,1)$ ($(1,-1)$).

\begin{defn}
For any $a, b \in \mathbb{R}^{2}$, $c_{1}, c_{2} \in C_{B}$ \footnote{In \cite{Zwegers2008MockTF}, Zwegers defines the regularized indefinite theta function for slightly more general parameters $a,b,c_{1},c_{2}$. However, all the cases relevant to the present paper have parameters from the set specified above. }, one can define regularized indefinite theta function 
\begin{equation}
    \Theta_{A,a,b,c_{1}, c_{2}}(\tau) \coloneq \sum_{n \in a + \ZZ^{2}} \rho^{c_{1},c_{2}}(n) \mathrm{e}^{2 \pi i B(n,b)} q^{\frac{\norm{n}^{2}}{2}},
\end{equation}
where the regularization factor $\rho^{c_{1},c_{2}}(n)$ is defined as, 
\begin{align}
    \rho^{c_{1},c_{2}}(n) \coloneq \mathrm{sgn}(B(c_{1},n)) - \mathrm{sgn}(B(c_{2},n)).
\end{align}
\end{defn}
The $q$-series $\Theta_{A,a,b,c_{1}, c_{2}}(\tau)$ is a mixed mock modular form and it is convergent for $|q|<1$. $\Theta_{A,a,b,c_{1}, c_{2}}(\tau)$ is the holomorphic part of $\widehat{\Theta}_{A,a,b,c_{1}, c_{2}}(\tau, \bar{\tau})$ defined below. 

\begin{exmp}
Choosing $A=\begin{pmatrix}1 & 1 \\ 1 & -1\end{pmatrix}$, and $\mathsf{c}_1 = \begin{pmatrix}
    0 \\ 1
\end{pmatrix}$ and $\mathsf{c}_2= \begin{pmatrix}
    1 \\ 4
\end{pmatrix}$, the summands in the indefinite theta series do not vanish when $n+a$ for $n\in \mathbb{Z}^2$ and $a \in \mathbb{R}^2$ are in the yellow shaded region in figure \ref{fig:indefinite_signature}.  With $a=\begin{pmatrix}\tfrac{1}{2} \\ \tfrac{1}{3}\end{pmatrix}$ and $b = \begin{pmatrix}0\\0\end{pmatrix}$, the indefinite theta series can be expanded to
\[
\widehat{\Theta}_{A,a,b,c_{1}, c_{2}}(\tau, \bar{\tau}) = -2 q+2 q^2-2 q^7+2 q^8-2 q^{14}+2 q^{18}+O\left(q^{21}\right)~.
\]
\end{exmp}

\begin{defn}
We call $\widehat{\Theta}_{A,a,b,c_{1}, c_{2}}(\tau, \bar{\tau})$ the modular completion of $\Theta_{A,a,b,c_{1}, c_{2}}(\tau)$: 
\begin{equation}
  \widehat{\Theta}_{A,a,b,c_{1}, c_{2}}(\tau, \bar{\tau}) =   \sum_{n \in a + \ZZ^{2}} \hat{\rho}^{c_{1},c_{2}}(n, \tau) \mathrm{e}^{2 \pi i B(n,b)} q^{\frac{|n|^{2}}{2}},
\end{equation}
where, for $c_{1}, c_{2}, c\in C_{Q}$,
\begin{align}
    \hat{\rho}^{c_{1},c_{2}}(n, \tau) &\coloneq    \hat{\rho}^{c_{1}}(n, \tau) -\hat{\rho}^{c_{2}}(n, \tau),
\end{align}
where,
\begin{align}
     \hat{\rho}^{c}(n, \tau) &\coloneq E\left(B(c,n) \sqrt{\frac{2 \mathrm{Im}(\tau)}{-|c|^{2}}}  \right), \endline
    E(z) &\coloneq 2 \int_{0}^{z} \mathrm{e}^{- \pi u^{2}} \dd u.
\end{align}
\end{defn}

Using Lemma 1.7 from \cite{Zwegers2008MockTF} we can write $\hat{\rho}^{c_{1},c_{2}}(n, \tau)$ as a sum of three terms, 
\begin{equation}
    \hat{\rho}^{c_{1},c_{2}}(n, \tau) =  \rho^{c_{1}, c_{2}}(n)  - \mathrm{sgn}( B(c_{1},n)) \beta \left(   \frac{2 B(c_{1},n)^{2} \mathrm{Im}(\tau)}{ - |c_{1}|^{2}}   \right) + \mathrm{sgn}( B(c_{2},n)  ) \beta \left(   \frac{2 B(c_{2},n)^{2} \mathrm{Im}(\tau)}{ - |c_{2}|^{2}}   \right)
\end{equation}
where for $x\geq 0$
 \begin{equation}
     \beta(x) = \int_{x}^{\infty} u^{-\frac{1}{2}} \mathrm{e}^{- \pi u} \dd u.
 \end{equation}
This allows us to write the modular completion $\widehat{\Theta}_{A,a,b,c_{1},c_{2}}(\tau, \bar{\tau})$ as a sum of $\Theta_{A,a,b,c_{1}, c_{2}}(\tau)$ and two other terms, 
\begin{equation}
\widehat{\Theta}_{A,a,b,c_{1},c_{2}}(\tau, \bar{\tau}) =   \Theta_{A,a,b,c_{1}, c_{2}}(\tau) + G_{A,a,b,c_{1}}(\tau, \bar{\tau}) - G_{A,a,b,c_{2}}(\tau, \bar{\tau}), 
\end{equation}
where 
\begin{equation}
 G_{A,a,b,c}(\tau, \bar{\tau})   \coloneq - \sum_{n \in a + \ZZ^{2}} \mathrm{e}^{2 \pi i B(n,b)}  \mathrm{sgn}( B(c,n)) \beta \left(   \frac{2 B(c,n)^{2} \mathrm{Im}(\tau)}{ - |c|^{2}}   \right)   q^{\frac{|n|^{2}}{2}}. 
\end{equation}
The functions $G_{A,a,b,c_{1}}(\tau, \bar{\tau})$ and $G_{A,a,b,c_{2}}(\tau, \bar{\tau})$ are non-holomorphic functions that we need to add to $\Theta_{A,a,b,c_{1}, c_{2}}(\tau)$ to get the modular completion $\widehat{\Theta}_{A,a,b,c_{1},c_{2}}(\tau, \bar{\tau})$ which transforms like a weight one modular form. In particular, for $ 2 b \in L^{*} $ and $\mu \in L^{*}$, where $L$ is the lattice on $\ZZ^{2}$ with a quadratic form $A$ and $L^{*} \coloneq A^{-1} \ZZ^{2}/\ZZ$, the modular $S$-transformation of $\widehat{\Theta}_{A,a,b,c_{1},c_{2}}(\tau, \bar{\tau})$ is given by (see appendix \ref{stransformofnonholomorphicsec} for details),

\begin{equation}\label{modularStransformindivtheta}
   \mathrm{e}^{- 2\pi i B(\mu, b )} \widehat{\Theta}_{A, \mu + b , b , c_{1}, c_{2} }(\tau, \bar{\tau}) = \frac{- \tau^{-1} }{\sqrt{|\det(A)|}} \sum_{ \lambda \in L^{*} } \mathrm{e}^{- 2\pi i B(\mu +  b, \lambda + b)} \mathrm{e}^{-2 \pi i B( \lambda, b)} \widehat{\Theta}_{A,\lambda + b , b, c_{1}, c_{2}} \left(-\frac{1}{\tau},- \frac{1}{\bar{\tau}}\right).
\end{equation}

Based on \cite[Prop. 4.2, 4.3]{Zwegers2008MockTF}, we can write $G_{A,a,b,c}(\tau,\bar{\tau})$ as
\begin{equation} \label{nonholparteichint}
    G_{A,a,b,c}(\tau,\bar{\tau}) = -i \sqrt{-|c|^{2}} \sum_{I \in P} \left( \sum_{\ell \in (I+a)^{\perp} + c_{\perp}\ZZ } \mathrm{e}^{2 \pi i B(\ell, b^{\perp})} q^{\frac{|\ell|^{2}}{2}} \right) \int_{-\bar{\tau}}^{i \infty} \frac{g_{\frac{B(I+a,c)}{|c|^{2}},B(c,b)}(-|c|^{2}y)}{\sqrt{-i(y+ \tau)}} \dd y,
\end{equation}
where $x^{\perp}$ is the projection of $x$ on the line orthogonal to $c$, $x^{\perp} =  x - \frac{B(x,c)}{B(c,c)} c$, $c_{\perp}$ is a generator of the set $\{ \lambda \in \ZZ^{2} \vert B(c, \lambda) =0 \}$, $P = \ZZ^{2} / c \ZZ \oplus c_{\perp}\ZZ $, $g_{x,y}$ is the unary theta function, 

\begin{equation}
    g_{x,y} \coloneq \sum_{ n \in x + \ZZ} n \mathrm{e}^{2 \pi i n y} q^{\frac{n^{2}}{2}}.
\end{equation}

To summarize, above we laid out various notations which would help us with our further study of $\widehat{Z}$. The key point for us is that the holomorphic completion of a regularized indefinite theta function is a sum of three parts: holomorphic part, non-holomorphic part that depends only on $c_{1}$, and non-holomorphic part that depends only on $c_{2}$. Equation \eqref{nonholparteichint} helps us to relate the non-holomorphic part with the Eichler integral, as in Definition \ref{defn:mixed mock modular form}. With this set up, we will approximate $G$s with definite theta functions, which would help us to compute $\ceff$ coming from non-holomorphic parts.

\subsection{Conjectural form of \texorpdfstring{$\widehat{Z}$}{zhat} on the other side}\label{sec:regularization conjecture for zhat}

As was mentioned in the Introduction,
Brieskorn spheres $\overline{\Sigma(s,t,rst\pm1)} = S^3_{+1/r}(T_{s,t})$ offer a particularly interesting class of examples for which the mathematical definition of $\widehat{Z}$ is not currently known. A candidate for such a definition in the case when $M_3$ is a surgery along some knot in $S^3$
has been proposed \cite{park2021inverted}:

\begin{coj}[The regularization conjecture for $\widehat{Z}$ on the other side]\label{conj:regform}
    Let $M_3=S^3_{+1/r}(K)$ be a manifold obtained by performing $+1/r$-surgery on a knot $K$. (For simplicity, we assume that $M_3$ is an integer homology sphere, as in the rest of the paper.) Then
    
    \begin{equation}\label{eq:Zhat regularized +1/r surgery}
    \begin{aligned}
    \widehat{Z}(M_3;q) = &\ q^\frac{r+r^{-1}}{4} \sum_{j\geq 0} f_j(K;q)(q^{-r(j + \frac{1}{2} - \frac{1}{2r})^2} - q^{-r(j + \frac{1}{2} + \frac{1}{2r})^2}) 
     \left(1- \frac{\sum_{|k| \leq j}(-1)^k q^{\frac{k((2r+1)k+1)}{2}}}{\theta(-q^r, -q^{r+1})}\right)\,,
    \end{aligned}
    \end{equation} where $f_j(K;q)$ are coefficients of the $F_K$ invariant of a knot complement $S^3\setminus K$, see \cite{gukov2021two}, and
    Ramanujan theta function $\theta(a,b)$ is given by $$\theta(a, b) \coloneq \sum_{n \in \mathbb{Z}} a^{\frac{n(n+1)}{2}}b^{\frac{n(n-1)}{2}} = (-a; ab)_\infty(-b; ab)_\infty(ab;ab)_\infty\,.$$
    
    In particular, for $r=1$ this simplifies to
    
    \begin{equation}\label{eq:Zhat regularized +1 surgery}
    \begin{aligned}
    \widehat{Z}(M_3;q) = &\ q^\frac{r+r^{-1}}{4}\frac{1}{(q)_{\infty}} \sum_{j\geq 0,|k|>j} f_j(K;q)(q^{-r(j + \frac{1}{2} - \frac{1}{2r})^2} - q^{-r(j + \frac{1}{2} + \frac{1}{2r})^2})(-1)^kq^{\frac{k(3k+1)}{2}}\,.
    \end{aligned}
    \end{equation}
    
\end{coj}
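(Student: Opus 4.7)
The strategy is proof by counterexample. The theorem asserts that two conjectural prescriptions -- Conjecture \ref{conj:regform} and the predicted form \eqref{coj:z-hat at leading order} -- cannot both hold for generic Brieskorn spheres in the family $\overline{\Sigma(s,t,rst\pm 1)}$. Accordingly, I would assume both as hypotheses and produce a single manifold for which they yield mutually inconsistent values of $\ceff$; since the same recipe can be rerun for any member of the family, exhibiting several such cases establishes the generic nature of the mismatch. The natural witness is $\overline{\Sigma(3,4,13)} = S^3_{+1}(T_{3,-4})$, the smallest case where the analysis below detects a disagreement.

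The bulk of the work lies in pinning down $\ceff$ of the $q$-series given by \eqref{eq:Zhat regularized +1 surgery} tightly enough to contradict the rigid arithmetic shape $\ceff = 24\bigl(m^2/(4st(rst\pm 1)) - l\bigr)$ with $l\in\ZZ$ and $m\in\ZZ$ coprime to $s$, $t$, $rst\pm 1$. I would triangulate the value from three directions that cross-check each other. First, I would derive an upper bound on $\ceff$ coming purely from the Ramanujan-theta regularization factor in \eqref{eq:Zhat regularized +1/r surgery}, as carried out in Section \ref{sec:Bounds on ceff}; this bound depends only on the surgery slope and is insensitive to the knot. Second, I would compute a numerical lower bound by truncating the $q$-expansion to high order and applying Definition \ref{dfn:dominant singularity} to locate the dominant singularity and extract the growth exponent (Section \ref{sec:Numerical analysis of ceff}). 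Third, I would obtain the exact rational value by feeding Conjecture \ref{con:False-Mockv2} into \eqref{eq:Zhat regularized +1 surgery}, rewriting $\widehat{Z}$ as a combination of regularized indefinite theta series $\Theta_{A,a,b,c_1,c_2}$, replacing each by its modular completion $\widehat{\Theta}_{A,a,b,c_1,c_2}$ via \eqref{nonholparteichint}, and reading off the dominant exponential growth from the asymptotics of the resulting Eichler integrals of unary theta functions (Section \ref{sec:Mixed mock modularity and resurgence}).

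The arithmetic step is then immediate. For $(s,t,r)=(3,4,1)$ the predicted equation reduces to $\ceff = m^2/26 - 24 l$. The three computations should converge on the value $\ceff = 9/10$, forcing $l=0$ and $m^2 = 117/5$, which admits no integer solution, let alone one coprime to $3$, $4$, $13$. Equivalently, $m = 3\sqrt{13/5}$ is irrational, so the predicted form fails, and the two conjectures cannot both hold for $\overline{\Sigma(3,4,13)}$. Rerunning the same three-pronged analysis on further members of the family (as in Section \ref{sec:Case studies}) confirms that the failure is not isolated.

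The main obstacle is achieving enough precision in the mock modular step: while the upper bound and the numerical estimate only sandwich $\ceff$ in an interval, the exact rational value must come from the indefinite-theta analysis, which requires careful bookkeeping of the cancellations between the two non-holomorphic pieces $G_{A,a,b,c_1}(\tau,\bar\tau)$ and $G_{A,a,b,c_2}(\tau,\bar\tau)$ and control of the subleading contributions from the polynomial $p(\tau)$ and the modular functions $g_i \vartheta_i^*$ appearing in Conjecture \ref{con:False-Mockv2}. Verifying that none of these subleading pieces lifts the exponential growth above the leading indefinite-theta contribution is the crucial technical ingredient that locks in the rational $\ceff$ used in the arithmetic contradiction.
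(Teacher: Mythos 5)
You have proved the wrong statement. The statement in question is Conjecture \ref{conj:regform}, the regularized $+1/r$-surgery formula itself. This is an open conjecture, due to Park \cite{park2021inverted}, which the paper explicitly adopts as a \emph{working definition} of $\widehat{Z}(\brs{s}{t}{rst\pm1};q)$ (see the Remark immediately following it); the paper contains no proof of it, and neither does your proposal. What you have written is instead a proof sketch of Theorem \ref{thm:conjecture fails}, the incompatibility result — and that argument takes Conjecture \ref{conj:regform} as a \emph{hypothesis}, so it cannot possibly establish it. Indeed, if anything, your argument runs in the opposite direction: by exhibiting $\overline{\Sigma(3,4,13)}$ with $m=3\sqrt{13/5}$ you show that Conjecture \ref{conj:regform} and the expected relation \eqref{coj:z-hat at leading order} cannot both be true, which casts doubt on the surgery formula rather than confirming it.

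A genuine attempt at Conjecture \ref{conj:regform} would have to look entirely different: one would need to start from the $F_K$ invariant of the knot complement $S^3\setminus K$ and the (inverted) Dehn-surgery gluing formula of \cite{gukov2021two}, justify the specific regularization by the Ramanujan theta function $\theta(-q^r,-q^{r+1})$ that renders the naively divergent sum convergent for $|q|<1$, and then verify that the resulting $q$-series satisfies the defining properties expected of $\widehat{Z}$ (e.g.\ recovering WRT invariants at roots of unity, or agreeing with the negative-definite plumbing formula after orientation reversal in checkable cases). None of the machinery in your proposal — the upper bound from the regularization factor, the numerical lower bounds, the indefinite-theta completions — addresses any of this; it all presupposes that formula \eqref{eq:Zhat regularized +1/r surgery} is the correct object to analyze.
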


\begin{rem}
    We stress that throughout the paper we take Conjecture \ref{conj:regform} as a working definition for $\widehat{Z}(\brs{s}{t}{rst\pm 1};q)$. Note that other definitions are possible, for example, the one in \cite{ACDGO} using the resurgent analysis. One of the main goals of our work is to test the compatibility of the conjectural form \eqref{eq:Zhat regularized +1 surgery} with other proposals.
\end{rem}

Note that if $K$ is fibered, then $f_j(K;q)$ are bounded polynomials in $q$, as was discussed in \cite{park2021inverted}. In particular, this property holds for all torus knots and thus all the case studies in this paper, where $f_j(T_{s,t};q)$ have coefficients $\{-1,0,1\}$. 
Equipped with this knowledge, in the following section we will analyze the coefficients of $\widehat{Z}(\overline{\Sigma(s,t,rst\pm1)};q)$ given by the above formulas and derive the upper bounds on its $\ceff$.

To highlight its modularity properties for $M_3=\Sigma(s,t,rst+1)$, the regularized $+1/r$ surgery formula \eqref{eq:Zhat regularized +1/r surgery} for this class of manifolds was rewritten in \cite{Cheng:2024vou}  as 
\begin{equation} \label{Zhatindefinitetheta}
    \widehat{Z}
    (\overline{\Sigma(s,t,str+1)};\mathrm{e}^{2\pi i \tau}) = \frac{ C_{\Gamma}(q^{-1})}{2f_{2r+1,1}(\tau)}  \sum_{\substack{\hat\epsilon=(\epsilon_1,\epsilon_2,\epsilon_3)\\\in (\ZZ/2)^{\otimes 3}}}
(-1)^{\hat\epsilon} \mathrm{e}^{-2 \pi i B(\mathsf{a}_{\hat\epsilon},\mathsf{b})} \Theta_{A,\mathsf{a}_{\hat\epsilon},\mathsf{b},\mathsf{c}_1,\mathsf{c}_2}(\tau) 
\end{equation}
where 
\begin{align}
    \mathsf{a}_{\hat\epsilon} & \coloneq \begin{pmatrix}
    \frac{1}{2} - \frac{1}{2}\left( \frac{(-1)^{\epsilon_{1}}}{s} + \frac{(-1)^{\epsilon_{2}}}{t}  + \frac{(-1)^{\epsilon_{3}}}{r s t +1}  \right) \\ -\frac{1}{2(2r+1)}
\end{pmatrix}, & b &\coloneq \begin{pmatrix}
    0 \\ \frac{1}{2(2r+1)}
\end{pmatrix},  &  A &\coloneq \begin{pmatrix}
    -2 s t (r s t+1) & 0 \\
    0 & 2r+1
\end{pmatrix},
\end{align}
$C_{\Gamma}$ is a monomial in $q$, the function $f_{x,\chi}(\tau)$ is defined as follows,
\begin{equation}
    f_{x,\chi}(\tau) \coloneq \sum_{k \in \mathbb{Z}} (-1)^{k} q^{\frac{x}{2} (k- \frac{\chi}{2x})^{2} },
\end{equation}
and $\mathsf{c}_1,\ \mathsf{c}_2$ are
\begin{equation}
    \mathsf{c}_1 = \left(\begin{array}{c}
        1\\ 
          0
    \end{array}\right)\, \quad
    \mathsf{c}_2 = \left(\begin{array}{c}
        2r+1\\ 
          2(str+1)
    \end{array}\right)~.
\end{equation}
The authors of \cite{Cheng:2024vou} found that when $s=2$ and $t=3$, as well as when $s=2,\ t=5$ and $r=3$ the invariant is a pure mock modular form, whereas in the generic case it is a mixed mock modular form. 
In Section \ref{sec:Mixed mock modularity and resurgence} we will leverage this by analyzing the $S$ transformation of this object.

\section{Upper bounds on \texorpdfstring{$\ceff$}{ceff} for \texorpdfstring{$\overline{\Sigma(s,t,rst\pm 1)}$}{sigmastrpm1}}\label{sec:Bounds on ceff}

We begin our analysis by establishing the upper bounds on 
$\ceff$ for manifolds \(\overline{\Sigma(s,t,rst\pm1)}\).
We will show that
\(\ceff(P)\) is bounded above by the contribution from the denominator in the regularized formula for $\widehat{Z}$.
This comparison yields explicit upper bounds on \(c_{\mathrm{eff}}\) for the
$\widehat{Z}$-invariant of both orientations of \(\overline{\Sigma(s,t,rst\pm1)}\) when the
3-manifold is obtained by a \(\tfrac{1}{r}\) surgery on torus knots.

A special class of $q$-series, called monotonically increasing, is characterized by the condition $a_n\leq a_{n+1}$ for all $n$.
We continue with a lemma that will be useful in the proof of the main proposition on $\ceff$ for $\widehat{Z}$ for our class of Brieskorn spheres. 
In the proof of the lemma that follows, we include absolute value signs $| \cdot |$, which may cover the cases with oscillating coefficients.
\begin{lemma}\label{helper-lemma}
    Suppose we have two $q$-series $P(q) = \sum_{n = 0}^\infty a_n q^n$ and $Q(q) = \sum_{n=0}^\infty b_nq^n$, such that $P(q)$ is monotonically increasing, $\lim_{n\to\infty}\frac{\log |a_n|}{\sqrt{n}}<\infty$, and $b_n$ is bounded above by some monotonically increasing polynomial $p(n)$. 
  Then \[
    \ceff(P(q)Q(q)) \leq \ceff(P(q))\,.
    \]
    
\end{lemma}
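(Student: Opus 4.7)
The plan is to bound the coefficients of the Cauchy product $P(q)Q(q)$ by a single $|a_n|$-term times polynomial factors, and then verify that those polynomial factors drop out when computing $\ceff$. Writing $P(q)Q(q)=\sum_{n\geq 0}c_n q^n$, one has $c_n=\sum_{k=0}^n a_k b_{n-k}$. I would first apply the triangle inequality, together with the polynomial bound $|b_{n-k}|\leq p(n-k)\leq p(n)$ and the monotonicity estimate $|a_k|\leq |a_n|$ for $k\leq n$, to obtain
\[
|c_n| \;\leq\; \sum_{k=0}^{n}|a_k|\,|b_{n-k}| \;\leq\; (n+1)\,p(n)\,|a_n|.
\]

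Next, I would take logarithms, $\log|c_n|\leq \log|a_n|+\log p(n)+\log(n+1)$, then square and divide by $n$:
\[
\frac{(\log|c_n|)^{2}}{n}\;\leq\;\frac{(\log|a_n|)^{2}}{n}+\frac{2\log|a_n|\,\bigl(\log p(n)+\log(n+1)\bigr)}{n}+\frac{\bigl(\log p(n)+\log(n+1)\bigr)^{2}}{n}.
\]
The hypothesis $\log|a_n|=O(\sqrt{n})$ combined with $\log p(n)+\log(n+1)=O(\log n)$ forces the two correction terms to be $O((\log n)/\sqrt{n})$ and $O((\log n)^{2}/n)$ respectively, both vanishing as $n\to\infty$. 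Taking $\limsup$ and multiplying by $3/(2\pi^{2})$ then yields $\ceff(P(q)Q(q))\leq \ceff(P(q))$.

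The one delicate step — and the main (modest) obstacle — is justifying $|a_k|\leq |a_n|$, since the literal reading $a_n\leq a_{n+1}$ does not by itself imply monotonicity of absolute values. As flagged in the remark preceding the lemma, the intent is that the estimates be taken with absolute values, which is unambiguous whenever $\ceff(P)>0$: the growth $\log|a_n|=O(\sqrt{n})$ with nonzero $\limsup$ forces $|a_n|\to\infty$, and an increasing sequence with unbounded absolute value must tend to $+\infty$, so $|a_n|=a_n$ is eventually monotone and the finitely many exceptional indices do not affect the $\limsup$. The residual case $\ceff(P)=0$ is trivial since $\ceff\geq 0$ by Definition \ref{ceff-func}; indeed, in that regime $|a_n|$ is bounded and one directly sees $|c_n|\leq C(n+1)p(n)$, hence $\ceff(PQ)=0$ as well. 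Everything else in the argument is a routine asymptotic comparison.
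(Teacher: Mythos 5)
Your proof is correct and follows essentially the same route as the paper: triangle inequality, the monotonicity bound $|a_k|\leq|a_n|$, the polynomial bound on $|b_{n-k}|$, and then showing the cross and error terms vanish after taking logarithms because $\log|a_n|=O(\sqrt{n})$. Your extra care in justifying $|a_k|\leq|a_n|$ from the stated monotonicity of $a_n$ (rather than of $|a_n|$) is a point the paper's proof passes over silently, and your resolution of it is sound.
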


\begin{proof}
Since $P(q)Q(q) = \sum_{n = 0}^\infty (\sum_{k=0}^n a_kb_{n-k})q^n$ we have 
\[
\frac{2\pi^2}{3} \ceff(P(q)Q(q)) = \lim_{n \to \infty} \frac{(\log |\sum_{k=0}^n a_k b_{n-k}|)^2}{n}
\]
Now by the triangle inequality for $| \cdot |$ we have \[
\left|\sum_{k=0}^n a_k b_{n-k}\right| \leq \sum_{k=0}^n |a_k||b_{n-k}|
\]
and then applying monotonicity of the $a_n$ we get 
\[
\sum_{k=0}^n |a_k||b_{n-k}| \leq |a_n|\sum_{k=0}^n |b_{n-k}|.
\]
Combining this with $|b_{n}|\leq p(n)$, we can write 
\begin{equation}
\begin{aligned}\label{eq:PQ lemma inequalities}
    \lim_{n \to \infty} \frac{(\log |\sum_{k=0}^n a_k b_{n-k}|)^2}{n} \leq & \lim_{n \to \infty} \frac{(\log ( |a_n|\sum_{k=0}^n |b_{n-k}|))^2}{n} \\
    =& \lim_{n \to \infty} \frac{\left[\log |a_n| + \log(\sum_{k=0}^n |b_{n-k}|)\right]^2}{n} \\
    \leq & \lim_{n \to \infty} \frac{\left[\log |a_n| + \log[(n+1)p(n)]\right]^2}{n} \\
    =& \lim_{n \to \infty} \frac{(\log |a_n|)^2}{n} + \lim_{n \to \infty} \frac{ (\log[(n+1)p(n)])^2}{n} \\
    & + \lim_{n \to \infty} \frac{2\log |a_n|\log[(n+1)p(n)]}{n}\,. 
\end{aligned}
\end{equation}
Notice that all three limits in the last line of (\ref{eq:PQ lemma inequalities}) are well-defined. Since $\log |a_n|$ grows as $\sqrt{n}$, the last two of them vanish, and we get
\[ \lim_{n \to \infty} \frac{(\log |\sum_{k=0}^n a_k b_{n-k}|)^2}{n} \leq \lim_{n \to \infty} \frac{(\log |a_n|)^2}{n}\,,
\]
from which the result follows.
\end{proof}

\subsection{Estimates of \texorpdfstring{$\ceff$}{ceff} for \texorpdfstring{$T[\Sigma(s, t, st \pm 1)]$}{T[S(s,t,+/-1]}}

In this Section we find the upper bounds of $\ceff$ for $T[\Sigma(s, t, st \pm 1)]$.
These manifolds correspond to $+1$ surgeries on torus knots of type $T(s,t)$ and $T(s,-t)$.
For $M_3 = \overline{\Sigma(s, t, st \pm 1)}$, the conjectural form of $\widehat{Z}$ (\ref{eq:Zhat regularized +1 surgery}) takes form  
\begin{equation}\label{base-eq}
    \begin{aligned}
    \widehat{Z}(M_3;q) &= \frac{q^{\frac{1}{2}}}{(q)_\infty}\sum_{j\geq 0, |k| > j} \varepsilon_{2j+1} q^{\mp \frac{j(j+1)}{st}\mp \frac{(t^2-1)(s^2-1)}{4st}}(q^{-j^2}-q^{-(j+1)^2})(-1)^kq^{\frac{k(3k+1)}{2}} \\
    &= \frac{q^{\frac{1}{2}}}{(q)_\infty}\sum_{j\geq 0} \varepsilon_{2j+1} q^{\mp \frac{j(j+1)}{st}\mp \frac{(t^2-1)(s^2-1)}{4st}}(q^{-j^2}-q^{-(j+1)^2})\sum_{k=j+1}^\infty \left((-1)^k q^{\frac{3k^2+k}{2}} + (-1)^k q^{\frac{3k^2-k}{2}}\right)
\end{aligned}
\end{equation}
where 
\begin{equation}\label{eq:epsilon}
\varepsilon_{2j+1} =
\begin{cases}
    -1 & \text{if $2j+1 \equiv st \pm (s+t) \mod 2st$}\\
    1  &  \text{if $2j+1 \equiv st \pm (s-t) \mod 2st$} \\
    0 & \text{otherwise}
\end{cases}
\end{equation}

The main idea is to apply Lemma \ref{helper-lemma} to the above expression for $\widehat{Z}$ expanded around $q=0$, where we treat the denominator $\frac{1}{(q)_{\infty}}$ and the numerator of (\ref{base-eq}) as $P(q)$ and $Q(q)$, respectively. Let us first see that all conditions of the aforementioned Lemma are satisfied. First of all, $P(q)$ is a well-defined function whose asymptotics and the growth of coefficients are known ($\ceff=1$). Its expansion around $q=0$ also satisfies monotonic growth. That is, we can take

\begin{equation}
    P(q) = 1 + q + 2 q^2 + 3 q^3 + 5 q^4 + 7 q^5 + \dots
\end{equation}

On the other hand, the physical definition of $\widehat{Z}$ implies a well-defined coefficient growth captured by $\ceff$, as discussed in \cite{Gukov:2023cog}. However, we still need to confirm that $\widehat{Z}$ as a $q$-series contains only integer exponents of $q$, up to some common prefactor.

\begin{prop}
    Let $M_3 = \overline{\Sigma(s, t, st\pm1)}$, then the value of $\Delta \in \mathbb{Q}$ for which $\widehat{Z}(M_3;q) \in q^{\Delta}\mathbb{Z}[[q]]$ is given by
    \begin{equation} \label{eq:delta}
    \Delta = \frac{1}{2} \mp \frac{j(j+1)}{st} \mp \frac{(t^2-1)(s^2-1)}{4st} + \frac{j^2+j}{2}
    \end{equation}
    for $j = \frac{1}{2}(st-s-t-1)$.
\end{prop}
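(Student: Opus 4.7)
The plan is to read off the smallest $q$-exponent appearing in the expansion (\ref{base-eq}), which is $\Delta$. The factor $(q)_\infty^{-1} = 1 + q + 2q^2 + \ldots$ only contributes nonnegative powers of $q$ with leading coefficient $1$, so it suffices to identify the minimal exponent in the double sum. Integrality of the resulting $q$-series follows once the leading exponent is located, since $f_j(T_{s,t};q)$ has coefficients in $\{-1,0,1\}$ (as recalled after Conjecture \ref{conj:regform}) and $(q)_\infty^{-1}$ has integer coefficients.

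For a fixed $j$, the product $(q^{-j^2}-q^{-(j+1)^2})\sum_{k\geq j+1}(-1)^k(q^{(3k^2-k)/2}+q^{(3k^2+k)/2})$ decomposes into four ``types'' of monomials. A direct check shows each type is minimized at $k=j+1$, yielding exponents $\tfrac{j^2+5j+2}{2}$, $\tfrac{j^2+7j+4}{2}$, $\tfrac{j^2+j}{2}$, $\tfrac{j^2+3j+2}{2}$ respectively. The smallest, $\tfrac{j^2+j}{2}$, is realized uniquely by the pairing of $-q^{-(j+1)^2}$ with $(-1)^{j+1}q^{(3(j+1)^2-(j+1))/2}$, with coefficient $(-1)^j$. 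Multiplying by $\varepsilon_{2j+1}$ and the prefactor $q^{1/2 \mp j(j+1)/(st) \mp (t^2-1)(s^2-1)/(4st)}$, the leading exponent of the $j$-th summand equals the right-hand side of (\ref{eq:delta}).

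To finish, I would show that the overall minimum is attained at $j_\ast := \tfrac{1}{2}(st-s-t-1)$. The $j$-dependent part of the leading exponent factors as $\tfrac{j(j+1)}{2}\bigl(1\mp \tfrac{2}{st}\bigr)$, which is strictly increasing in $j\geq 0$ (since $|2/st|<1$ for coprime $s,t\geq 2$ not both $2$), so distinct $j$'s give distinct leading exponents and no cross-cancellation is possible. The overall minimum therefore occurs at the smallest admissible $j$, i.e., the smallest $j\geq 0$ with $\varepsilon_{2j+1}\neq 0$. From (\ref{eq:epsilon}), this amounts to finding the smallest $j\geq 0$ with $2j+1 \in \{st\pm (s+t),\, st\pm (s-t)\} \pmod{2st}$; enumerating positive representatives yields $2j+1 = st-s-t$ (and $st-s-t$ is odd and positive for pairwise coprime $s,t$ both $\geq 2$ and not both equal to $2$), hence $j_\ast = \tfrac{1}{2}(st-s-t-1)$.

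The main obstacle is confirming the absence of accidental cancellation at the candidate minimum. This is immediate from the above analysis: within the $j_\ast$-th summand the minimum exponent is realized by a single monomial with coefficient $\varepsilon_{2j_\ast+1}\cdot(-1)^{j_\ast} = \mp 1 \neq 0$, and no other $j$-summand contributes to this exponent by strict monotonicity. Multiplying by the leading $1$ from $(q)_\infty^{-1}$ preserves this coefficient, so $\Delta$ is precisely the leading $q$-exponent of $\widehat{Z}$ and equals the formula (\ref{eq:delta}).
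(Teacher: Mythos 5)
Your proposal is correct and takes essentially the same route as the paper: locate the smallest exponent in \eqref{base-eq}, observe it comes from pairing $-q^{-(j+1)^2}$ with the $k=j+1$ term $(-1)^{j+1}q^{(3(j+1)^2-(j+1))/2}$ to give $\tfrac{j^2+j}{2}$, and identify the relevant $j$ as the smallest one with $\varepsilon_{2j+1}\neq 0$, namely $2j+1=st-s-t$. You supply more detail than the paper's ``by inspecting the equation we see that'': the case-by-case minimization over the four monomial types, the monotonicity in $j$ via the factorization $\tfrac{j(j+1)}{2}\bigl(1\mp\tfrac{2}{st}\bigr)$, and the no-cancellation check are all worked out explicitly and correctly, where the paper leaves them implicit.

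One small gap: the claim $\widehat{Z}(M_3;q)\in q^{\Delta}\mathbb{Z}[[q]]$ requires not only that the coefficients be integers and that $\Delta$ be the minimal exponent, but also that every exponent occurring in \eqref{base-eq} lie in $\Delta+\mathbb{Z}_{\geq 0}$. This is not automatic, since the exponents carry the term $\mp\tfrac{j(j+1)}{st}$ with denominator $st$. The paper addresses this by noting that whenever $\varepsilon_{2j_1+1},\varepsilon_{2j_2+1}\neq 0$ one has $\tfrac{1}{st}\bigl[j_2(j_2+1)-j_1(j_1+1)\bigr]\in\mathbb{Z}$ (a consequence of $(2j+1)^2\equiv s^2+t^2+2st \pmod{4st}$ for admissible $j$). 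Your sentence ``Integrality of the resulting $q$-series follows once the leading exponent is located'' only justifies integrality of the coefficients, not of the exponent differences, so this step should be added.
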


\begin{proof}
Note that for any pair of $q$-powers in equation \eqref{base-eq}, that is for any pair of $q$-powers in the product $P(q)Q(q)$, their difference is an integer. This is in part due to the fact that for $j_1, j_2 > 0$ the definition of $\varepsilon_{2j+1}$ ensures that $\frac{1}{st}[j_2(j_2+1) - j_1(j_1+1)]$ is an integer. Now if we find the smallest $q$-power in \eqref{base-eq}, then we will have found $\Delta$. By inspecting the equation we see that \begin{align}
    \Delta &= \frac{1}{2} \mp \frac{j(j+1)}{st} \mp \frac{(t^2-1)(s^2-1)}{4st} - (j+1)^2 + \frac{3(j+1)^2-(j+1)}{2}\nonumber\\ \label{delta-derived}
    &=  \frac{1}{2} \mp \frac{j(j+1)}{st} \mp \frac{(t^2-1)(s^2-1)}{4st} + \frac{j^2+j}{2}
\end{align}
where $j \geq 0$ is the first such natural number such that $\varepsilon_{2j+1} \neq 0$. From the definition of $\varepsilon_{2j+1}$ we see that $j = \frac{1}{2}(st-s-t-1)$ proving the result.
\end{proof}

\begin{rem}\label{math-physics-ceff-connection}
The study of $\Delta$ for negative definite plumbed manifolds was initiated in \cite{cobordism_bps}.
Formulas for $\Delta$ of invariants for the negative definite family of Brieskorn spheres $M_3 = \Sigma(p, q, r)$ appeared in \cite{delta_brieskorn} and \cite{delta_harichurn_nemethi_svoboda}.
It is interesting to note that, comparing the value of $\Delta$ in \eqref{eq:delta} for $\overline{\Sigma(s, t, st\pm1)}$ with that of $\Delta$ for $\Sigma(s, t, st\pm1)$ we see that $\Delta$ is not negated under orientation reversal as was expected in \cite{cobordism_bps}.
\end{rem}

The most non-trivial check which remains to be done is that the numerator of $\widehat{Z}$ can be bounded by a monotonically increasing polynomial. If this is true, we can then directly apply Lemma \ref{helper-lemma} to get the upper bound on $\ceff$.
Notice in the numerator of the expression in equation \eqref{base-eq} that all of the coefficients in front of $q$-powers are $\pm 1$. 
However, after performing the summation in the numerator, not all coefficients will be unary, therefore we need a careful analysis for them.

The idea of the proof is to first split the expression in \eqref{base-eq} into separate infinite sums $Q_i(q)$ such that each $Q_i(q)$ contains powers of $q$, written as $q^{\varphi_i(j, k)}$ with coefficients $c_{\varphi_i(j, k)}$ taking values$\pm 1$. Then for a specific $q$-power, say $q^{d}$, we count the number of repetitions of this $q$-power in the expression of $Q_i(q)$, that is the number of $j, k \in \mathbb{N}$ such that $\varphi_i(j, k) = d$, and use this to give an upper bound for the corresponding coefficient $c_d$ associated to $q^d$. The number of repetitions of a given $q$-power gives an upper bound for the corresponding coefficient by virtue of the fact that the coefficients in the expression for $Q_i(q)$ satisfy $c_{\varphi_i(j, k)} = \pm 1$. 

Once the $q$-series 
$Q_i(q) = \sum_{j = 0}^\infty \sum_{k=j+1}^\infty c_{\varphi_i(j, k)}q^{\varphi_i(j, k)}$ are identified, the sum is split into
$\widehat{Z}=q^{\Delta}P(q)Q(q)$ where $Q(q)=q^{-\Delta}\sum_{i=0}^{4}Q_i(q)$ and $P(q)=\frac{1}{(q)_\infty}$.
The proof is then completed by showing that the coefficients of $Q(q)$ are bounded by a polynomial in $n$, therefore the upper bound on the $\ceff$ comes from the $1/(q)_\infty$ factor.

\begin{lemma}\label{lem:bounding-box}
Let 
\begin{align*}
    \varphi_1^{\pm}(j, k) &= \frac{1}{2}\mp \frac{j(j+1)}{st} \mp \frac{(t^2-1)(s^2-1)}{4st}-j^2 + \frac{3k^2+k}{2} \\
    \varphi_2^{\pm}(j, k) &= \varphi_1^{\pm}(j, k)-k \\
    \varphi_3^{\pm}(j, k) &= \varphi_1^{\pm}(j, k)-2j-1 \\
    \varphi_4^{\pm}(j, k) &= \varphi_1^{\pm}(j, k) - k - 2j-1
\end{align*}
and let 
\begin{equation}\label{eq:bb-expression}
    S_n^{i, \pm} = \left\{(j, k) \in \mathbb{N}^2 \mid k > j \geq 0  \text{ and } \varphi_i^{\pm}(j, k) = \Delta + n \right\}.
\end{equation}
Then $|S_n^{i, \pm}| \leq C_in^2$ for some constant $C_i$ for all $i=1,2,3,4$.
\end{lemma}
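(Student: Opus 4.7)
The plan is to show that for each $i \in \{1,2,3,4\}$ and each sign $\pm$, every solution $(j,k) \in S_n^{i,\pm}$ must have both indices of size $O(\sqrt{n})$; a crude counting argument then yields a bound of size $O(n)$, which in particular implies the claimed $C_i n^2$ bound.

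The main step is to derive a uniform lower bound on $\varphi_i^\pm(j,k)$ in terms of $k$ alone, valid over the region $0 \leq j < k$. All four functions share the dominant term $\frac{3k^2+k}{2}$ and differ only by quantities linear in $j$ and $k$. Using $j \leq k-1$ to bound the $-j^2$ and $\mp \frac{j(j+1)}{st}$ contributions from below by their values at $j=k$, the coefficient of $k^2$ in $\varphi_1^\pm$ is at least $\frac{3}{2} - 1 - \frac{1}{st} = \frac{1}{2} - \frac{1}{st}$ in the worst sign case (and simply $\frac{1}{2}$ in the other). Since nontrivial Brieskorn spheres $\overline{\Sigma(s,t,rst\pm 1)}$ require $s,t \geq 2$ coprime, one has $st \geq 6$, so this coefficient is bounded below by $c_0 \geq \frac{1}{3} > 0$. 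The additional linear shifts $-k$, $-(2j+1)$, or both, that distinguish $\varphi_2^\pm, \varphi_3^\pm, \varphi_4^\pm$ from $\varphi_1^\pm$ only affect the sub-leading terms. Hence there exist constants $c_0, c_1, c_2 > 0$ depending only on $s,t,r,i$ such that
$$
\varphi_i^\pm(j,k) \geq c_0 k^2 - c_1 k - c_2 \qquad \text{for all } 0 \leq j < k.
$$

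The final step is the count. Any $(j,k) \in S_n^{i,\pm}$ satisfies $c_0 k^2 - c_1 k - c_2 \leq \Delta + n$, which forces $k \leq C\sqrt{n+1}$ for some constant $C = C(s,t,r,i)$. Combined with $0 \leq j < k$, the number of admissible pairs is bounded by $\binom{\lfloor C\sqrt{n+1}\rfloor + 1}{2} = O(n)$, yielding $|S_n^{i,\pm}| \leq C_i n \leq C_i n^2$ for all $n \geq 1$ after enlarging $C_i$ if needed. The only delicate piece is verifying that the $k^2$-coefficient stays strictly positive uniformly across both signs and all four indices; this reduces to the inequality $st \geq 6$ in the nontrivial regime, which is immediate from the hypothesis that $s,t$ are coprime positive integers $\geq 2$, so I do not expect a genuine obstacle.
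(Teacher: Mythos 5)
Your proof is correct, and it in fact establishes a sharper bound than the lemma asks for. The coercivity estimate is sound: for $0 \leq j < k$ one has $-j^2 \geq -k^2$ and $\mp\frac{j(j+1)}{st} \geq -\frac{k(k+1)}{st}$, so the $k^2$-coefficient of $\varphi_i^{\pm}$ is at least $\tfrac{1}{2}-\tfrac{1}{st} \geq \tfrac13$ (since coprime $s,t\geq 2$ force $st\geq 6$), the extra terms $-k$ and $-(2j+1)\geq -2k+1$ only perturb the linear part, and hence $\varphi_i^{\pm}(j,k) \geq c_0k^2 - c_1k - c_2$ on the cone. The equation $\varphi_i^{\pm}(j,k)=\Delta+n$ then forces $k = O(\sqrt{n})$, and counting pairs in the triangle $0\leq j<k$ gives $|S_n^{i,\pm}| = O(n)$, which implies the stated $C_i n^2$. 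The paper argues differently: it treats the level set $\varphi_i^{+}(j,k)=\Delta+n$ as a hyperbola (ellipse in the $-$ case), computes its intersection $X=(x,x)$ with the line $k=j$ via the quadratic formula, and then encloses the relevant lattice points in a square box of side $m$ chosen linear in $n$ (to avoid carrying the square root), so it only obtains the cruder $m^2 = C_1 n^2$ count. Your algebraic lower bound avoids the informal geometric step (``the line $k=j$ grows faster than the upper branch of the hyperbola'') and treats both signs and all four $\varphi_i$ uniformly, at the price of having to verify the positivity of the leading coefficient, which you do. Since only a polynomial bound is needed downstream (to feed Lemma \ref{helper-lemma}), both arguments serve equally well; yours is tighter and arguably cleaner. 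One cosmetic remark: you say you bound the $j$-dependent terms ``by their values at $j=k$'' after invoking $j\leq k-1$; evaluating at $j=k$ is still a valid (slightly weaker) lower bound since those terms are decreasing in $j$, but the phrasing should be made consistent.
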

\begin{proof}
If we consider $\varphi_i^{\pm}(j, k)$ as a real-valued function $\varphi_i^{\pm} : \mathbb{R}^2 \to \mathbb{R}$ then we have that $S^{i, \pm}_n \subseteq \{(j, k) \in \mathbb{R}^2 \mid  \varphi_i^{\pm}(j, k)  = \Delta + n\} \cap \{j, k \in \mathbb{R} \mid k > j \geq 0\} \cap \mathbb{Z}^2$. From the definition of the $\varphi_i^{+}$, we see that the graph of the set $H^{+}_i = \{(j, k) \in \mathbb{R}^2 \mid  \varphi_i^{+}(j, k)  = \Delta + n\} $ is a hyperbola in $\mathbb{R}^2$. See Figure \ref{fig:schematic} for a schematic. In the case of the $\varphi_i^{-}$, the graph of the corresponding set $H^{-}_i$ is an ellipse in $\mathbb{R}^2$. Without loss of generality we will focus now on proving the statement for $S^{i, +}_n$, with the case of $S^{i, -}_n$ being similar with appropriate modifications made. We start with $\varphi_1^+$ and $S^{1, +}_n$. In Figure \ref{fig:schematic}, as $j \to \infty$, one can see that the line $k =j$ grows faster than the upper curve of the hyperbola, thus we just have to compute the intersection point $X$ of $H^{+}_1$ with the line $k = j$ in the upper-right-hand quadrant of $\mathbb{R}^2$. This amounts to making the substitution of $k =j$ into the equation $\varphi_1^+(j, k) = \Delta +n$ and then applying the quadratic formula. We get $X = (x, x)$ where \begin{equation}
x = \frac{-(4+2st) \pm \sqrt{(4+2st)}\sqrt{(4+2st) + 4(4st(\Delta + n) - (t^2-1)(s^2-1))}}{2(4+2st)}
\end{equation}
Define a bounding box $B' = (0, x) \times (0, x)$ and  define \begin{equation}
    m \coloneq 4(4+2st)^2\left(1+4st\left(\left\lceil \Delta + n \right\rceil \right)\right).
\end{equation}
Then in particular $m > x$, and if we define the larger box $B \coloneq (0, m) \times (0, m)$, then $B \cap \mathbb{Z}^2$ will contain $S^{1, +}_n$ and the cardinality of the set $B \cap \mathbb{Z}^2$ is effectively $m^2=C_1 n^2$ where $C_1$ is a non-negative constant that can be pulled out of the data comprising $m$. Thus $|S^{1, +}_n| \leq C_1n^2$. The proofs of $\varphi_i^+$ and $S^{i, +}_n$ for $i = 2, 3, 4$ follow similarly to the above arguments, just different values of $x$ and $m$ will be chosen as the $H^{+}_i$ will be modified hyperbolas to $H^{+}_1$. The proof for $S^{i, -}_n$ for all $i$ will follow similarly to the arguments above except in this case the $H^{-}_i$ will be an ellipse rather than a hyperbola.
\end{proof}

\begin{figure}[h!]
    \centering
    \includegraphics[width=0.5\linewidth]{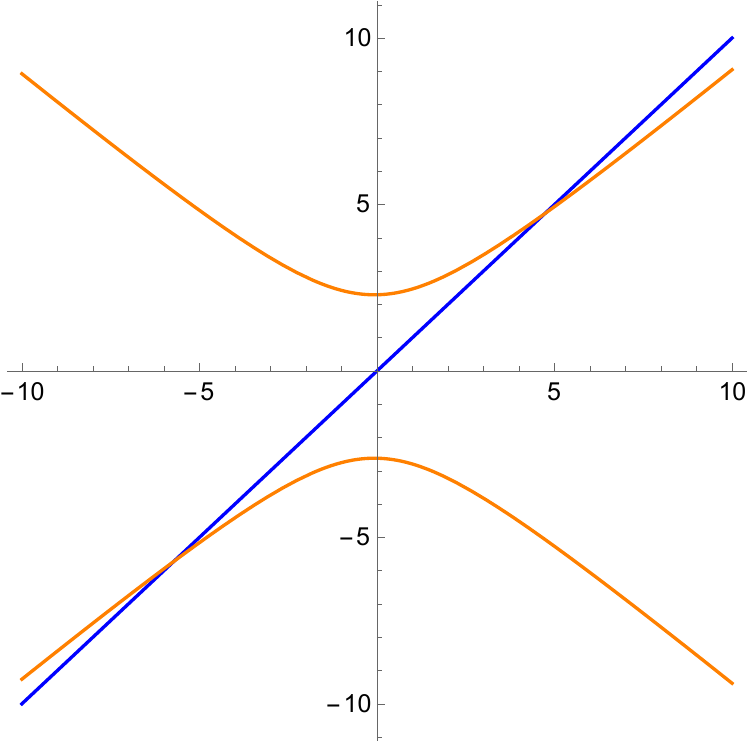}
    \caption{Schematic of hyperbola $H^{+}_1$ intersecting line $k = j$ (here $s=2$, $t=3$, $n=7$ and $H^{+}_1$ is given by $\varphi_1(j,k) = \Delta + n$). We focus on the upper right quadrant. If $X$ is the intersection point, the bounding box under consideration is the box $B' = (0, x) \times (0, x)$. Then we see that $S^{1, +}_n = B' \cap H^{+}_1 \cap \mathbb{Z}^2$. In order to get the polynomial bound, we can consider the larger box $B = (0,m) \times (0,m)$, whose size is proportional to $n^2$.}
    \label{fig:schematic}
\end{figure}

\begin{prop}\label{prp:upper_bound}
    For $M_3 = \overline{\Sigma(s, t, st\pm 1)}$, we get $\ceff(\widehat{Z}(M_3;q)) \leq 1$.
\end{prop}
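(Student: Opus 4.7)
The plan is to write $\widehat{Z}(M_3;q) = q^{\Delta} P(q) Q(q)$, where $P(q) = 1/(q)_{\infty}$ and $Q(q)$ is a $q$-series whose coefficients grow at most polynomially. By Example \ref{exmp:q-pochhammer ceff} we have $\ceff(P) = 1$; the coefficients of $P$ form a monotonically increasing sequence (they enumerate partitions); and the Hardy--Ramanujan asymptotic $\log a_n \sim \pi\sqrt{2n/3}$ ensures $\lim_{n\to\infty}\frac{\log a_n}{\sqrt{n}} < \infty$. Hence Lemma \ref{helper-lemma} will apply to yield $\ceff(PQ) \leq \ceff(P) = 1$. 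Since multiplication by the prefactor $q^{\Delta}$ only shifts the indexing of coefficients by a fixed amount, it does not affect the $\limsup$ that defines $\ceff$, so this will give the desired bound $\ceff(\widehat{Z}(M_3;q)) \leq 1$.

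First, starting from \eqref{base-eq}, I would distribute the inner product
\begin{equation}
\bigl(q^{-j^2} - q^{-(j+1)^2}\bigr) \sum_{k=j+1}^{\infty} (-1)^k \bigl(q^{\frac{3k^2+k}{2}} + q^{\frac{3k^2-k}{2}}\bigr)
\end{equation}
into four double sums $Q_i(q)$ for $i = 1, 2, 3, 4$, indexed by $(j,k)$ with $k > j \geq 0$. After absorbing the $j$-dependent prefactor and the overall $q^{1/2}$, the exponents are exactly the functions $\varphi_i^{\pm}(j,k)$ from Lemma \ref{lem:bounding-box}, and the coefficient on $q^{\varphi_i^{\pm}(j,k)}$ is $\pm \varepsilon_{2j+1} \in \{-1, 0, +1\}$ by \eqref{eq:epsilon}. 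Setting $Q(q) := q^{-\Delta}\sum_{i=1}^{4} Q_i(q)$, we obtain the decomposition $\widehat{Z}(M_3;q) = q^{\Delta} P(q) Q(q)$.

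Second, for each fixed $n \geq 0$, the coefficient of $q^n$ in $Q(q)$ is a signed lattice-point count
\begin{equation}
b_n \;=\; \sum_{i=1}^{4} \sum_{(j,k) \in S_n^{i,\pm}} \bigl(\pm \varepsilon_{2j+1}\bigr),
\end{equation}
and the triangle inequality combined with Lemma \ref{lem:bounding-box} yields $|b_n| \leq \sum_{i=1}^{4} |S_n^{i,\pm}| \leq C n^2$, where $C = \sum_i C_i$. Thus $b_n$ is dominated by the monotonically increasing polynomial $p(n) = C n^2$, which supplies the last hypothesis required by Lemma \ref{helper-lemma}.

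Invoking Lemma \ref{helper-lemma} then gives $\ceff(PQ) \leq \ceff(P) = 1$, and as noted the $q^{\Delta}$ shift does not alter $\ceff$, so the proposition follows. The main non-trivial step -- the polynomial upper bound on the number of lattice solutions to $\varphi_i^{\pm}(j,k) = \Delta + n$ -- has already been handled by Lemma \ref{lem:bounding-box}; the remaining work here is purely organizational: distribute the product in \eqref{base-eq} carefully, track signs via $\varepsilon_{2j+1}$, and verify the three hypotheses of Lemma \ref{helper-lemma}.
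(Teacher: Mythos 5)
Your proposal is correct and follows essentially the same route as the paper's own proof: the same decomposition $\widehat{Z} = q^{\Delta}P(q)Q(q)$ with $P = 1/(q)_\infty$, the same splitting of the numerator into the four sums $Q_i$ with unit coefficients, the polynomial bound on coefficients via Lemma \ref{lem:bounding-box}, and the conclusion via Lemma \ref{helper-lemma}. The extra verifications you spell out (monotonicity of the partition numbers, the $\sqrt{n}$ growth of $\log a_n$, and the harmlessness of the $q^{\Delta}$ shift) are correct and consistent with what the paper leaves implicit.
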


\begin{proof}

\noindent We begin, for $1 \leq i \leq 4$, by defining an infinite sum \begin{equation}
    Q_i(q) = \sum_{j = 0}^\infty \sum_{k=j+1}^\infty c_{\varphi_i(j, k)}q^{\varphi_i(j, k)}
\end{equation}
where $|c_{\varphi_i(j, k)}| =  1$ and $\varphi_i^{\pm}(j, k)$ are defined as in Lemma \ref{lem:bounding-box}.
From this we may rewrite
\begin{equation}
    \widehat{Z}(M_3;q) = \frac{1}{(q)_\infty}\sum_{i=1}^4 Q_i(q)~.
\end{equation}

Now recall that we can write $\widehat{Z}(M_3;q)$ as $\widehat{Z}(M_3;q) = q^{\Delta}\mathfrak{p}(q)$ where $\Delta \in \mathbb{Q}$ is unique and $\mathfrak{p}(q) = 
\sum_{n=0}^
\infty c_nq^n \in \mathbb{Z}[[q]]$ with leading term $1$. In particular this implies that for each $1 \leq i \leq 4$ and for each $j, k$ we have that the $q$-powers of $Q_i(q)$, that is the $\varphi_i^{\pm}(j, k)$ satisfy $\varphi_i^{\pm}(j, k) = \Delta + n$ for some $n \in \mathbb{N}$. In other words, if we collect like terms in $Q_i(q)$, then we can write $q^{-\Delta}Q_i(q) = \sum_{n=0}^\infty b^i_nq^n \in \mathbb{Z}[[q]]$ for some coefficients $b^i_n$. 

We set $P(q) = \frac{1}{(q)_\infty}$ and $Q(q) = q^{-\Delta}\sum_{i=1}^4Q_i(q)$, so that $q^{-\Delta}\widehat{Z}(M_3;q) = P(q)Q(q)$.
By Lemma \ref{lem:bounding-box} we see that the $n$-th coefficient of $Q(q)$, $\sum_{i=1}^4 b^i_n$, is bounded above by a monotonic polynomial in $n$, so we can apply Lemma \ref{helper-lemma} to see that $\ceff(P(q)Q(q)) \leq \ceff(P(q))$.
Given $\ceff(P)=1$, this completes the proof.
\end{proof}

\subsection{Generalization to higher values of \texorpdfstring{$r$}{}}

We recall the $+1/r$-surgery (\ref{eq:Zhat regularized +1/r surgery}) in the case of $M_3 = \overline{\Sigma(s, t, rst\pm 1)}$ below: \begin{equation}\label{z-hat-start}
\begin{aligned}
    \widehat{Z}(M_3;q) = &\ q^\frac{r+r^{-1}}{4} \sum_{j\geq 0} \varepsilon_{2j+1} q^{\mp \frac{j(j+1)}{st}\mp \frac{(t^2-1)(s^2-1)}{4st}}(q^{-r(j + \frac{1}{2} - \frac{1}{2r})^2} - q^{-r(j + \frac{1}{2} + \frac{1}{2r})^2})\ \times\ \\
    &\ \left(1- \frac{\sum_{|k| \leq j}(-1)^k q^{\frac{k((2r+1)k+1)}{2}}}{\theta(-q^r, -q^{r+1})}\right)\,,
\end{aligned}
\end{equation} where $\varepsilon_{2j+1}$ is defined as in \eqref{eq:epsilon}. 

We would like to use the same idea as in Proposition \ref{prp:upper_bound} to study the upper bounds for $\ceff$ for $\overline{\Sigma(s,t, rst\pm 1)}$, however we encounter two issues in doing so. The first issue is a subtlety in locating the most dominant singularity of $\widehat{Z}(\overline{\Sigma(s,t,rst\pm 1)})$, a point we will address in more detail in the next section.
Namely, numerous examples suggest that for $r > 1$, $\widehat{Z}(\overline{\Sigma(s, t, rst\pm 1)})$ has the most dominant singularity near $q = \mathrm{e}^{2\pi i\omega}$ where $\omega=\frac{a}{2r+1}$ for some integer $a$. However, if we multiply $\widehat{Z}$ by $\eta(q)^{-1}$, the most dominant singularity shifts to $q=1$.
Based on this evidence, we make the assumption throughout this section that $\omega = 0$.

The second issue is that we cannot directly apply the techniques used to prove $\ceff$ for the $+1$ surgery formula above as \eqref{z-hat-start} involves a summation over the cone $\mathcal{C} = \{(j, k) \in \mathbb{Z}^2 \mid j \geq 0 \text{ and } |k| \leq j\}$. When we intersect $\mathcal{C}$ with the hyperbola (see Figure \ref{fig:schematic}) generated by the set of $(j, k) \in \mathbb{Z}^2$ which satisfy the expression \eqref{eq:bb-expression} for a given $q$-power we see that there are infinitely many points in $\mathbb{Z}^2$ in the intersection.
To remedy this, we can rewrite \eqref{eq:Zhat regularized +1/r surgery} in the following form:

\begin{equation}\label{z-hat-better-form}
    \widehat{Z}(M_3;q) = \frac{q^\frac{r+r^{-1}}{4}}{\theta(-q^r, -q^{r+1})} \sum_{j\geq 0, |k| > j} f_j(K;q)(q^{-r(j + \frac{1}{2} - \frac{1}{2r})^2} - q^{-r(j + \frac{1}{2} + \frac{1}{2r})^2})(-1)^k q^{\frac{k((2r+1)k+1)}{2}}\,.
\end{equation}

To confirm this, it suffices to show if
\begin{equation}\label{minimized-problem}
    1- \frac{\sum_{|k| \leq j}(-1)^k q^{\frac{k((2r+1)k+1)}{2}}}{\theta(-q^r, -q^{r+1})} = \frac{\sum_{|k| > j}(-1)^k q^{\frac{k((2r+1)k+1)}{2}}}{\theta(-q^r, -q^{r+1})}\,.
\end{equation}
Note that we have
\begin{equation}
    \theta(-q^r, -q^{r+1}) = \sum_{k \in \mathbb{Z}} (-q^r)^{\frac{k(k+1)}{2}}(-q^{r+1})^{\frac{k(k-1)}{2}} = \sum_{k \in \mathbb{Z}}(-1)^{\frac{k(k+1)}{2}}(-1)^{\frac{k(k-1)}{2}}q^{2rk^2+k^2-k}\,.
\end{equation}
Examining the factors of $-1$ and using $\frac{k(k+1)}{2}+ \frac{k(k-1)}{2} = k^2$ and $(-1)^{k^2} = (-1)^k$ for all $k$, we can simplify the above equation into
\begin{equation}\label{almost}
    \theta(-q^r, -q^{r+1}) =  \sum_{k \in \mathbb{Z}}(-1)^{k}q^{2rk^2+k^2-k}.
\end{equation}
Note that within the summation there is a symmetry in terms with $k$ and $-k$. Namely we have $(-1)^{-k}q^{2r(-k)^2+(-k)^2-(-k)} = (-1)^{k}q^{2rk^2+k^2+k}$. Thus we may change the index of summation from $k$ to $-k$ and we end up with
\begin{equation}\label{almost2}
    \theta(-q^r, -q^{r+1}) =  \sum_{k \in \mathbb{Z}}(-1)^{k}q^{2rk^2+k^2+k} = \sum_{k \in \mathbb{Z}}(-1)^{k}q^{\frac{k((2r+1)k+1)}{2}}.
\end{equation}
From this \eqref{minimized-problem} follows immediately. With the $+1/r$-surgery formula now given by \eqref{z-hat-better-form}, we can apply the bounding techniques used to find the bound for $\ceff$ for the $+1$ surgery formula to the $+1/r$ surgery formula. One has the following updated proposition.

\begin{prop}
   Let $M_3 = \overline{\Sigma(s, t, rst\pm1)}$, then
    \begin{equation} \label{eq:delta-r}
    \Delta = \frac{r+r^{-1}}{4} \mp \frac{j(j+1)}{st} \mp \frac{(t^2-1)(s^2-1)}{4st} - r\left(j + \frac{1}{2} + \frac{1}{2r}\right)^2 + \frac{(2r+1)(j+1)^2 -(j+1)}{2}
    \end{equation}
    for $j = \frac{1}{2}(st-s-t-1)$. 
\end{prop}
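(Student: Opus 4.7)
The plan is to adapt the argument from the $r=1$ proposition, using the rewritten surgery formula \eqref{z-hat-better-form} as the starting point. We need two ingredients: first, that all $q$-powers appearing in the expansion lie in a single coset of $\mathbb{Z}$; second, that the minimum such power equals the $\Delta$ in \eqref{eq:delta-r}.

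For the first ingredient, \eqref{almost2} yields $\theta(-q^r,-q^{r+1}) = 1 - q^r - q^{r+1} + O(q^{3r+1}) \in 1 + q^r\mathbb{Z}[[q]]$, so its reciprocal also belongs to $1 + q^r\mathbb{Z}[[q]]$ and only contributes non-negative integer shifts to the exponents. The $\varepsilon_{2j+1}$-support then guarantees $\frac{1}{st}[j_1(j_1+1)-j_2(j_2+1)] \in \mathbb{Z}$ for any two admissible indices $j_1,j_2$, exactly as in the $r=1$ case, and the remaining $j$- and $k$-dependent contributions combine to preserve this coset.

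For the second ingredient, I would optimize separately over the expansion index $n \geq 0$ of $1/\theta$, the sign $\pm$ inside $-r(j+\tfrac{1}{2}\pm\tfrac{1}{2r})^2$, the integer $k$ with $|k|>j$, and the admissible $j$. The choices $n = 0$ and the $+$ sign (which yields the more negative exponent, coming from the minus sign in the bracket) are immediate. For fixed $j$, minimizing $\tfrac{k((2r+1)k+1)}{2}$ subject to $|k|>j$ selects $k = -(j+1)$, producing $\tfrac{(2r+1)(j+1)^2 - (j+1)}{2}$, which is smaller by $(j+1)$ than the value at $k = j+1$. After these substitutions, a short algebraic expansion shows that the residual $j$-dependence collapses to $\tfrac{1}{2}j^2 + (r-\tfrac{1}{2})j \mp \tfrac{j(j+1)}{st} + \mathrm{const}$, whose forward difference $j + r \mp \tfrac{2(j+1)}{st}$ is strictly positive for every $j \geq 0$ whenever $r\geq 1$ and $s,t\geq 2$ are coprime. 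Hence the minimum over admissible $j$ is attained at $j_{\min} = \tfrac{1}{2}(st-s-t-1)$, as in the $r=1$ case, and substituting this value together with $k = -(j_{\min}+1)$ into the total exponent reproduces exactly \eqref{eq:delta-r}.

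The main obstacle will be the last monotonicity step: one must rule out that some admissible $j > j_{\min}$, paired with a possibly non-extremal $k$ with $|k|>j$, produces a smaller combined $q$-power. Because for each fixed $j$ the $k$-minimum is already at $k = -(j+1)$, this question collapses to the single-variable forward-difference computation above, which I expect to be uniform across the relevant parameter regime; however, edge cases with small $s,t,r$ should be verified explicitly, mirroring the ``by inspection'' remark in the $r=1$ proof but requiring a bit more care because of the richer $r$-dependence inside both the $-r(j+\tfrac{1}{2}+\tfrac{1}{2r})^2$ factor and the $k$-sum.
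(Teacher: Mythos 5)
Your proposal is correct and follows essentially the same route as the paper, which states this proposition without written proof as a direct adaptation of the $r=1$ argument (coset-of-$\mathbb{Z}$ check plus locating the minimal exponent "by inspection"). Your version actually supplies more detail than the paper does — in particular the explicit $k$-minimization at $k=-(j+1)$ and the forward-difference computation $j+r\mp\tfrac{2(j+1)}{st}>0$ establishing monotonicity in $j$ — and these checks are sound.
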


The proof of the  bound for $\ceff$ is then completed by using the asymptotics of $\theta(-q^r, -q^{r+1})$ near $q = 1$ in place of the asymptotics for $1/(q)_\infty$, which can be derived analogously to Example \ref{exmp:q-pochhammer ceff}. In the end we obtain the following key result:

\begin{prop}\label{prop:general upper bounds} Assuming that $\omega = 0$ holds in \eqref{eq:general c_eff} for $\overline{\Sigma(s, t, rst\pm 1)}$ we have
    $$\ceff[\widehat{Z}(\overline{\Sigma(s,t,rst\pm 1)};q)] \leq \frac{3}{2r+1}.$$  
\end{prop}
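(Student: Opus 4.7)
The plan is to run the scheme of Proposition \ref{prp:upper_bound}, now with $P(q)=1/\theta(-q^{r},-q^{r+1})$ playing the role of $1/(q)_{\infty}$. Using the rewritten form \eqref{z-hat-better-form}, factor $q^{-\Delta}\widehat{Z}(M_{3};q)=P(q)\,Q(q)$, where $Q(q)=q^{(r+r^{-1})/4-\Delta}\sum_{j\geq 0,\,|k|>j}f_{j}(T_{s,\mp t};q)\bigl(q^{-r(j+\tfrac{1}{2}-\tfrac{1}{2r})^{2}}-q^{-r(j+\tfrac{1}{2}+\tfrac{1}{2r})^{2}}\bigr)(-1)^{k}q^{k((2r+1)k+1)/2}$. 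Under the standing assumption $\omega=0$ in \eqref{eq:general c_eff}, the claim reduces via Lemma \ref{helper-lemma} to establishing (i) $\ceff(P)=\tfrac{3}{2r+1}$ and (ii) the coefficients $b_{n}$ of $Q(q)$ are bounded by a polynomial in $n$.

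For (i), use the triple-product factorization $\theta(-q^{r},-q^{r+1})=(q^{r};q^{2r+1})_{\infty}(q^{r+1};q^{2r+1})_{\infty}(q^{2r+1};q^{2r+1})_{\infty}$ and apply the dilogarithm asymptotic of Example \ref{exmp:q-pochhammer ceff} to each factor. Setting $q=\ex^{-\hbar}$ and $Q=\ex^{-(2r+1)\hbar}$, each factor $(q^{a};Q)_{\infty}$ contributes $-\mathrm{Li}_{2}(q^{a})/((2r+1)\hbar)+O(1)$, whose leading $\hbar\to 0^{+}$ value is $-\mathrm{Li}_{2}(1)/((2r+1)\hbar)=-\pi^{2}/(6(2r+1)\hbar)$. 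Summing the three contributions yields $P(q)\sim\exp\!\bigl(\pi^{2}/(2(2r+1)\hbar)\bigr)$, and the correspondence $f(q)\sim\exp(C^{2}/(4\hbar))\iff a_{n}\sim\exp(C\sqrt{n})$ fixes $C^{2}=2\pi^{2}/(2r+1)$, whence $\ceff(P)=\tfrac{3}{2\pi^{2}}\cdot C^{2}=\tfrac{3}{2r+1}$.

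For (ii), adapt Lemma \ref{lem:bounding-box} to the $r$-dependent setting. Since $T_{s,\mp t}$ is fibered, $f_{j}(T_{s,\mp t};q)$ is a polynomial in $q$ with coefficients in $\{-1,0,+1\}$ and bounded $q$-support, so after distributing products the series $Q(q)$ decomposes into a finite collection of subsums $\sum_{(j,k)}\varepsilon\,q^{\varphi_{i}^{\pm}(j,k)}$ with $\varepsilon\in\{\pm 1\}$ and $\varphi_{i}^{\pm}$ quadratic in $(j,k)$ with $r$-dependent coefficients. The dominant quadratic piece is of the form $-rj^{2}+\tfrac{2r+1}{2}k^{2}+(\text{linear})$ in the $+$ surgery case and $(\tfrac{1}{st}-r)j^{2}+\tfrac{2r+1}{2}k^{2}+(\text{linear})$ in the $-$ case, so that each level set $\{\varphi_{i}^{\pm}(j,k)=\Delta+n\}$ is a conic section (hyperbola or ellipse) of diameter $O(\sqrt{n})$. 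Intersecting with the cone $|k|>j\geq 0$ and enclosing in a crude bounding box as in Figure \ref{fig:schematic} gives at most $O(n^{2})$ lattice points per subsum, hence $|b_{n}|\leq p(n)$ for some polynomial $p$.

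Combining (i) and (ii) via Lemma \ref{helper-lemma} yields $\ceff(\widehat{Z}(M_{3};q))\leq\ceff(P)=\tfrac{3}{2r+1}$. The main obstacle is verifying step (ii) uniformly across all four subsum types $\varphi_{i}^{\pm}$ and both surgery slopes $rst\pm 1$: one must confirm that for every sign choice and every $r\geq 1$ the leading quadratic form restricted to the admissible cone produces a bounded level-set intersection. A subordinate technical point is that, in contrast with $1/(q)_{\infty}$, the coefficient sequence of $P(q)=1/\theta(-q^{r},-q^{r+1})$ need not be termwise monotonic for general $r$; this is circumvented by replacing $|a_{n}|$ in the convolution estimate of Lemma \ref{helper-lemma} by the running maximum $a_{n}^{\ast}=\max_{k\leq n}|a_{k}|$, which has identical leading $\ceff$-growth, so that the inequality $|\sum_{k}a_{k}b_{n-k}|\leq a_{n}^{\ast}\sum_{k}|b_{n-k}|$ carries through with no change in the asymptotic conclusion.
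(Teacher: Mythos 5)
Your proposal follows essentially the same route as the paper: rewrite $\widehat{Z}$ via \eqref{z-hat-better-form} so that $1/\theta(-q^{r},-q^{r+1})$ factors out, bound the numerator coefficients polynomially by the bounding-box argument of Lemma \ref{lem:bounding-box} adapted to the $r$-dependent quadratics, and extract $\ceff(P)=\tfrac{3}{2r+1}$ from the triple-product factorization and the dilogarithm asymptotics of Example \ref{exmp:q-pochhammer ceff}. Your additional observation that the coefficients of $1/\theta(-q^{r},-q^{r+1})$ fail termwise monotonicity for $r>1$ (so Lemma \ref{helper-lemma} does not apply verbatim) and your running-maximum fix address a point the paper leaves implicit, and the argument is otherwise correct.
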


Having established the upper bounds of $\ceff$ for Brieskorn spheres, we now proceed to estimate the lower bounds.

\section{Lower bounds on \texorpdfstring{$\ceff$}{ceff} for \texorpdfstring{$\overline{\Sigma(s,t,rst\pm 1)}$}{strstp1}}\label{sec:Numerical analysis of ceff}

In this Section we discuss the numerical analysis of $\widehat{Z}$ for Brieskorn spheres $\overline{\Sigma(s,t,rst\pm 1)}$. 
We begin by determining the dominant singularities of the conjectural expression \eqref{eq:Zhat regularized +1/r surgery}.
To this end, we examine essential singularities of $\widehat{Z}$ as $q$ approaches the unit circle radially, and estimate their relative strength.
Having located the dominant singularities, we proceed with calculating the lower bounds for $\ceff[\zhat(\overline{\Sigma(s,t,rst\pm 1)};q)]$ using the first $N$ terms in the $q\to 0^+$ expansion of the regularization formula \eqref{eq:Zhat regularized +1/r surgery}.
Combining this with the upper bounds from Proposition \ref{prop:general upper bounds}, we will be able to analyze the relationship between $\ceff$ and Chern-Simons invariants of flat connections.

\subsection{Singularity structure of \texorpdfstring{$\zhat$}{zhat}}

As was argued in \cite{Gukov:2023cog} using the saddle point analysis, supersymmetric indices of 3d $\mathcal{N}=2$ theories (in particular, $\widehat{Z}$ invariants) in general are expected to have

\begin{equation}
    a_n \sim \Re \left[ \exp\left( \sqrt{
    \frac{2\pi^2}{3}c_{\rm eff}n
    }+2\pi i r n \right) \right],
\end{equation}

where $r$ is the position of the most dominant singularity (Definition \ref{dfn:dominant singularity}) of the index on the unit circle. Since the behavior of coefficients $a_n$ depends heavily on the location of the dominant singularities, it is essential to locate them beforehand.

\paragraph{Position of dominant singularities.}

The strength of a singularity reflects how quickly the function approaches infinity as it nears the pole (or, in our case, a pole of infinite order, i.e. the essential singularity).\footnote{Although the behavior of a general function around the essential singularity may be erratic, in our case we will use the radial limits of $\widehat{Z}$ invariant which are well-behaved.} The same applies for $\widehat{Z}$ invariants on the other side, when considered as a function of $q\in\mathbb{C}$. According to the regularization conjecture discussed in Section \ref{sec:regularization conjecture for zhat}, $\widehat{Z}$ invariants on the other side of Brieskorn spheres take form of a combination of indefinite theta series which are convergent in the unit disk. Since the denominator of the regularized form (i.e. Ramanujan theta function) has zeroes at roots of unity, one would expect that they determine all singularities of $\widehat{Z}$. However, it is difficult to predict which singularities will be more dominant.

As an example, consider $|\widehat{Z}|$ inside the unit circle for Brieskorn spheres $\overline{\Sigma(2,3,5)}$ and $\overline{\Sigma(2,3,11)}$, Figure \ref{fig:singular examples}. The white regions (i.e. very large absolute values) are formed around a singularity on the unit circle, and we can estimate its strength by comparing the areas of such regions and computing numerically the radial limits, matching them with Definition \ref{dfn:dominant singularity}.

\begin{figure}[h!]
    \centering
    \includegraphics[width=1\linewidth]{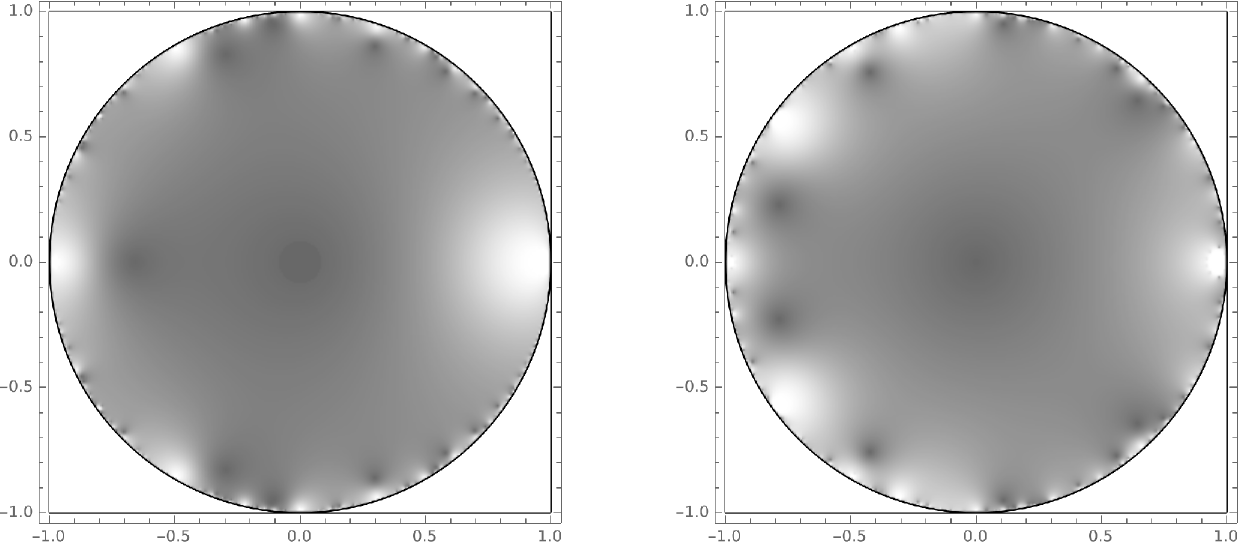}
    \caption{$|\widehat{Z}|$ as a function of $q\in\mathbb{C}$ inside the unit circle: $\overline{\Sigma(2,3,5)}$ with the dominant singularity at $q=1$ (left), and $\overline{\Sigma(2,3,11)}$ with the dominant singularities at $q=\mathrm{e}^{-2\pi i \frac{2}{5}}$ and $q=\mathrm{e}^{-2\pi i \frac{3}{5}}$ (right), which can be determined by comparing the radial limits.}
    \label{fig:singular examples}
\end{figure}

Similarly, for $\overline{\Sigma(2,3,13)}$, we find the dominant as well as the sub-dominant singularities whose structure turns out to be exactly the same as for $\overline{\Sigma(2,3,11)}$ (Figure \ref{fig:singular examples}, right).

Our analysis shows that for $s,t\leq 20$, $\zhat(\overline{\Sigma(s,t,st\pm 1)};q)$ has the dominant singularity at $q=1$. On the other hand, for $r>1$ there is a number of examples where the dominant singularity is not at $q=1$, such as $\overline{\Sigma(2,3,12\pm 1)}$ (and the number of such cases grows with $r$).\footnote{Nonetheless, dividing by the Dedekind eta function $\eta(\tau)$ rotates the dominant singularity to $q=1$ for all $r$, $s$ and $t$.} However, if $r$ is fixed and $s,t$ are taken to be large enough, then it is again at $q=1$.

\subsection{Estimates of the lower bounds}

By using the truncated series, we can obtain lower bounds on $\ceff$. To this end, consider

\begin{equation}\label{eq:Zhat truncated}
    \widehat{Z}|_N = \sum_{k\in I_N} a_kq^k\,,
\end{equation}

where $I_N$ denotes the set of exponents of the first $N$ non-zero terms in $\widehat{Z}$.
Any estimated $\ceff$ coming from \eqref{eq:Zhat truncated} is automatically a lower bound because of the property called \emph{log-concavity}, see \cite{desalvo2015log}.
In this way, more terms in the truncated series $\widehat{Z}|_N$ will guarantee a better bound, and for our studies we set $N=10^4$.

Our next step is to find the range of admissible values for $c_{\rm eff}$.
From the above analysis it is safe to assume that $q=1$ is the dominant singularity for all $+1$-surgeries.
We can then compare the upper bound given by Propositions \ref{prp:upper_bound} and \ref{prop:general upper bounds} with the numerical estimates providing the lower bound. In this way we can find all possible integer values of $m$ (recall the conjectural relation \eqref{coj:z-hat at leading order}) lying within the range $[m_{min}:m_{max}]$, where $m_{min}$ is given by the numerical estimate, and $m_{max}$ is the exact bound from Proposition \ref{prop:general upper bounds}.
In practice, this can significantly narrow down the range of admissible $c_{\rm eff}$ values.
Based on this, we can compute the numerical estimates of $m$ for $+1$-surgeries on various torus knots.

To illustrate the idea, consider the reverse-oriented Poincaré homology sphere $\overline{\Sigma(2,3,5)}$. From Figure \ref{fig:singular examples} we already know that its dominant singularity is at $q=1$. Therefore, we can compare the leading behavior with some Chern-Simons invariant.
There are three flat connections, $\alpha_0$ (abelian), $\alpha_1$ and $\alpha_2$, as listed in Table \ref{tab:Invariants of flat connections on Poincare sphere}.
In this case, the corresponding value $m=1$ (in other words, $m=1$ and $l=0$) is known from modularity \cite{Gukov:2023cog}, but we would nevertheless like to confirm the consistency with Figure \ref{fig:three-part comparison}.
We find the estimated value of $m$ to be $\sim 0.9$, by considering the first $10^4$ terms in $\widehat{Z}$. We find it useful to plot $\log|a_n|$ as a function of $n$ and compare with the upper bound given by the regularization factor $(q)_{\infty}^{-1}$ in the $+1$-surgery formula. Additionally, we include the logarithmic lines for $m$ taking all integer values, starting from $m=1,2,3,\dots$. Combining all these plots into one, we obtain the following picture (Figure \ref{fig:ceff235}).

\begin{figure}
    \centering
    \includegraphics[width=0.7\linewidth]{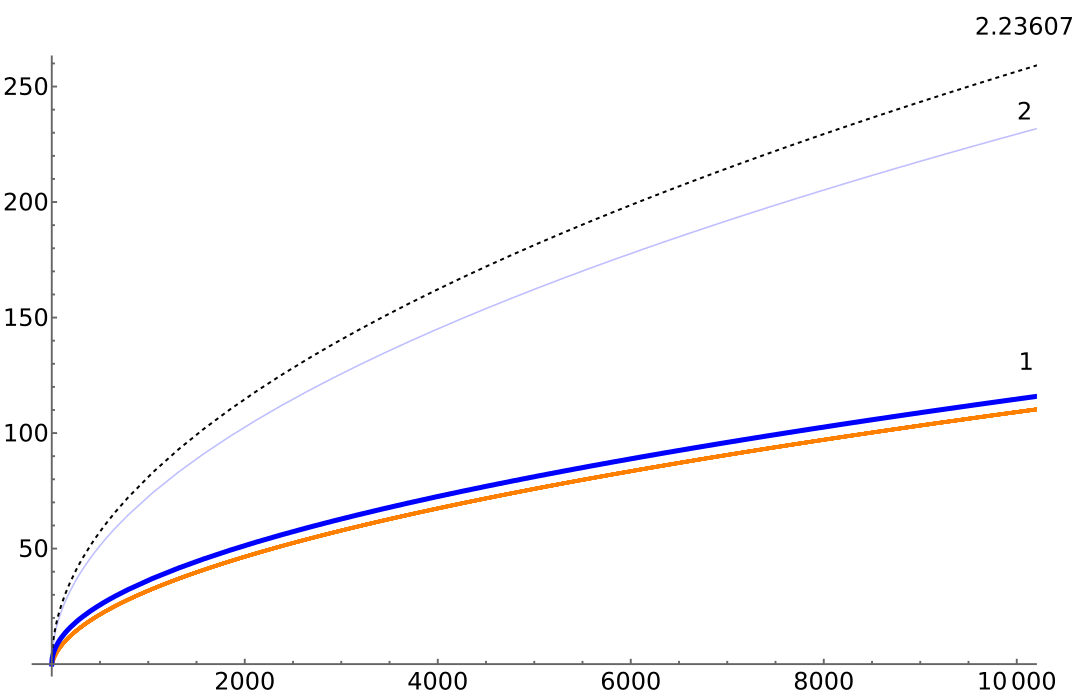}
    \caption{$\ceff$ data for $\overline{\Sigma(2,3,5)}$: $\log |a_n|$ as a function of $n$ (\textcolor{orange}{orange}) producing a lower bound estimate $m_{min}=0.9$, $\sqrt{\left(4\pi\right)^2\frac{m^2}{120}n}$ for $m\in\{1,2\}$ (\textcolor{lightblue}{light blue}), the upper bound $\sqrt{\frac{2\pi^2}{3}n}$ (dashed black). The exact value $m=1$ (\textcolor{blue}{blue}) was obtained in \cite{Gukov:2023cog} using the mock modular property of $\zhat$. In this case, $\ceff$ corresponds to the smallest numerical Chern-Simons invariant of a non-abelian flat connection: $\rm{CS}(\alpha_2) = \frac{1}{120}$, Table \ref{tab:Invariants of flat connections on Poincare sphere}. Therefore, the conjectural relation \eqref{coj:z-hat at leading order} holds in this example.}
    \label{fig:ceff235}
\end{figure}

\begin{table}
\centering
\begin{tabular}{||c|c||}
    \hline
    $\alpha$ & $\rm{CS}(\alpha)$ \\
    \hline\hline
    $\alpha_0$ & $[0]$ \\
    \hline
    $\alpha_1$ & $[\frac{49}{120}]$ \\ 
    \hline
    $\alpha_2$ & $[\frac{1}{120}]$ \\
    \hline
\end{tabular}
\caption{Invariants of flat connections on $\brs{2}{3}{5}$.
}
\label{tab:Invariants of flat connections on Poincare sphere}
\end{table}

There are two blue curves corresponding to $m=1$ and $m=2$ below the upper bound (black dashed curve). The estimated values of $m$ for $N\in[1,10^4]$ is shown as a thick orange line, and it provides the lower bound for $c_{\rm eff}$. We can see both blue lines appearing within the bounds, however, asymptotically $m=1$ is getting much closer. This suggests $m=1$, and the $c_{\rm eff}$-connection is the one with the smallest Chern-Simons number.
In this case, we can access this result analytically using the modular properties of $\zhat(\overline{\Sigma(2,3,5)};q)$, as shown in \cite{Gukov:2023cog}.

We will conduct a more thorough study of $\widehat{Z}$ for $\brs{s}{t}{rst+1}$
using its mixed modular properties in the next Section, as well as the detailed analysis of various examples $\brs{s}{t}{rst\pm1}$ in Section \ref{sec:Case studies}.

\section{Mixed mock modularity and resurgence}\label{sec:Mixed mock modularity and resurgence}
Despite being defined as half–indices of three–dimensional $\mathcal{N}=2$ theories on a solid torus, $\zhat$ invariants often retain striking remnants of the full modular symmetry enjoyed by ordinary two–dimensional indices on the torus boundary.
Exact modular symmetry is broken, but the surviving structure turns out to be surprisingly rich.

For Seifert homology spheres with three singular fibres (in plumbing language: star–shaped graphs with three legs) the $\widehat{Z}$ invariants were shown to be quantum modular forms \cite{cheng20193d,zagier2010quantum} -- functions of the cusps of the upper-half complex plane whose obstruction to modularity can be extended to a real analytic function.
In the same cases, upon orientation reversal that obstruction to modularity further trivializes \cite{cheng20193d} and the invariant becomes a {mock modular form}: the non–modular holomorphic part  of a modular non-holomorphic object which is obtained by the addition of an explicit, non–holomorphic Eichler integral of a cusp form. 
A curious empirical relation between $\widehat{Z}$–series of manifolds related by orientation reversal culminated in the false--mock conjecture \cite{cheng2020three}, predicting a precise pairing between the $\zhat$ invariants of manifolds related by orientation reversal in terms of their modular structure.

In this section we leverage a proposal \cite{Cheng:2024vou} for $\zhat$ invariants on positive Brieskorn spheres of the type $\Sigma(s,t,str+1)$ in order to compute the $\ceff$.
We do this by constructing a modular completion for the $\zhat$ and use its modular transformations to re-express $\zhat$ in terms of $\tilde{q}$ series.
We also demonstrate key differences that arise when studying these $\tilde{q}$ series as obtained from the proposal of \cite{Cheng:2024vou}, which we use, and a recent proposal for the same object deriving from resurgence analysis.

This Section has the following structure: 
In Section \ref{sec:ceffmod} We identify the completion of the $\zhat$ invariant and use it to re-write the invariant in terms of a sum of three $q$ and $\tilde{q}$ series, $W(-\tfrac{1}{\tau})$, $ZG^{\mathsf{c}_1}(\tau,\overline{\tau})$ and $ZG^{\mathsf{c}_2}(\tau,\overline{\tau})$.
In Section \ref{sec:asympt} we analyze each of the three terms. We construct a polynomials with the same asymptotic divergence as $W(-\tfrac{1}{\tau})$ and $ZG^{\mathsf{c}_2}(\tau,\overline{\tau})$ and we show that $ZG^{\mathsf{c}_1}(\tau,\overline{\tau})$ does not diverge as $\tau=i\alpha\to0^{+}$, so it doesn't contribute to the $\ceff$.
Lastly in \ref{sec:compare} we compare our results with recent developments from Resurgence and demonstrate that the modular $S$-transformation leads to terms which do not have an analogue in that formalism.

\subsection{\texorpdfstring{$\ceff$}{ceff} from modularity}\label{sec:ceffmod}

To find the $\ceff$ of $\zhat(\overline{\Sigma(s,t,r s t +1)};q)$, we need to study the asymptotics of, $\widehat{Z}(\overline{\Sigma(s, t, str +1)};q)/\eta(q)$ as $q \rightarrow 1$, where $\eta(\tau)$ is the familiar Dedekind $\eta$ function.
We can get the asymptotics by looking at the modular $S$-transform of $\widehat{Z}(\overline{\Sigma(s, t, str +1)};q)$. To obtain the modular $S$ transform of the $\zhat$ we will introduce the (non-holomorphic) modular completion of the $\widehat{Z}$ invariant by taking the modular completions of each summand in \eqref{Zhatindefinitetheta}:
\begin{equation}
  \mathring{Z}_{b_0}(\overline{\Sigma(p_1,p_2,p_3)};\tau) :=  \frac{C_{\Gamma}(q^{-1})}{2f_{2r+1,1}(\tau)}  \sum_{\substack{\hat\epsilon=(\epsilon_1,\epsilon_2,\epsilon_3)\\\in (\ZZ/2)^{\otimes 3}}}
(-1)^{\hat\epsilon} \mathrm{e}^{-2 \pi i B(\mathsf{a}_{\hat\epsilon},\mathsf{b})} \widehat\Theta_{A,\mathsf{a}_{\hat\epsilon},\mathsf{b},\mathsf{c}_1,\mathsf{c}_2}(\tau)~.
\end{equation}
Since $2\mathsf{b} \in L^{*}$, and $\mathsf{a}_{\hat\epsilon} - \mathsf{b} \in L^{*}$ we can use the modular $S$-transformation given in equation \eqref{modularStransformindivtheta} to write 
\begin{multline}
   \frac{2 \mathring{Z}_{b_0}(\overline{\Sigma(p_1,p_2,p_3)};\tau)}{C_{\Gamma}(q^{-1})} = 
   \frac{- \tau^{-1} }{\sqrt{|\det(A)|}} \frac{\mathrm{e}^{-2 \pi i B(\mathsf{b},\mathsf{b})}}{f_{2r+1,1}(\tau)}  \endline
   \times\sum_{\substack{\hat\epsilon=(\epsilon_1,\epsilon_2,\epsilon_3)\\\in (\ZZ/2)^{\otimes 3}}} \sum_{ \lambda \in L^{*} }
(-1)^{\hat\epsilon}  \mathrm{e}^{- 2\pi i B(\mathsf{a}_{\hat\epsilon}, \lambda + \mathsf{b})} \mathrm{e}^{-2 \pi i B( \lambda, \mathsf{b})}  \widehat{\Theta}_{A,\lambda + \mathsf{b} , \mathsf{b}, c_{1}, c_{2}} \left(-\frac{1}{\tau},- \frac{1}{\bar{\tau}}\right).
\end{multline}
We can simplify the above expression by summing over $\hat{\epsilon}$. We pick the following representatives of $L^{*}$
\begin{equation}
  L^{*}  = \left\{ \left(\frac{\ell_{1}}{2 s t (r s t+1)}, \frac{\ell_{2}}{2r+1}\right) \Bigg| 0 \leq \ell_{1} < 2 s t(r s t+1), 0 \leq \ell_{2}< 2r+1 \right\}.
\end{equation}
We can simplify the sum over $\hat{\epsilon}$ with the following identity

  \begin{equation}
    \frac{1}{(2i)^{3}}\sum_{\hat{\epsilon}\in \left(\ZZ/2\right)^{\otimes 3}} \left(-1\right)^{\hat{\varepsilon}}\mathrm{e}^{-2\pi i B\left(\mathsf{b}, \mathsf{b}\right)}\mathrm{e}^{-2\pi i B(\mathbf{a}_{\hat{\varepsilon}},\lambda + \mathsf{b})}\mathrm{e}^{-2\pi i B(\lambda,\mathsf{b})}=\left(-1\right)^{\ell_1}\sin(\frac{\ell_1 \pi}{s})\sin(\frac{\ell_1 \pi}{t})\sin(\frac{\ell_1 \pi}{str+1})
  \end{equation}
We can write
\begin{multline}\label{modularStransformofcompletionofzhat}
   -\frac{ \tau \sqrt{|\det(A)|} }{C_{\Gamma}(q^{-1})} \mathring{Z}_{b_0}(\overline{\Sigma(p_1,p_2,p_3)};\tau) =\\
    \frac{4 i }{f_{2r+1,1}(\tau)} \sum_{\ell_{1}=0}^{2 s t(r s t+1)-1}   (-1)^{\ell_{1}} \sin(\frac{ \ell_{1} \pi }{s}) \sin(\frac{ \ell_{1} \pi }{t}) \sin(\frac{ \ell_{1} \pi }{r s t+1}) \\\times\sum_{\ell_{2}=0}^{2r}\widehat{\Theta}_{A,\lambda + \mathsf{b} , \mathsf{b}, c_{1}, c_{2}} \left(-\frac{1}{\tau},- \frac{1}{\bar{\tau}}\right).
\end{multline}
Just as the completion of indefinite theta series that comprise that of $\widehat{Z}$ can be written as a sum of three terms, we can write the completion of $\widehat{Z}$ as a sum of three terms, $\widehat{Z}$, a non-holomorphic term that depends on $c_{1}$ and a non-holomorphic term that depends on $c_{2}$. We will use this grouping to write $\widehat{Z}$ as a sum of three terms 
\begin{equation}\label{zhatassumofthreeterms}
  \widehat{Z} = W\left(\tau\right) +    ZG^{c_{1}}(\tau, \bar{\tau}) - ZG^{c_{2}}(\tau, \bar{\tau}),
\end{equation}
where
\begin{equation}
\begin{aligned}\label{eq:Wqtilde}
     W\left(\tau\right)  = &   -  \frac{C_{\Gamma}(q^{-1}) 4 i  }{ \tau \sqrt{|\det(A)|} f_{2r+1,1}(\tau) } \sum_{\ell_{1}=0}^{2 s t(r s t+1)-1} (-1)^{\ell_{1}} \sin(\frac{ \ell_{1} \pi }{s}) \sin(\frac{ \ell_{1} \pi }{t}) \sin(\frac{ \ell_{1} \pi }{r s t+1}) \\
   & \times \sum_{\ell_{2}=0}^{2r} \Theta_{A,\lambda + \mathsf{b} , \mathsf{b}, c_{1}, c_{2}} \left(-\frac{1}{\tau},- \frac{1}{\bar{\tau}}\right), 
\end{aligned}
\end{equation}
\begin{multline}
    ZG^{c}(\tau, \bar{\tau}) =  \frac{C_{\Gamma}(q^{-1})}{2 f_{2r+1,1}(\tau)} \Bigg[ \frac{8 i }{ (-\tau) \sqrt{|\det(A)|} }\endline \times \sum_{\ell_{1}=0}^{2 s t(r s t+1)-1} (-1)^{\ell_{1}} \sin(\frac{ \ell_{1} \pi }{s}) \sin(\frac{ \ell_{1} \pi }{t}) \sin(\frac{ \ell_{1} \pi }{r s t+1})  \sum_{\ell_{2}=0}^{2r} G_{A, \lambda + \mathsf{b} , \mathsf{b},c}\left(-\frac{1}{\tau}, -\frac{1}{\bar{\tau}}\right) \endline
   - \sum_{\substack{\hat\epsilon=(\epsilon_1,\epsilon_2,\epsilon_3)\\\in (\ZZ/2)^{\otimes 3}}}
(-1)^{\hat\epsilon} \mathrm{e}^{-2 \pi i B(\mathsf{a}_{\hat\epsilon},\mathsf{b})} G_{A,\mathsf{a}_{\hat\epsilon},\mathsf{b},\mathsf{c}}(\tau, \bar{\tau})\Bigg].
\end{multline}

\subsection{Asymptotic structure of the modular completion}\label{sec:asympt}

We will determine the leading order of divergence of the $\zhat$ invariant by analyzing each of the three terms of equation \eqref{zhatassumofthreeterms}.
Due to the technical nature of the proofs, here we shall summarize the main results, and relegate their proofs to Appendix \ref{apx:proofs}.
The main strategy will be that of isolating the leading terms from the $\tilde{q}$ series of $W(\tau)$ and $ZG^{\mathsf{c}_2}$ to construct polynomials with the same asymptotic properties.
We will also show that $ZG^{\mathsf{c}_1}$ does not diverge and thus does not contribute to $\ceff$.
Our main result will be the identification of a finite polynomial whose leading order divergence is the same as the $\zhat$ invariant proposal, as summarized in the following Lemma.
\begin{lemma}\label{lem:asympt}
 Let $\zhat(\overline{\Sigma(s,t,str+1)};\mathrm{e}^{2\pi i \tau})$ be the $\zhat$ invariant for a positive Brieskorn sphere as described in Section \ref{sec:preliminaries}, then
 \begin{equation}
    \zhat(\overline{\Sigma(s,t,str+1)};\mathrm{e}^{2\pi i \tau})\sim \zeta(s,t,str+1;\tau) \quad \text{as }q=\mathrm{e}^{2\pi i\tau}\to 1
 \end{equation}
 where
\begin{equation}
\begin{aligned}
   \zeta\left(s,t,str+1;\tau\right) 
   &= -  \frac{C_{\Gamma}(q^{-1}) 4 i  }{ \tau \sqrt{|\det(A)|} f_{2r+1,1}(\tau) } \\
   &\hspace{1cm}\times \sum_{\ell_{1}=0}^{2 s t(r s t+1)-1} (-1)^{\ell_{1}} \sin(\frac{ \ell_{1} \pi }{s}) \sin(\frac{ \ell_{1} \pi }{t}) \sin(\frac{ \ell_{1} \pi }{r s t+1})\sum_{\ell_{2}=0}^{2r} \mathcal{P}(s,t,r,\lambda;\tau)\\
    &+ \frac{4 i C_{\Gamma}(q^{-1}) \sqrt{-i \tau} \sqrt{(2r+1)(s t -2)}}{(- \tau) \sqrt{8  s t}   \sin(\frac{\pi r}{2r+1})   }   \\
    &\hspace{1cm}\times\sum_{\ell_{1}=-L_{1}}^{L_{1}} \sum_{\ell_{2}=-L_{2}}^{L_{2}} (-1)^{\ell_{1}} \sin(\frac{ \ell_{1} \pi }{s}) \sin(\frac{ \ell_{1} \pi }{t}) \sin(\frac{ \ell_{1} \pi }{r s t+1}) \\
    & \hspace{2cm} \times   \frac{ \mathrm{e}^{\frac{i \pi (2 \ell_{2}+1)}{2(2r+1)}} \mathrm{sgn}(B(\mathsf{c}_{2}, (\frac{\ell_{1}}{2 s t (r s t+1)}, \frac{2 \ell_{2}+1}{2(2r+1)}) ))  \tilde{q}^{ Q_{\mathsf{c}_{2}} ( \frac{\ell_{1}}{2 s t (r s t+1)}, \frac{2 \ell_{2}+1}{2(2r+1)}  )  - \frac{1}{8(2r+1)} }}{\pi |B(\mathsf{c}_{2}, (\frac{\ell_{1}}{2 s t (r s t+1)}, \frac{2 \ell_{2}+1}{2(2r+1)}) )| }  \\
   &+\frac{ i C_{\Gamma}(q^{-1}) \sqrt{2r+1} }{4 \pi \sin(\frac{\pi r}{2r+1}) \sqrt{ s t (s t-2) } }   \\
   &\hspace{1cm}\times\sum_{\substack{\hat\epsilon=(\epsilon_1,\epsilon_2,\epsilon_3)\\\in (\ZZ/2)^{\otimes 3}}}
(-1)^{\hat\epsilon} \mathrm{e}^{-2 \pi i B(\mathsf{a}_{\hat\epsilon},\mathsf{b})}    \sum_{I \in P_{2}}   \sum_{k = L_{3} }^{L_{4}}  \tilde{q}^{ \frac{(2k+ s t)^{2}}{8 s t (s t -2)} - \frac{1}{8(2r+1)} }  \endline 
& \hspace{2cm} \times   \log( \mathrm{e}^{ i \pi ( \frac{    B(I+\mathsf{a}_{\hat\epsilon},\mathsf{c}_{2})  }{(2r+1)(s t-2)(r s t +1)} +1 ) } ) \mathrm{e}^{- i \pi \frac{B(I+\mathsf{a}_{\hat\epsilon},\mathsf{c}_{2})}{(2r+1)(s t -2)} - \frac{ 2 \pi i k B(I+\mathsf{a}_{\hat\epsilon}, c_{2, \perp}) }{s t (s t -2)} }
\end{aligned}
\end{equation}
where $\lambda=(\ell_1,\ell_2)$,
\begin{equation*}
     \mathcal{P}(s,t,r,\mu;\tau) = 
     \begin{cases}
     \sum_{(\ell_1,\ell_2)\in \{(0,0),(-1,-1)\}}q^{R(\mu,s,t,r;\ell_1,\ell_2)} & \text{if }\delta \neq 0\\
     \sum_{(\ell_1,\ell_2)\in \{(0,0),(-1,0)\}}q^{R(\mu,s,t,r;\ell_1,\ell_2)} & \text{if }\delta = 0
     \end{cases}
 \end{equation*}
 for
 \begin{multline*}
  R(\alpha_1,\alpha_2,r,s,t;\ell_1,\ell_2)=(2 r+1) \left(-\left\lfloor \frac{1}{2} \left(\frac{2 \alpha_2+1}{2 r+1}-\frac{\alpha_1}{r s t+1}\right)\right\rfloor +\ell_1 s t+\ell_2+\frac{2 \alpha_2+1}{4 r+2}\right)^2\\
  -2 s t (r s t+1) \left(\ell_1+\frac{\alpha_1}{2r s^2 t^2+2 s t}\right)^2
 \end{multline*}
 and
 \begin{equation}
    \delta =\frac{2\alpha_2 + 1}{2(2r+1)}- \frac{\alpha_1}{2(rst+1)}~.
 \end{equation}
 The quadratic form $Q_{\mathsf{c}_{2}}$ is given by
\begin{equation*}
   Q_{\mathsf{c}_{2}} (n) \coloneq  \frac{|n|^{2}}{2} - \frac{(B(\mathsf{c}_{2},n))^{2}}{|\mathsf{c}_{2}|^{2}}.
\end{equation*}
The set \(P_{2}\) is defined by \(P_2\coloneq\{ (0,x)| 1 \leq x \leq s t-2\} \) and the bounds of summation $L_1,\,L_2,\,L_3$ and $L_4$ are given by
\begin{align*}
    L_{1}& = \left\lceil \frac{\sqrt{2 s t(r s t+1)(4 r s t+ s  t+ 2)}}{2\sqrt{(2r+1)(s t -2)}} \right\rceil  & L_{2}&=  \left\lceil \frac{\sqrt{4 r s t+ s t+2}}{2\sqrt{s t-2}} - \frac{1}{2} \right\rceil\\
    L_{3} & = \left\lceil -\frac{s t}{2} - \sqrt{\frac{s t(s t -2)}{4(2r+1)}} \right\rceil & L_{4} & = \left\lfloor -\frac{s t}{2} + \sqrt{\frac{s t(s t -2)}{4(2r+1)}} \right\rfloor.
\end{align*}
\end{lemma}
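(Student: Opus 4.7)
The plan is to analyze the three pieces of the decomposition \eqref{zhatassumofthreeterms}, namely $W(\tau)$, $ZG^{\mathsf{c}_1}(\tau,\bar\tau)$ and $ZG^{\mathsf{c}_2}(\tau,\bar\tau)$, separately, extracting in each case the terms whose $\tilde{q}=\mathrm{e}^{-2\pi i/\tau}$-exponents are minimal so that they dominate as $\tau=i\alpha\to 0^+$ (equivalently $q\to 1$). Summing the three leading contributions should assemble into the claimed expression for $\zeta(s,t,str+1;\tau)$.

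For $W(\tau)$, defined in \eqref{eq:Wqtilde}, I would expand the $S$-transformed theta series as a sum over lattice points $n\in (\lambda+\mathsf{b})+\mathbb{Z}^2$ satisfying the cone condition $\rho^{\mathsf{c}_1,\mathsf{c}_2}(n)\neq 0$. Each summand carries a factor $\tilde{q}^{|n|^2/2}$, so the leading behaviour comes from minimizing this indefinite quadratic form subject to the cone constraint. For each coset representative $\lambda=(\ell_1,\ell_2)$ I expect (at most) two candidate minimizers sitting at adjacent corners of the admissible region; together they produce the polynomial $\mathcal{P}(s,t,r,\lambda;\tau)$. The case split between $\delta\neq 0$ and $\delta=0$ should correspond to the situation where one of the candidate minimizers lies exactly on the boundary ray $B(\mathsf{c}_2,n)=0$, forcing its replacement by a neighbouring lattice site. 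This step is an explicit enumeration using the floor expression appearing in $R(\alpha_1,\alpha_2,r,s,t;\ell_1,\ell_2)$.

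For $ZG^{\mathsf{c}_1}$, the distinguished choice $\mathsf{c}_1=(1,0)$ should imply boundedness as $\tau\to 0^+$ of both the untransformed and the $S$-transformed contributions: in the Eichler-integral representation \eqref{nonholparteichint} the outer unary theta in the direction orthogonal to $\mathsf{c}_1$ carries no exponentially growing modes, and the integral from $-\bar\tau$ to $i\infty$ remains finite at the cusp $\tau=0$. Consequently $ZG^{\mathsf{c}_1}$ does not feed into $\ceff$. For $ZG^{\mathsf{c}_2}$ the two summands in its definition must be treated separately. The $S$-transformed summand can be analysed by using $\beta(x)\sim \mathrm{e}^{-\pi x}/(\pi\sqrt{x})$ for large $x$ to extract the dominant exponential $\tilde{q}^{Q_{\mathsf{c}_2}(n)}$ together with the prefactor $\mathrm{sgn}(B(\mathsf{c}_2,n))/|B(\mathsf{c}_2,n)|$; a bounding argument in the spirit of Lemma \ref{lem:bounding-box} (applied now to the ellipse $Q_{\mathsf{c}_2}(n)\leq 1/(8(2r+1))$) pins down the finite window $|\ell_1|\leq L_1$, $|\ell_2|\leq L_2$ and reproduces the second group of terms in $\zeta$. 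The un-transformed summand, inserted via \eqref{nonholparteichint}, contributes the perpendicular-direction lattice modes $k\in[L_3,L_4]$ through the unary theta $g_{B(I+\mathsf{a},\mathsf{c}_2)/|\mathsf{c}_2|^2,B(\mathsf{c}_2,\mathsf{b})}$, and an Eichler integral whose leading asymptotic evaluation produces the $\log(\mathrm{e}^{i\pi(\cdots+1)})$ factor.

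The main obstacle, I expect, is the combinatorial bookkeeping: enumerating the leading lattice representatives with the correct signs and multiplicities (in particular handling the boundary case $\delta=0$ of $\mathcal{P}$), and pinning down the exact thresholds $L_1,\ldots,L_4$ that separate terms to be kept from those absorbed into the $\tilde q$-corrections. The analytic input -- the saddle-point asymptotics of $\beta$ and of the Eichler integral -- is by contrast standard. I therefore anticipate that the detailed enumeration is relegated to Appendix \ref{apx:proofs}, as already advertised in the paragraph preceding the lemma.
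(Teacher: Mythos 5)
Your proposal follows essentially the same route as the paper's Appendix \ref{apx:proofs}: minimizing the $\tilde q$-exponent over the cone to get the two-term approximants $\mathcal{P}$ for $W(\tau)$ (with the $\delta=0$ split arising from a lattice point landing on the cone boundary), showing $ZG^{\mathsf{c}_1}$ stays finite because the combined Eichler integral from $0$ to $i\infty$ converges at the cusp, and extracting the dominant $\tilde q^{Q_{\mathsf{c}_2}(n)}$ terms of $ZG^{\mathsf{c}_2}$ via the $\beta$-asymptotics, the ellipse bound giving $L_1,L_2$, and Poisson summation plus the Erfc-type integral identity giving the $\log$ factor and the window $[L_3,L_4]$. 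The details you defer to the appendix are exactly the ones the paper supplies there, so the plan is sound and matches the published argument.
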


\subsubsection*{Asymptotics of $W(\tau)$}

To capture the leading divergence of $W\left(\tau\right)$ as $\tau\to0^{+}$ ($\tilde q=\mathrm{e}^{2\pi i/\tau}\to0$), we replace every indefinite theta function that appears in its definition by a {two–term polynomial} in $\tilde q$.
The construction proceeds in two steps.
Within the summation cone \eqref{eq:indef-theta-cone} the $\tilde q$-exponent is monotone: it increases with the distance of the lattice point $n$ from the origin. 
Hence only the two points that lie closest to the origin—one in each connected component of the cone—govern the smallest power of $\tilde q$.
So keeping just those two contributions yields a polynomial of two monomials whose leading $\tilde q$-power coincides with that of the full theta function.
Substituting these two–term approximants for every indefinite theta series in $W$ furnishes a single polynomial that reproduces the asymptotic divergence of $W\!\bigl(\tau\bigr)$.  The precise form of the approximant for an individual theta function is stated in the following lemma.

\begin{lemma}
 Using the same conventions as Section \ref{sec:preliminaries}, the leading order of $\Theta_{A,\mu+b,b,\mathsf{c}_1,\mathsf{c}_2}$, where 
 \begin{equation}
  \mu = \left( \begin{array}{c} \frac{\alpha_1}{2st(str+1)} \\  \frac{\alpha_2}{2r+1} \end{array} \right)~,
 \end{equation}
 with $1\leq \alpha_1 < 2st(str+1)$, $0\leq \alpha_2 < 2r+1 $, is the same as
 \begin{equation}
     \mathcal{P}(s,t,r,\mu;\tau) = 
     \begin{cases}
     \sum_{(\ell_1,\ell_2)\in \{(0,0),(-1,-1)\}}q^{R(\mu,s,t,r;\ell_1,\ell_2)} & \text{if }\delta \neq 0\\
     \sum_{(\ell_1,\ell_2)\in \{(0,0),(-1,0)\}}q^{R(\mu,s,t,r;\ell_1,\ell_2)} & \text{if }\delta = 0
     \end{cases}
 \end{equation}
  where
 \begin{multline}
  R(\alpha_1,\alpha_2,r,s,t;\ell_1,\ell_2)=(2 r+1) \left(-\left\lfloor \frac{1}{2} \left(\frac{2 \alpha_2+1}{2 r+1}-\frac{\alpha_1}{r s t+1}\right)\right\rfloor +\ell_1 s t+\ell_2+\frac{2 \alpha_2+1}{4 r+2}\right)^2\\
  -2 s t (r s t+1) \left(\ell_1+\frac{\alpha_1}{2r s^2 t^2+2 s t}\right)^2
 \end{multline}
 and
 \begin{equation}
    \delta =\frac{2\alpha_2 + 1}{2(2r+1)}- \frac{\alpha_1}{2(rst+1)}~.
 \end{equation}
\end{lemma}
Note that we are not considering the case where $\alpha_1=0 \mod 2st(str+1)$.
This is sufficient for our goals, as if $\alpha_1$ were zero, then the sines in \eqref{eq:Wqtilde} vanish, so the $\tilde{q}$-power should not be considered.

\subsubsection*{Asymptotics of \texorpdfstring{$ZG^{\textsf{c}_1}$}{ZGc1}} 
The term $ZG^{\mathsf{c}_1}$ can be shown to be finite, thus  not contributing to the $\ceff$. 
The first step to do so is to perform a rewriting of $ZG^{\mathsf{c}_1}$, to 
\begin{align}\label{eq:zgc1rewriting}
     ZG^{c_{1}}(\tau, \bar{\tau} )    = & \frac{C_{\Gamma}(q^{-1})}{2} \sum_{\substack{\hat\epsilon=(\epsilon_1,\epsilon_2,\epsilon_3)\\\in (\ZZ/2)^{\otimes 3}}}
(-1)^{\hat\epsilon} \Bigg[ \frac{i}{\sqrt{2 s t (r s t+1)} }     \int_{0}^{i \infty} \frac{ \theta^{1}_{s t (r s t+1),2s t (r s t+1)(\mathsf{a}_{\hat\epsilon})_{1}   }(y) }{\sqrt{-i(y + \tau)}} \dd y    \Bigg].
\end{align}
Then, after substituting in the definition of the unary theta function $\theta^{1}_{m,r}(\tau)$, its modular properties can be leveraged to show that the integral in \eqref{eq:zgc1rewriting} converges.

\subsubsection*{Asymptotics of \texorpdfstring{$ZG^{\textsf{c}_2}$}{ZGc2}} 
The asymptotic order of divergence of $ZG^{\textsf{c}_2}$ is obtained in a similar way to that of $W(\tau)$.
First the sum is separated into two terms
\begin{multline}
    ZG^{\textsf{c}_2}_{1}(\tau, \bar{\tau}) \coloneq   \frac{4 i C_{\Gamma}(q^{-1}) }{   (-\tau) \sqrt{|\det(A)|} f_{2r+1,1}(\tau) }   \endline   \times  \sum_{\ell_{1}=0}^{2 s t(r s t+1)-1} (-1)^{\ell_{1}} \sin(\frac{ \ell_{1} \pi }{s}) \sin(\frac{ \ell_{1} \pi }{t}) \sin(\frac{ \ell_{1} \pi }{r s t+1})   \endline
   \times\sum_{\ell_{2}=0}^{2r} G_{A, \lambda + \mathsf{b} , \mathsf{b},\mathsf{c}_{2}}\left(-\frac{1}{\tau}, -\frac{1}{\bar{\tau}}\right),
\end{multline}
\begin{multline}
    ZG^{\textsf{c}_2}_{2}(\tau, \bar{\tau}) \coloneq   \frac{C_{\Gamma}(q^{-1})}{2 f_{2r+1,1}(\tau)} \sum_{\substack{\hat\epsilon=(\epsilon_1,\epsilon_2,\epsilon_3)\\\in (\ZZ/2)^{\otimes 3}}}
(-1)^{\hat\epsilon} \mathrm{e}^{-2 \pi i B(\mathsf{a}_{\hat\epsilon},\mathsf{b})} G_{A,\mathsf{a}_{\hat\epsilon},\mathsf{b},\mathsf{c}_{2}}(\tau, \bar{\tau}).\hfill
\end{multline}
and each term in the sum is considered independently.
We show that the leading order divergence of the first term $ZG^{\mathsf{c}_2}_{1}(\tau,\bar{\tau})$ is 
\begin{align} \label{ZGc21singpoly}
    & ZG^{\textsf{c}_2}_{1}(\tau, \bar{\tau}) \endline 
    \approx & \frac{4 i C_{\Gamma}(q^{-1}) \sqrt{-i \tau} \sqrt{(2r+1)(s t -2)}}{(- \tau) \sqrt{8  s t}   \sin(\frac{\pi r}{2r+1})   }   \sum_{\ell_{1}=-L_{1}}^{L_{1}} \sum_{\ell_{2}=-L_{2}}^{L_{2}} (-1)^{\ell_{1}} \sin(\frac{ \ell_{1} \pi }{s}) \sin(\frac{ \ell_{1} \pi }{t}) \sin(\frac{ \ell_{1} \pi }{r s t+1}) \endline & \hspace{3cm} \times   \frac{ \mathrm{e}^{\frac{i \pi (2 \ell_{2}+1)}{2(2r+1)}} \mathrm{sgn}(B(\mathsf{c}_{2}, (\frac{\ell_{1}}{2 s t (r s t+1)}, \frac{2 \ell_{2}+1}{2(2r+1)}) ))  \tilde{q}^{ Q_{\mathsf{c}_{2}} ( \frac{\ell_{1}}{2 s t (r s t+1)}, \frac{2 \ell_{2}+1}{2(2r+1)}  )  - \frac{1}{8(2r+1)} }}{\pi |B(\mathsf{c}_{2}, (\frac{\ell_{1}}{2 s t (r s t+1)}, \frac{2 \ell_{2}+1}{2(2r+1)}) )| }  
\end{align}
where the bounds $L_{1}$, and $L_{2}$ are given by
\begin{align}
    L_{1}& = \left\lceil \frac{\sqrt{2 s t(r s t+1)(4 r s t+ s  t+ 2)}}{2\sqrt{(2r+1)(s t -2)}} \right\rceil  & L_{2}&=  \left\lceil \frac{\sqrt{4 r s t+ s t+2}}{2\sqrt{s t-2}} - \frac{1}{2} \right\rceil.
\end{align}
Similarly, the leading order divergence of $ZG^{\mathsf{c}_2}_2(\tau,\bar{\tau})$ is
\begin{align}\label{ZGc22singpoly}
 & ZG^{\textsf{c}_2}_{2}(\tau, \bar{\tau}) \endline
\approx &  \frac{ i C_{\Gamma}(q^{-1}) \sqrt{2r+1} }{4 \pi \sin(\frac{\pi r}{2r+1}) \sqrt{ s t (s t-2) } }   \sum_{\substack{\hat\epsilon=(\epsilon_1,\epsilon_2,\epsilon_3)\\\in (\ZZ/2)^{\otimes 3}}}
(-1)^{\hat\epsilon} \mathrm{e}^{-2 \pi i B(\mathsf{a}_{\hat\epsilon},\mathsf{b})}    \sum_{I \in P_{2}}   \sum_{k = L_{3} }^{L_{4}}  \tilde{q}^{ \frac{(2k+ s t)^{2}}{8 s t (s t -2)} - \frac{1}{8(2r+1)} }  \endline & \hspace{2cm} \times   \log( \mathrm{e}^{ i \pi ( \frac{    B(I+\mathsf{a}_{\hat\epsilon},\mathsf{c}_{2})  }{(2r+1)(s t-2)(r s t +1)} +1 ) } ) \mathrm{e}^{- i \pi \frac{B(I+\mathsf{a}_{\hat\epsilon},\mathsf{c}_{2})}{(2r+1)(s t -2)} - \frac{ 2 \pi i k B(I+\mathsf{a}_{\hat\epsilon}, c_{2, \perp}) }{s t (s t -2)} }
\end{align}
where,
\begin{align}
    L_{3} & = \left\lceil -\frac{s t}{2} - \sqrt{\frac{s t(s t -2)}{4(2r+1)}} \right\rceil & L_{4} & = \left\lfloor -\frac{s t}{2} + \sqrt{\frac{s t(s t -2)}{4(2r+1)}} \right\rfloor.
\end{align}

\subsection{Resurgence and mock modularity}\label{sec:compare}

Along with the surgery formula from \cite{park2021inverted}, and modularity \cite{Cheng:2024vou}, one can obtain $\widehat{Z}$ on the other side using resurgence techniques. This approach was carried out in \cite{Costin:2023kla, Adams:2025aad}. In this section, we will briefly summarize the resurgence approach and compare it to the approach using regularized surgery formula for $\widehat{Z}$.

In \cite{Costin:2023kla} a new $q$-series, $\Psi^{\lor}_{a,p}(q)$, called dual false theta functions were introduced. $\Psi^{\lor}_{a,p}(q)$ is obtained by performing the $q \rightarrow q^{-1}$ operation on a false theta function $\Psi_{a,p}$ using resurgence techniques. The dual false theta functions are expected to satisfy the following equation:
\begin{equation}\label{resurgancedualfalseStransform}
   \Psi^{\lor}_{a,p}(q)  =  \frac{i}{\sqrt{2 i \tau}} \sum_{b =1}^{\lfloor \frac{p}{2} \rfloor}  S_{a b} \tilde{q}^{\tilde{\Delta}_{b}} W_{b}(\tilde{q}) -  \frac{1}{\pi} \sqrt{\frac{2 i p}{\tau}} \int_{0}^{\infty}  \mathrm{d} u \mathrm{e}^{\frac{p u^{2}}{2 \pi i \tau}} \frac{\sinh((p-a)u)}{\sinh( p u)} .
\end{equation}
For further details, see equation 4.95 of \cite{Costin:2023kla}. Just as $\widehat{Z}$ for negative definite Brieskorn spheres is a linear combination of false theta functions, according to the proposal in \cite{Adams:2025aad}, the $\widehat{Z}$ for positive definite Brieskorn spheres is the same linear combination of dual false theta functions. Therefore, the $\widehat{Z}$ for positive definite Brieskorn spheres $\overline{\Sigma(p_{1}, p_{2}, p_{3})}$ satisfies a linear combination of the equation above \eqref{resurgancedualfalseStransform}. Schematically, one can write:
\begin{multline}\label{resurganceZhatStransform}
    \widehat{Z}(\overline{\Sigma(p_1,p_2,p_3)};q) =\\ \frac{i}{\sqrt{2 i \tau}} \sum_{ a \in \mathcal{M}_{nAb} } S_{ 0 a } \tilde{q}^{\tilde{\Delta}_{a}} W_{b}(\tilde{q}) - \frac{4}{\pi} \sqrt{ \frac{2 i p }{\tau}} \int_{0}^{\infty} \mathrm{d} u \mathrm{e}^{\frac{p u^{2}}{2 \pi i \tau}} \frac{\sinh(p_{1} p_{2} u )\sinh(p_{2} p_{3} u) \sinh(p_{1} p_{3} u)}{ \sinh(p u)}.
\end{multline}
where $p = p_{1} p_{2} p_{3}$, $\mathcal{M}_{nAb}$ is the set of non-abelian $SL(2,\mathbb{C})$ flat connections on $\overline{\Sigma(p_{1}, p_{2}, p_{3})}$, $W_{b}(\tilde{q})$ is a $\tilde{q}$-series, and $\tilde{\Delta}_{b}$ are rational numbers that are integer lifts of Chern-Simons values of non-abelian flat connections.

The above equation \eqref{resurganceZhatStransform} is the modular $S$-transformation of the $\widehat{Z}$ for positive definite Brieskorn spheres obtained from resurgence techniques. To compare it with $\widehat{Z}$ for positive definite Brieskorn spheres obtained from the surgery formula, we have to look at the modular $S$-transform of $\widehat{Z}$ given in equation \eqref{zhatassumofthreeterms}. Comparing the two equations \eqref{zhatassumofthreeterms} and \eqref{resurganceZhatStransform} we see that the functions $W(\tau)$ and $ZG^{\mathsf{c}_{1}}(\tau, \bar{\tau})$ have the following analogues in the world of resurgence techniques, 
\begin{align*}
    W(\tau) & \longleftrightarrow \frac{i}{\sqrt{2 i \tau}} \sum_{ a \in \mathcal{M}_{nAb} } S_{ 0 a } \tilde{q}^{\tilde{\Delta}_{a}} W_{b}(\tilde{q}) \endline
    ZG^{\mathsf{c}_{1}}(\tau, \bar{\tau}) & \longleftrightarrow  - \frac{4}{\pi} \sqrt{ \frac{2 i p }{\tau}} \int_{0}^{\infty} \mathrm{d} u \mathrm{e}^{\frac{p u^{2}}{2 \pi i \tau}} \frac{\sinh(p_{1} p_{2} u )\sinh(p_{2} p_{3} u) \sinh(p_{1} p_{3} u)}{ \sinh(p u)}
\end{align*}
However, the term $ZG^{\mathsf{c}_{2}}(\tau, \bar{\tau})$ has no analogue in the resurgence framework. The term $ZG^{\mathsf{c}_{2}}(\tau, \bar{\tau})$ is associated with the \textit{mixed} mock modularity of $\widehat{Z}$. 

Further, the number of $\tilde{q}$-series in $\sum_{ a \in \mathcal{M}_{nAb} } S_{ 0 a } \tilde{q}^{\tilde{\Delta}_{a}} W_{b}(\tilde{q})$ is equal to the number of non-abelian $SL(2,\mathbb{C})$ flat connections on $\overline{\Sigma(p_{1}, p_{2}, p_{3})}$. However, in the case of $\overline{\Sigma(s, t, s t r +1)}$, for $r>1$, the number of $\tilde{q}$-series in the term $ W(\tau)$ exceeds the number of non-abelian flat connections.

\section{\texorpdfstring{$\ceff$}{ceff} on the negative side}\label{sec:ceff on the negative side}

So far we have been concerned with finding $\ceff$ for the family of Brieskorn spheres $\overline{\Sigma(s, t, rst \pm 1)}$ which are positive definite plumbed manifolds. 
In this section we take a slight detour and consider negative definite plumbed manifolds which aren't necessarily integer homology spheres but whose $\widehat{Z}$ invariants admit a precise mathematical description in terms of the plumbing data.
Similarly to Section \ref{sec:Bounds on ceff}, our goal is to bound the growth of coefficients of $\widehat{Z}$. 
We will prove that $\ceff \leq 1$ for this class of manifolds.

We begin with recalling the following definition and proposition from \cite{akhmechet2023lattice}.

\begin{defn}

    We define the family of functions $\widehat{F} = \{\widehat{F}_n : \mathbb{Z} \to \mathbb{Q}\}_{n \geq 0}$ as
    $$\widehat{F}_n(r) = \begin{cases}
    1 & \text{ if } \ \ \ (n, r) \in \{(0, -2), (0, 2), (1, -1), (2, 0)\} \\
    -2 & \text{ if } \ \ \ n= 0  \ \  \ \text{ and } r = 0 \\
     -1  &  \text{ if } \ \ \ n= 1  \ \ \ \text{ and } r = 1 \\
      \frac{1}{2}\operatorname{sgn}(r)^n \binom{\frac{n+|r|}{2}-2} {n-3}  &   \text{ if } \ \ \ n\geq 3\ \text{, } |r| \geq n-2, \ \ \text{ and } \ \   r \equiv n \mod 2   \\
      0 & \text{ otherwise }. 
    \end{cases}$$
Further, if $\Gamma$ is a weighted tree with $s$ vertices and framing matrix $M$ and $a \in 2\mathbb{Z}^s + \vec{\delta}$, then we define $\widehat{F}_{\Gamma, a} : \mathbb{Z}^s \to \mathbb{Q}$ by \begin{equation}
        \widehat{F}_{\Gamma, a}(x) = \prod_{v \in V(\Gamma)}\widehat{F}_{\deg(v)}((2Mx+a)_v)
    \end{equation} where $(-)_v$ denotes the $v$-th entry.
\end{defn}

\begin{prop}
    Suppose $M_3$ is a negative definite plumbed manifold obtained by plumbing along a weighted tree $\Gamma$ on $s$ vertices with framing matrix $M$. Let $a \in 2\mathbb{Z}^s + \Vec{\delta}$ be a $\operatorname{Spin}^c$ representative. Then we have \footnote{Note $\Phi_a$ as we use it here was denoted by $\Delta_a$ in \cite{akhmechet2023lattice}.} \begin{equation}\label{z-hat}
        \widehat{Z}_a(M_3;q) = q^{\Phi_a}\sum_{j \in \mathbb{Z}}\left(\sum_{\substack{x\in D_j}}  \widehat{F}_{\Gamma, a} (x)\right)q^j
    \end{equation} where 
    $D_j \coloneq \{x \in \mathbb{Z}^s \mid  2\chi_a(x) = j\}$, $\Phi_a \coloneq -\frac{1}{4}(a^TM^{-1}a + 3s + \operatorname{Tr}(M))$ and $\chi_a : \mathbb{Z}^s \to \mathbb{Z}$ is the function defined by $      \chi_a(x) \coloneq - \frac{1}{2}( x^TMx + x^Ta)$.    
\end{prop}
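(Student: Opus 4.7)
The goal is to derive the stated formula by direct calculation from the contour-integral representation \eqref{eq:zhat-positive-plumbing} of $\widehat{Z}_a$ specialized to the negative-definite plumbing $\Gamma$. The function $\widehat{F}_n(r)$ is designed to equal the coefficient of $z^{-r}$ in the symmetric (principal-value) Laurent expansion of $(z-z^{-1})^{2-n}$, so once each Jacobi factor and the lattice theta function are expanded, the contour integration picks out a single monomial at each vertex and the answer collapses to the claimed lattice sum.

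The first step is to verify the principal-value Laurent expansion
\begin{equation*}
(z - z^{-1})^{2-n} \;=\; \sum_{r \in \ZZ} \widehat{F}_n(r)\, z^{-r} .
\end{equation*}
For $n \in \{0,1,2\}$ the left-hand side is a polynomial in $z^{\pm 1}$ and one reads off the first three cases of the definition of $\widehat{F}_n$ by inspection. For $n \geq 3$, the factor has poles at $z = \pm 1$; writing $(z-z^{-1})^{-(n-2)} = \pm z^{\mp(n-2)}(1-z^{\mp 2})^{-(n-2)}$ convergent on $|z|\gtrless 1$ and applying the binomial theorem yields two one-sided Laurent series, and averaging them (the v.p. prescription) gives coefficient $\tfrac{1}{2}\operatorname{sgn}(r)^n\binom{(n+|r|)/2-2}{n-3}$ of $z^{-r}$, nonzero precisely for $r \equiv n \pmod 2$ and $|r|\geq n-2$, matching the definition of $\widehat{F}_n$ exactly.

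Next, substituting these expansions into \eqref{eq:zhat-positive-plumbing} together with the lattice expansion $\Theta_{-M,a}(z;q)=\sum_{\ell \in 2M\ZZ^s + a} q^{-(\ell,M^{-1}\ell)/4}\prod_v z_v^{\ell_v}$ and exchanging sums with the contour integrals, each residue
\begin{equation*}
\oint \frac{dz_v}{2\pi i\, z_v}\, z_v^{\,-r_v + \ell_v} \;=\; \delta_{r_v,\,\ell_v}
\end{equation*}
forces $r_v = \ell_v$ at every vertex. The contour integral thus collapses to
\begin{equation*}
\widehat{Z}_a(M_3;q) \;=\; q^{-(3s+\operatorname{Tr} M)/4}\sum_{\ell \in 2M\ZZ^s + a} q^{-(\ell, M^{-1}\ell)/4}\prod_v \widehat{F}_{\deg(v)}(\ell_v) .
\end{equation*}
Writing $\ell = 2Mx + a$ with $x \in \ZZ^s$ and expanding gives
\begin{equation*}
-\tfrac{1}{4}(\ell, M^{-1}\ell) \;=\; -x^T M x - x^T a - \tfrac{1}{4}(a, M^{-1}a) \;=\; 2\chi_a(x) - \tfrac{1}{4}(a, M^{-1}a),
\end{equation*}
so the overall $q$-prefactor becomes $q^{-(3s+\operatorname{Tr} M + (a, M^{-1}a))/4} = q^{\Phi_a}$ and the summand becomes $q^{2\chi_a(x)}\widehat{F}_{\Gamma,a}(x)$. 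Regrouping by $j = 2\chi_a(x)$ yields the claimed formula.

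The main obstacle is justifying the exchange of summation and contour integration at the previous step: the Laurent expansions at vertices of degree $\geq 3$ are two-sided infinite, so one must either work inside the ring of formal power series in $q$ with Laurent series in the $z_v$ as coefficients, or argue convergence on a thin annulus around $|z_v|=1$ using the quadratic $q$-decay coming from negative-definiteness of $M$. A companion check is that the inner sum $\sum_{x \in D_j}\widehat{F}_{\Gamma,a}(x)$ is finite for each $j$: since $M$ is negative definite, $2\chi_a(x) = -x^T M x - x^T a$ has positive-definite leading quadratic part, so each level set $D_j$ is finite, while $\widehat{F}_{\Gamma,a}$ grows only polynomially on it. This ensures the right-hand side is a well-defined element of $q^{\Phi_a}\ZZ[[q]]$ and completes the argument.
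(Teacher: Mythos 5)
Your derivation is correct, but note that the paper itself offers no proof of this proposition: it is recalled verbatim from \cite{akhmechet2023lattice} (where $\Phi_a$ is denoted $\Delta_a$), so there is no in-paper argument to compare against. What you have written is essentially the standard derivation from that reference and from \cite{gukov2020bps}: the case-by-case definition of $\widehat{F}_n(r)$ is exactly the symmetric (average of the $|z|>1$ and $|z|<1$ binomial expansions) Laurent coefficient of $(z-z^{-1})^{2-n}$, the principal-value contour integral extracts the constant term against $\Theta_{-M,a}$, and the change of variables $\ell = 2Mx+a$ produces $\Phi_a$ and $2\chi_a(x)$ as you compute (using $\sum_v a_v = \operatorname{Tr} M$ for the prefactor). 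Your two closing remarks — that the interchange of the two-sided Laurent sums with the integral needs either a formal-series or a thin-annulus justification, and that negative definiteness of $M$ makes each level set $D_j$ finite so the right-hand side lies in $q^{\Phi_a}\ZZ[[q]]$ — are precisely the points that need care, and you handle both correctly.
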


We begin with a quick lemma inspired by experimentation with \cite{petercode}.

\begin{lemma}\label{B-size}
    One has $|D_j| \leq (4\lambda_{\operatorname{max}}C_aj)^s$ where $\lambda_{\operatorname{max}}$ is the largest eigenvalue of $-M$ and $C_a \coloneq 2 -\frac{1}{4}a^TM^{-1}a$.
\end{lemma}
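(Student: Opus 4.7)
The strategy is to reinterpret $D_j$ geometrically as a set of lattice points on an ellipsoid and then bound its cardinality via a bounding-hypercube argument. First I would rewrite the defining relation $2\chi_a(x) = x^T(-M)x - x^T a = j$ and complete the square: setting $y = x + \tfrac{1}{2}M^{-1}a$ yields
$$y^T(-M)y \;=\; j - \tfrac{1}{4}a^T M^{-1}a \;=\; j + C_a - 2.$$
Thus $D_j$ is in bijection with the points of $\tfrac{1}{2}M^{-1}a + \mathbb{Z}^s$ lying on the ellipsoid cut out by the equation above.

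Since $-M$ is positive definite, $\lambda_{\min}(-M)\|y\|_2^2 \leq y^T(-M)y$, and combined with $j + C_a - 2 \leq j C_a$ (valid for $j\geq 1$ because $C_a \geq 2$) this gives $\|y\|_2^2 \leq j C_a/\lambda_{\min}(-M)$, hence $\|y\|_\infty \leq \|y\|_2$ admits the same bound. To replace $1/\lambda_{\min}(-M)$ by a quantity involving $\lambda_{\max}(-M)$, I would exploit the fact that $-M$ is an integer positive definite matrix, so $\det(-M) \geq 1$ and therefore $\lambda_{\min}(-M) \geq 1/\lambda_{\max}(-M)^{s-1}$. This allows us to control $\|y\|_\infty$ by an explicit function $R(j)$ of $j$, $C_a$, and $\lambda_{\max}(-M)$. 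Translating back via $x = y - \tfrac{1}{2}M^{-1}a$ and absorbing the constant shift into $R(j)$ then yields $\|x\|_\infty \leq R(j)$.

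Finally, the cardinality bound follows from the elementary observation $|D_j| \leq (2R(j) + 1)^s \leq (4R(j))^s$ (for $R(j)\geq 1$). Applying crude but standard inequalities, most notably $\sqrt{X} \leq X$ for $X \geq 1$, brings $R(j)$ into linear-in-$j$ form and collapses the expression to the target form $(4\lambda_{\max}C_a j)^s$.

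The main obstacle will be the bookkeeping of constants: the geometric estimate on $\|y\|$ scales like $\sqrt{j}$, while the claimed bound is linear in $j$, so the conversion relies on non-tight inequalities valid only for $j$ above a suitable threshold. A small-$j$ case analysis, together with careful choice of how $\lambda_{\max}^{s-1}$ is absorbed into the $(4\lambda_{\max} C_a j)^s$ prefactor, should close the argument. The underlying ellipsoid-counting strategy is standard; only the matching of numerical constants requires care.
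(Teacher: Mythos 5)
Your overall strategy --- complete the square, observe that $D_j$ sits inside the solid ellipsoid $y^{T}(-M)y\le C_a j$ centered at $-\tfrac12 M^{-1}a$, enclose that ellipsoid in a box, and count lattice points --- is exactly the paper's. The problem is the final collapse. The honest geometric bound is $\|y\|_2^2\le C_a j/\lambda_{\min}(-M)$, and your conversion $\lambda_{\min}(-M)\ge\lambda_{\max}(-M)^{-(s-1)}$ (via $\det(-M)\ge1$) gives a box of half-width $\sqrt{C_a j\,\lambda_{\max}^{\,s-1}}$. To reach $(4\lambda_{\max}C_a j)^s$ you would need $2\sqrt{C_a j\,\lambda_{\max}^{\,s-1}}+1\le 4\lambda_{\max}C_a j$, i.e.\ roughly $\lambda_{\max}^{\,s-3}\lesssim 4C_a j$, which fails for $s\ge 4$ once $\lambda_{\max}$ is large and $j$ is small. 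This is not ``bookkeeping of constants'': your base carries $\lambda_{\max}^{(s-1)/2}$ where the target carries $\lambda_{\max}^{1}$, and a small-$j$ case analysis cannot repair a wrong power of $\lambda_{\max}$, since small $j$ is exactly where the inequality breaks and you have no independent handle on $|D_j|$ there.

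For comparison, the paper's proof simply asserts the ellipsoid has diameter $2\sqrt{\lambda_{\max}C_a j}$ and concludes via $2\sqrt{X}+1\le 4X$ for $X\ge1$; this tacitly uses $1/\lambda_{\min}(-M)\le\lambda_{\max}(-M)$, which is automatic for $s=2$ (where $\lambda_{\min}\lambda_{\max}=\det(-M)\ge1$) but not in general. So your more careful route has actually exposed a soft spot in the stated constant rather than merely failing to reproduce it. If you want a statement you can fully prove along these lines, either keep the bound in the form $\bigl(2\sqrt{C_a j/\lambda_{\min}}+1\bigr)^{s}$ --- which suffices for everything downstream, since Proposition \ref{prop-neg-def-coefficient-growth} only needs $|D_j|$ (and the radius of the containing ball) to grow polynomially in $j$ --- or add the explicit hypothesis $\lambda_{\min}(-M)\,\lambda_{\max}(-M)\ge1$ and then follow the paper's one-line counting step.
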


\begin{proof}
We want to find $|\{x\in \mathbb{Z}^s \mid 2\chi_a(x) = j\}|$. First note that $2\chi_a(x) = j$ holds if and only if 
 \begin{equation}
    (x+ \frac{1}{2}M^{-1}a)^T(-M)(x+ \frac{1}{2}M^{-1}a)  = j - \frac{1}{4}a^TM^{-1}a.
\end{equation}
Note that $\frac{1}{4}a^TM^{-1}a$ does not depend on $x$ and so it is a constant\footnote{The quantity $\frac{1}{4}a^TM^{-1}a$ does however depend on $M$ and $a$}. Negative definiteness of $M^{-1}$ implies that $-\frac{1}{4}a^TM^{-1}a > 0$ and thus defining $C_a \coloneq 2-\frac{1}{4}a^TM^{-1}a$ we see that $C_a > 1$ and for all $k \in \mathbb{N}$ we have $C_a k > k - \frac{1}{4}a^TM^{-1}a$. In particular we see that $C_a j > j - \frac{1}{4}a^TM^{-1}a$. Immediately we see that $$\{x\in \mathbb{R}^s \mid 2\chi_a(x) = j\} \subseteq \{x\in \mathbb{R}^s \mid 2\chi_a(x) \leq C_aj\}$$
Now define $E\coloneq \{x\in \mathbb{R}^s \mid 2\chi_a(x) \leq C_aj\}$. By definition, $2\chi_a(x) \leq C_aj$ holds if and only if 
\begin{equation}\label{eq:ellipsoid_E}
     (x+ \frac{1}{2}M^{-1}a)^T(-M)(x+ \frac{1}{2}M^{-1}a)  \leq C_aj.
\end{equation}
From \eqref{eq:ellipsoid_E} one sees that $E$ forms a solid ellipsoid centered at $v\coloneq -\frac{1}{2}M^{-1}a$ with diameter $2\sqrt{\lambda_{max}C_aj}$ where $\lambda_{\operatorname{max}}$ is the largest eigenvalue of $-M$, which contains at most $(4\lambda_\text{max}C_aj)^s$ points of $\mathbb{Z}^s$.
Since $D_j$ is contained in this set,
we conclude that $|D_j| \leq (4\lambda_{\operatorname{max}}C_aj)^s$.
\end{proof}

\begin{rem}\label{rem-1}
    Note that in the above the Spectral Theorem implies that \begin{equation}\label{spectral-thm-eq}
    \lambda_{\operatorname{max}} = \rho(-M) \leq \max_{1 \leq j \leq n} \sum_{i=1}^s |M_{ij}|.
\end{equation} One can check that $\max_{1 \leq j \leq n} \sum_{i=1}^s |M_{ij}| = \max_{v\in \Gamma} \left(\operatorname{wt}(v) + \deg(v)\right)$ where $\operatorname{wt}(v)$ is the weight of the vertex $v$. Hence, one could replace $\lambda_{\operatorname{max}}$ in the lemma above with with $w\coloneq \max_{v\in \Gamma} \left(\operatorname{wt}(v) + \deg(v)\right)$.
\end{rem}

\begin{prop}\label{prop-neg-def-coefficient-growth}
Let $M_3$ be a negative definite plumbed manifold on a tree graph $\Gamma$ with $s$ vertices, framing matrix $M$ and at least one vertex of degree greater than two.
Let $a \in 2\mathbb{Z}^s + \Vec{\delta}$ be a $\operatorname{Spin}^c$ representative and let $\eta$ be the maximal degree of the vertices of $\Gamma$. If $c_j$, for $j\in\mathbb{N}$, is the coefficient associated to $q^{\Phi_a+j}$ in $\widehat{Z}_a(M_3;q)$, then $|c_j|$ is bounded above by a polynomial in $j$ of degree $(\eta-1)s$.
\end{prop}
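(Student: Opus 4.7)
The strategy is to apply the triangle inequality to $c_j = \sum_{x \in D_j} \widehat{F}_{\Gamma, a}(x)$ and to bound separately (i) the number of lattice points in $D_j$ and (ii) the size of the integrand $|\widehat{F}_{\Gamma,a}(x)|$ on $D_j$. The first factor is already controlled by Lemma \ref{B-size}, which gives $|D_j| \leq (4\lambda_{\max} C_a j)^s$, a polynomial in $j$ of degree $s$. The remaining task is to bound $|\widehat{F}_{\Gamma,a}(x)|$ by a polynomial in $j$ whose degree depends on $\eta$ and $s$.

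First I would examine the single–vertex factors. Directly from the case analysis in the definition of $\widehat{F}_n$, one sees that $|\widehat{F}_n(r)| \leq C$ for $n \leq 2$ (since only finitely many values of $(n,r)$ occur), while for $n \geq 3$ the binomial bound $\binom{m}{k} \leq m^k / k!$ yields
\begin{equation}
\bigl|\widehat{F}_n(r)\bigr| \;\leq\; \tfrac{1}{2(n-3)!}\bigl(\tfrac{n+|r|}{2}-2\bigr)^{n-3} \;\leq\; C_n\, (1+|r|)^{n-3}.
\end{equation}
In particular, since every vertex has degree at most $\eta$, there is a uniform bound $|\widehat{F}_{\deg(v)}(r)| \leq C'(1+|r|)^{\eta-3}$ (using $\eta \geq 3$ from the hypothesis).

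Next I would translate this into a bound in $j$. For $x \in D_j$, the ellipsoid characterization used in the proof of Lemma \ref{B-size} gives $\|x\|_\infty = O(\sqrt{j})$, hence $|(2Mx+a)_v| \leq 2\|M\|_\infty \|x\|_\infty + |a|_\infty = O(\sqrt{j})$ for each vertex $v$. Taking the product over the $s$ vertices, one obtains
\begin{equation}
\bigl|\widehat{F}_{\Gamma,a}(x)\bigr| \;\leq\; \prod_{v \in V(\Gamma)} C'(1+|(2Mx+a)_v|)^{\eta-3} \;=\; O\!\bigl(j^{s(\eta-3)/2}\bigr).
\end{equation}
Combining with Lemma \ref{B-size} gives
\begin{equation}
|c_j| \;\leq\; |D_j|\cdot \max_{x\in D_j}\bigl|\widehat{F}_{\Gamma,a}(x)\bigr| \;=\; O\!\bigl(j^{s + s(\eta-3)/2}\bigr) \;=\; O\!\bigl(j^{s(\eta-1)/2}\bigr),
\end{equation}
which is bounded by a polynomial of degree $(\eta-1)s$ as claimed (in fact, by a tighter polynomial of degree $(\eta-1)s/2$).

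The routine steps are the binomial estimate and the ellipsoid bound on $\|x\|_\infty$, both of which are essentially one–line computations. The only step requiring mild care is the uniform passage from $|\widehat{F}_{\deg(v)}(r)| \leq C_v(1+|r|)^{\max(\deg(v)-3,0)}$ to a uniform polynomial bound in $\eta$ for vertices of low degree; this is harmless because $\max(n-3,0)\leq \eta-3$ for $\eta \geq 3$, and the hypothesis that $\Gamma$ has a vertex of degree $>2$ guarantees $\eta \geq 3$. No genuine obstacle arises, and the proposition follows.
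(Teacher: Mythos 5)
Your proposal is correct and follows essentially the same strategy as the paper's proof: bound $|c_j|$ by $|D_j|\cdot\max_{x\in D_j}|\widehat F_{\Gamma,a}(x)|$, use Lemma \ref{B-size} for the first factor, and bound each vertex factor $\widehat F_{\deg(v)}$ polynomially in $|(2Mx+a)_v|$ via the binomial coefficient. The only difference is that you use the radius $\|x\|=O(\sqrt{j})$ that the ellipsoid actually provides, whereas the paper's proof of the proposition uses a linear-in-$j$ radius, so you obtain the sharper exponent $(\eta-1)s/2$; since this is still dominated by a polynomial of degree $(\eta-1)s$, the stated proposition follows.
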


\begin{proof}
We have that 
\begin{equation}
    \widehat{Z}_a(M_3;q) = q^{\Phi_a}\sum_{j \in \mathbb{Z}}\left(\sum_{x \in D_j}  \widehat{F}_{\Gamma, a} (x)\right)q^j
\end{equation}
where $D_j \coloneq \{x \in \mathbb{Z}^s \mid 2\chi_a(x) = j\}$. From Lemma \ref{B-size} we see that $|D_j| \leq (4wC_aj)^s$ where $C_a = 2 - \frac{1}{4}a^TM^{-1}a$ and $w$ is as in Remark \ref{rem-1}.
Define 
\begin{equation}
    m \coloneq \max_{x \in D_j, v \in V(\Gamma)}|(2Mx+a)_v|
\end{equation}
 and $G : \mathbb{N} \to \mathbb{R}$ by 
\begin{equation}
    G(y) = 2\binom{\eta+y}{\eta-3}\,,
\end{equation}
 then we see that
 \begin{equation}
     |\widehat{F}_{\operatorname{deg}(v)}((2Mx+a)_v)| \leq G(m).
 \end{equation} This implies \begin{equation}\label{eq:s-power}
     |\widehat{F}_{\Gamma, a}(x)| \leq G(m)^s.
 \end{equation}
We claim that \begin{equation}\label{eq:G(m)}
    G(m) < G(2||a|| + 4w^2C_aj)
\end{equation} 
where $||\cdot||$ is the usual $\mathbb{R}^{s}$ norm.
Since $M$ is symmetric $||Mx||\leq \lambda_{\max}||x||$, so for any $x\in D_j$
\begin{equation}
    |(2Mx+a)_v|  \leq 2\lambda_{\max}||x|| + ||a|| \leq 2w||x|| + ||a|| 
\end{equation}
By Lemma \ref{B-size}, for $x \in D_j$, $x = \nu+x_0$ where $\nu = -\frac{1}{2}M^{-1}a$ and $x_0 \in S(0, 2wC_aj)$ where $S(0, R)$ is the solid ball at $0$ with radius $R$. Then 
\begin{equation}
    ||x|| = ||\nu+x_0|| \leq ||\nu|| + ||x_0|| \leq ||\nu|| + 2wC_aj.
\end{equation}
Putting all this together we see that
\begin{align*}
    m  &\leq 2w(||\nu|| + 2wC_aj) + ||a|| \\
    &\leq 2||a|| + 4w^2C_aj
\end{align*}
Since $G$ is a monotonically increasing function, the claim follows. Then \eqref{eq:s-power} and \eqref{eq:G(m)} imply 
\begin{equation}
    |\widehat{F}_{\Gamma, a}(x)| < G(2||a|| + 4w^2C_aj)^{s}.
\end{equation}
Putting all together we get  
\begin{align*}
   |c_j| &= \bigg| \sum_{x \in D_j}\widehat{F}_{\Gamma, a}(x) \bigg|\\
   &< (4wC_aj)^sG(2||a|| + 4w^2C_a j)^{s}\\
   &< 2\left[4wC_aj \binom{\eta+2||a|| + 4w^2C_a j} {\eta-3}\right]^s.
\end{align*}
Note that $G(2||a|| + 4w^2C_a j)$ is a polynomial\footnote{This follows basically from the fact that $$\binom{\eta+x}{\eta-3} < x^{\eta-2}.$$} in $j$ of degree $\eta-2$ and so $G(2||a|| + 4w^2C_aj)^{s}$ is a polynomial of degree $(\eta-2)s$. Thus $(4wC_aj)^sG(2||a|| + 4w^2C_aj)^{s}$ is a polynomial in $j$ of degree $(\eta-1)s$ and so we see that $c_j = \sum_{x \in D_j}\widehat{F}_{\Gamma, a}(x)$ is bounded above by a polynomial in $j$ of degree $(\eta-1)s$.\footnote{One can improve this to $(\eta-2)s$, by having a more careful/tighter choice of function $G$ used in the estimates.}
\end{proof}

\begin{cor}
Suppose $M_3$ is a negative definite plumbed manifold and $a \in \operatorname{Spin}^c(M_3)$, then $$\ceff(\widehat{Z}_a(M_3;q)/(q)_{\infty}) \leq 1.$$
\end{cor}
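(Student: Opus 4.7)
The plan is to deduce the corollary as a direct application of Lemma \ref{helper-lemma}. I would take $P(q) = 1/(q)_\infty$, whose coefficients are the partition numbers, and $Q(q) = q^{-\Phi_a}\widehat{Z}_a(M_3;q)$, so that the product $P(q)Q(q)$ coincides with $q^{-\Phi_a}\widehat{Z}_a(M_3;q)/(q)_\infty$; the overall monomial factor $q^{-\Phi_a}$ is harmless for $\ceff$, since it only shifts the sequence of exponents and does not affect the asymptotic growth of the coefficient sequence. The factor $P(q)$ has monotonically increasing coefficients satisfying $\log|a_n| \sim \pi\sqrt{2n/3}$, so the technical hypothesis $\limsup \log|a_n|/\sqrt{n} < \infty$ of Lemma \ref{helper-lemma} is met, while $\ceff(P) = 1$ was computed in Example \ref{exmp:q-pochhammer ceff}.

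The only substantive verification is that the coefficients of $Q(q)$ are majorized by a monotonically increasing polynomial. In the generic case, where $\Gamma$ has at least one vertex of degree greater than two, this is precisely the content of Proposition \ref{prop-neg-def-coefficient-growth}; any polynomial upper bound can be replaced by the polynomial obtained by taking absolute values of its coefficients, which is automatically monotone on $\mathbb{N}$ and still dominates $|c_j|$. In the excluded case, where every vertex of $\Gamma$ has degree at most two, the plumbing tree is a chain and the manifold is a lens space; inspection of the formulas for $\widehat{F}_n$ with $n \leq 2$ in the preceding definition shows that $\widehat{F}_{\Gamma,a}$ has finite support on $\mathbb{Z}^s$, so $q^{-\Phi_a}\widehat{Z}_a(M_3;q)$ is a Laurent polynomial whose coefficients are trivially dominated by a constant, which serves as the required monotone polynomial bound.

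With both hypotheses of Lemma \ref{helper-lemma} verified, its conclusion yields $\ceff(\widehat{Z}_a(M_3;q)/(q)_\infty) \leq \ceff(1/(q)_\infty) = 1$. The substantive work has already been carried out in Proposition \ref{prop-neg-def-coefficient-growth}, so I do not anticipate any genuine obstacle; the remaining steps (absorbing the fractional monomial prefactor, upgrading the polynomial bound to a monotone one, and disposing of the degenerate chain case) are routine bookkeeping required to fit the data into the template of Lemma \ref{helper-lemma}.
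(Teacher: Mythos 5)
Your proposal is correct and follows essentially the same route as the paper, whose proof is simply the one-line observation that the corollary follows from Proposition \ref{prop-neg-def-coefficient-growth} together with Lemma \ref{helper-lemma}; you have merely spelled out the bookkeeping (the harmless monomial prefactor, the monotone upgrade of the polynomial bound). Your separate treatment of the all-degree-$\leq 2$ chain case, which Proposition \ref{prop-neg-def-coefficient-growth} formally excludes, is a small but legitimate patch that the paper's terse proof glosses over.
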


\begin{proof}
    This follows immediately from Proposition \ref{prop-neg-def-coefficient-growth} and Lemma \ref{helper-lemma}.
\end{proof}

\section{Case studies}\label{sec:Case studies}

We conclude the analysis of $\ceff$ derived from the conjectural $+1/r$-surgery formula for $\widehat{Z}$ invariants on the posively plumbed Brieskorn spheres with case studies. We focus on a simpler case of $+1$-surgeries, but also include one example of $+\frac{1}{2}$-surgery, namely $\overline{\Sigma(2,3,13)}$. Comparing the resulting $\ceff$ with the Chern-Simons invariants given by \eqref{eq:CS invariants for Brieskorn spheres} could shed light on the conjectural relation \eqref{coj:z-hat at leading order} between the two. In turn, this will allow us to produce a set of counterexamples confirming our main Theorem \ref{thm:conjecture fails}, as well as differences of the two approaches (modular and resurgent) to $\widehat{Z}$ on the other side.
Our data for $\ceff$ is presented in numerical tables which contain the following:

\begin{itemize}
    \item Numerical values on the lower bounds of $\ceff$ for $\brs{s}{t}{st\pm 1}$, following the analysis in Section \ref{sec:Numerical analysis of ceff}
    \item Exact values of $\ceff$ for $\brs{s}{t}{st+1}$ from the analysis of Section \ref{sec:Mixed mock modularity and resurgence}
    \item Exact values of $\ceff$ coming from resurgence, based on \cite{ACDGO} as well as asymptotic results from Section \ref{sec:asympt}.
    \item In Table \ref{tab:exact values of ceff from modularity} we also present $\ceff$ for the half-index of the corresponding 3d $\mathcal{N}=2$ theory $T[M_3]$
\end{itemize}

\subsection{\texorpdfstring{$\overline{\Sigma(s,t,st+1)}$}{sigma}}

The numerical estimates for the lower bound of $m$ are shown in Table \ref{tab:numerics for m, case 2}.

\begin{table}[h!]
\centering
\resizebox{16cm}{!}{ 
\begin{tabular}{||c|ccccccccccccccccc||}
\hline
\backslashbox{$s$}{$t$} & 3 & 4 & . & . & . & . & . & . & . & . & . & . & . & . & . & . & . \\
\hline
 2 & 0.937 & . & . & . & . & . & . & . & . & . & . & . & . & . & . & . & . \\
 3 & . & 4.66 & . & . & . & . & . & . & . & . & . & . & . & . & . & . & . \\
 . & 2.90 & 6.02 & 7.88 & . & . & . & . & . & . & . & . & . & . & . & . & . & . \\
 . & . & . & . & 11.8 & . & . & . & . & . & . & . & . & . & . & . & . & . \\
 . & 4.84 & 8.37 & 11.0 & 13.8 & 16.4 & . & . & . & . & . & . & . & . & . & . & . & . \\
 . & . & 9.48 & . & 15.7 & . & 21.8 & . & . & . & . & . & . & . & . & . & . & . \\
 . & 6.63 & . & 14.1 & 17.6 & . & 24.4 & 27.8 & . & . & . & . & . & . & . & . & . & . \\
 . & . & 11.8 & . & . & . & 27.0 & . & 34.3 & . & . & . & . & . & . & . & . & . \\
 . & 8.32 & 13.0 & 17.2 & 21.5 & 25.5 & 29.6 & 33.6 & 37.5 & 41.2 & . & . & . & . & . & . & . & . \\
 . & . & . & . & 23.3 & . & 32.2 & . & . & . & 48.4 & . & . & . & . & . & . & . \\
 . & 9.96 & 15.3 & 20.2 & 25.2 & 30.0 & 34.7 & 39.2 & 43.6 & 47.8 & 51.8 & 55.6 & . & . & . & . & . & . \\
 . & . & 16.5 & . & 27.1 & . & . & . & 46.5 & . & 55.1 & . & 62.7 & . & . & . & . & . \\
 . & 11.6 & . & 23.3 & . & . & 39.6 & 44.6 & . & . & 58.2 & . & 66.0 & 69.5 & . & . & . & . \\
 . & . & 18.8 & . & 30.8 & . & 42.0 & . & 52.1 & . & 61.2 & . & 69.1 & . & 75.8 & . & . & . \\
 . & 13.2 & 20.0 & 26.3 & 32.6 & 38.6 & 44.3 & 49.7 & 54.8 & 59.6 & 64.0 & 68.2 & 72.0 & 75.5 & 78.7 & 81.6 & . & . \\
 . & . & . & . & 34.4 & . & 46.6 & . & . & . & 66.8 & . & 74.7 & . & . & . & 86.8 & . \\
 . & 14.7 & 22.3 & 29.3 & 36.2 & 42.7 & 48.9 & 54.6 & 60.0 & 64.9 & 69.4 & 73.5 & 77.2 & 80.6 & 83.7 & 86.5 & 89.0 & 91.3 \\
 . & . & 23.4 & . & . & . & 51.1 & . & 62.4 & . & 71.9 & . & 79.7 & . & . & . & 91.1 & . \\
 \hline
\end{tabular} 
}
\caption{The lower bounds on $m$ derived from the numerical analysis; $M_3=\overline{\Sigma(s,t,st+1)}$. The first $10^4$ terms in $\widehat{Z}$ are being used. Using this, $\ceff$ for $\widehat{Z}$ invariant can be estimated by the conjectural formula $\ceff = \frac{m^2}{4st(st+1)}$.}
\label{tab:numerics for m, case 2}
\end{table}

\begin{table}
\centering
\resizebox{16cm}{!}{ 
\begin{tabular}{||c|cccccccccccccccccc||}
\hline
\backslashbox{$s$}{$t$} & 3 & 4 & . & . & . & . & . & . & . & . & . & . & . & . & . & . & . & . \\
\hline
2 & 1.13 & . & . & . & . & . & . & . & . & . & . & . & . & . & . & . & . & . \\
3 & . & 1.84 & . & . & . & . & . & . & . & . & . & . & . & . & . & . & . & . \\
. & 1.32 & 1.80 & 1.89 & . & . & . & . & . & . & . & . & . & . & . & . & . & . & . \\
. & . & . & . & 1.90 & . & . & . & . & . & . & . & . & . & . & . & . & . & . \\
. & 1.41 & 1.70 & 1.78 & 1.86 & 1.89 & . & . & . & . & . & . & . & . & . & . & . & . & . \\
. & . & 1.66 & . & 1.82 & . & 1.89 & . & . & . & . & . & . & . & . & . & . & . & . \\
. & 1.44 & . & 1.73 & 1.79 & . & 1.86 & 1.88 & . & . & . & . & . & . & . & . & . & . & . \\
. & . & 1.63 & . & . & . & 1.83 & . & 1.86 & . & . & . & . & . & . & . & . & . & . \\
. & 1.45 & 1.62 & 1.70 & 1.76 & 1.79 & 1.81 & 1.83 & 1.84 & 1.83 & . & . & . & . & . & . & . & . & . \\
. & . & . & . & 1.74 & . & 1.79 & . & . & . & 1.80 & . & . & . & . & . & . & . & . \\
. & 1.45 & 1.60 & 1.67 & 1.72 & 1.76 & 1.78 & 1.78 & 1.79 & 1.78 & 1.77 & 1.76 & . & . & . & . & . & . & . \\
. & . & 1.59 & . & 1.72 & . & . & . & 1.76 & . & 1.74 & . & 1.71 & . & . & . & . & . & . \\
. & 1.45 & . & 1.66 & . & . & 1.74 & 1.75 & . & . & 1.72 & . & 1.68 & 1.65 & . & . & . & . & . \\
. & . & 1.58 & . & 1.69 & . & 1.73 & . & 1.72 & . & 1.69 & . & 1.65 & . & 1.60 & . & . & . & . \\
. & 1.44 & 1.58 & 1.64 & 1.68 & 1.71 & 1.71 & 1.71 & 1.70 & 1.68 & 1.66 & 1.64 & 1.62 & 1.59 & 1.56 & 1.54 & . & . & . \\
. & . & . & . & 1.68 & . & 1.70 & . & . & . & 1.64 & . & 1.59 & . & . & . & 1.48 & . & . \\
. & 1.44 & 1.57 & 1.63 & 1.67 & 1.68 & 1.69 & 1.68 & 1.66 & 1.64 & 1.62 & 1.59 & 1.56 & 1.53 & 1.51 & 1.48 & 1.45 & 1.43 & . \\
. & . & 1.56 & . & . & . & 1.67 & . & 1.64 & . & 1.59 & . & 1.54 & . & . & . & 1.42 & . & 1.38 \\
 \hline
\end{tabular}
}
\caption{The numerical estimates of the lower bound on $\ceff$ for the half-index of $T\left[\overline{\Sigma(s,t,st+1)}\right]$. Note that these values are obtained from the above Table \ref{tab:numerics for ceff, case 2} via $\ceff^{T[M_3]} := 1+24\, \frac{m^2}{4st(st+1)} $, see \cite{Gukov:2023cog} for the details.}
\label{tab:numerics for ceff, case 2}
\end{table}

This class of examples admits a mixed mock modular description (see Section \ref{sec:Mixed mock modularity and resurgence}). The analysis from that section allows comparison the numerical data with $\ceff$ obtained by using modular completion of $\widehat{Z}$, as shown in Table \ref{tab:exact values of m from modularity}. One can see that it gives an excellent agreement with Table \ref{tab:numerics for m, case 2}. By carefully examining the values of $m$ in Table \ref{tab:exact values of m from modularity}, we see that the majority of cases (where $m$ is not an integer) fails to relate to any Chern-Simons invariant, even with $l\neq 0$. This confirms Theorem \ref{thm:conjecture fails}.

\begin{table}
\centering
\resizebox{16cm}{!}{
\begin{tabular}{||c|cccccccccc||}
\hline
 \backslashbox{$s$}{$t$} & 3 & 4 & . & . & . & . & . & . & . & . \\
 \hline
 2 & 1 & . & . & . & . & . & . & . & . & . \\
 3 & . & $3 \sqrt{\frac{13}{5}}$ & . & . & . & . & . & . & . & . \\
 . & 3 & $16 \sqrt{\frac{2}{13}}$ & $\sqrt{\frac{203}{3}}$ & . & . & . & . & . & . & . \\
 . & . & . & . & $\sqrt{\frac{2139}{14}} $& . & . & . & . & . & . \\
 . & 5 & $22 \sqrt{\frac{3}{19}} $& $\sqrt{\frac{5191}{39}} $& $\frac{48}{\sqrt{11}} $& $\frac{\sqrt{\frac{5977}{5}}}{2} $& . & . & . & . & . \\
 . & . & $5 \sqrt{\frac{43}{11}} $& . & $\sqrt{\frac{15457}{57}}$ & . & $\frac{\sqrt{4769}}{3}$ & . & . & . & . \\
 . & $\sqrt{\frac{95}{2}}$ & . & $\sqrt{\frac{3737}{17}}$ & $46 \sqrt{\frac{7}{43}}$ & . & $64 \sqrt{\frac{10}{61}}$ & $\sqrt{\frac{30587}{35}}$ & . & . & . \\
 . & . & $\sqrt{\frac{2139}{14}}$ & . & . & . & $\sqrt{\frac{84277}{102}}$ & . & $\frac{\sqrt{\frac{59969}{11}}}{2}$ & . & . \\
 . & $7 \sqrt{\frac{23}{15}}$ & $34 \sqrt{\frac{5}{31}}$ & $3 \sqrt{\frac{255}{7}}$ & $56 \sqrt{\frac{26}{159}}$ & $\frac{3 \sqrt{\frac{2613}{2}}}{4}$ & $\frac{26 \sqrt{37}}{5}$ &
   $\sqrt{\frac{168121}{129}}$ & $\frac{400}{\sqrt{97}}$ & $\frac{\sqrt{\frac{36593}{2}}}{3}$ & . \\
 . & . & . & . & $17 \sqrt{\frac{61}{29}}$ & . & $\sqrt{\frac{48705}{41}}$ & . & . & . & $\sqrt{\frac{190057}{65}}$ \\
 \hline
\end{tabular}
}
\caption{The exact values of $m$ for $\overline{\Sigma(s,t,st+1)}$, obtained from the analysis of the modular completion of $\widehat{Z}$. Note that only three cases, namely $\overline{\Sigma(2,3,7)}$, $\overline{\Sigma(2,5,11)}$ and $\overline{\Sigma(2,7,15)}$ correspond to integer values of $m$. In all other cases Conjecture \ref{coj:z-hat at leading order} fails, as the value $\ceff$ is not related to any Chern-Simons invariant.}
\label{tab:exact values of m from modularity}
\end{table}

\begin{table}
\centering
\begin{tabular}{||c|cccccccccc||}
\hline
\backslashbox{$s$}{$t$} & 3 & 4 & . & . & . & . & . & . & . & . \\
 \hline
2 & . & . & . & . & . & . & . & . & . & . \\
 3 & 1.00 & . & . & . & . & . & . & . & . & . \\
 . & . & 4.84 & . & . & . & . & . & . & . & . \\
 . & 3.00 & 6.28 & 8.23 & . & . & . & . & . & . & . \\
 . & . & . & . & 12.4 & . & . & . & . & . & . \\
 . & 5.00 & 8.74 & 11.5 & 14.5 & 17.3 & . & . & . & . & . \\
 . & . & 9.89 & . & 16.5 & . & 23.0 & . & . & . & . \\
 . & 6.89 & . & 14.8 & 18.6 & . & 25.9 & 29.6 & . & . & . \\
 . & . & 12.4 & . & . & . & 28.7 & . & 36.9 & . & . \\
 . & 8.67 & 13.7 & 18.1 & 22.6 & 27.1 & 31.6 & 36.1 & 40.6 & 45.1 & . \\
 . & . & . & . & 24.7 & . & 34.5 & . & . & . & 54.1 \\
 \hline
\end{tabular}
    \caption{The three-digit approximation of the numbers from Table \ref{tab:exact values of m from modularity}.}
    \label{tab:exact values of m from modularity approximated}
\end{table}

\begin{table}
\centering
\renewcommand*{\arraystretch}{1.75}
\begin{tabular}{||c|cccccccccc||}
\hline
\backslashbox{$s$}{$t$} & 3 & 4 & . & . & . & . & . & . & . & . \\
 \hline
2 & $\frac{8}{7}$ & . & . & . & . & . & . & . & . & . \\
3 & . & $\frac{19}{10}$ & . & . & . & . & . & . & . & . \\
. & $\frac{82}{55}$ & $\frac{129}{65}$ & $\frac{59}{30}$ & . & . & . & . & . & . & . \\
. & . & . & . & $\frac{139}{70}$ & . & . & . & . & . & . \\
. & $\frac{12}{7}$ & $\frac{265}{133}$ & $\frac{361}{182}$ & $\frac{769}{385}$ & $\frac{279}{140}$ & . & . & . & . & . \\
. & . & $\frac{87}{44}$ & . & $\frac{757}{380}$ & . & $\frac{503}{252}$ & . & . & . & . \\
. & $\frac{11}{6}$ & . & $\frac{203}{102}$ & $\frac{1289}{645}$ & . & $\frac{2561}{1281}$ & $\frac{839}{420}$ & . & . & . \\
. & . & $\frac{139}{70}$ & . & . & . & $\frac{2377}{1190}$ & . & $\frac{1319}{660}$ & . & . \\
. & $\frac{104}{55}$ & $\frac{681}{341}$ & $\frac{307}{154}$ & $\frac{5827}{2915}$ & $\frac{703}{352}$ & $\frac{3849}{1925}$ & $\frac{3781}{1892}$ & $\frac{6401}{3201}$ & $\frac{1979}{990}$ & . \\
. & . & . & . & $\frac{579}{290}$ & . & $\frac{1147}{574}$ & . & . & . & $\frac{2859}{1430}$ \\
 \hline
\end{tabular}

\caption{The exact values of $\ceff$ for $T\left[\overline{\Sigma(s,t,st+1)}\right]$, obtained from the analysis of the modular completion of $\widehat{Z}$. Note that only three cases, namely $\overline{\Sigma(2,3,7)}$, $\overline{\Sigma(2,5,11)}$ and $\overline{\Sigma(2,7,15)}$ correspond to integer values of $m$. In all other cases Conjecture \ref{coj:z-hat at leading order} fails, as the value $\ceff$ is not related to any Chern-Simons invariant.}
\label{tab:exact values of ceff from modularity}
\end{table}

We can also see how fast the coefficients grow -- some examples are shown in Figure \ref{fig:stst+1_combined}.

\clearpage
\begin{figure}
    \centering
    \includegraphics[width=0.9\linewidth]{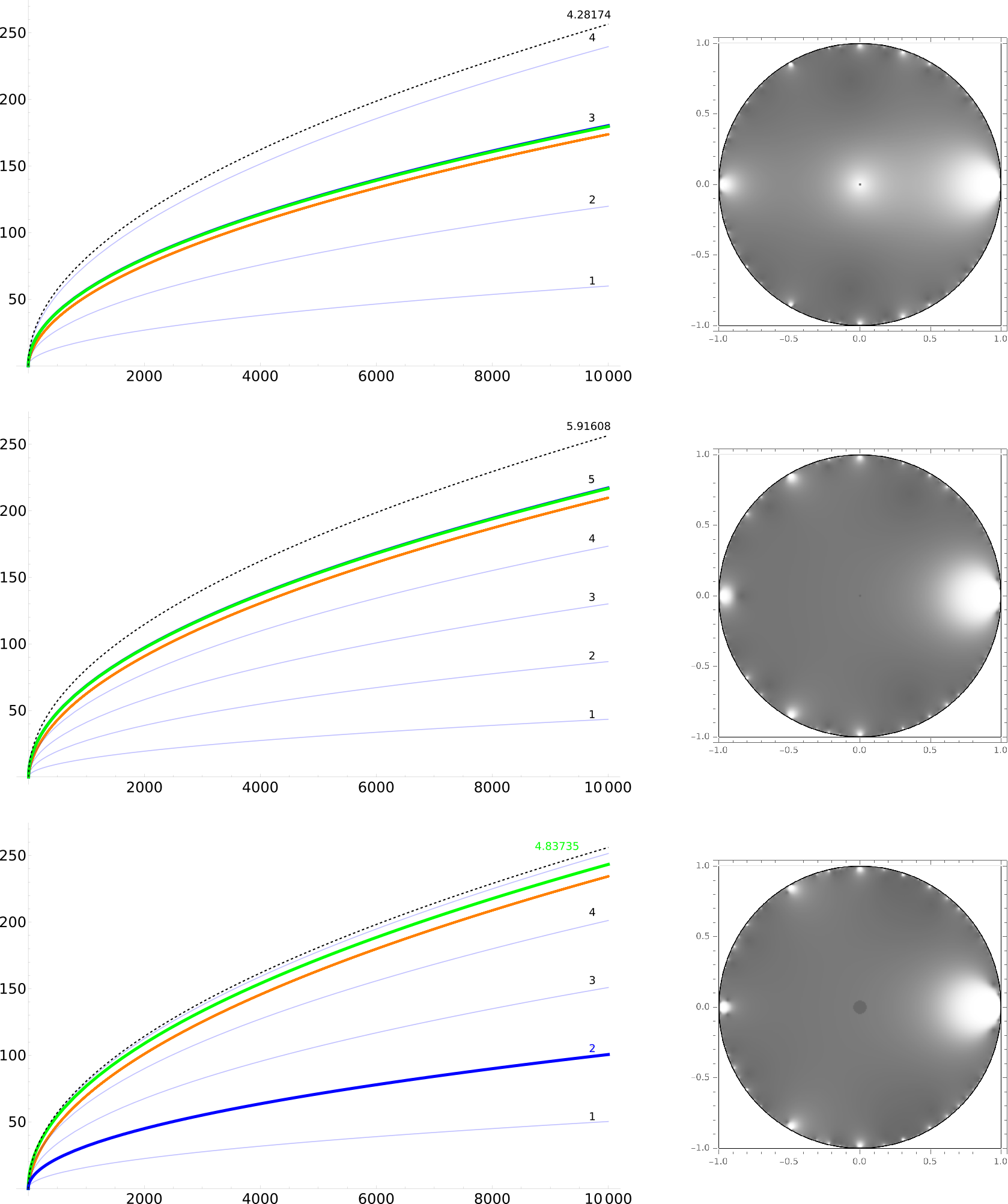}
    \caption{$c_{\rm eff}$ data for $\overline{\Sigma(2,5,11)}$,  $\overline{\Sigma(2,7,15)}$ and $\overline{\Sigma(3,4,13)}$ (from top to bottom). The numerical estimates for $m$ are shown by \textcolor{orange}{orange} lines.
    The exact value of $m$ from mixed mock modularity is shown in \textcolor{green}{green}, and the value of $m$ obtained from resurgence is shown in \textcolor{blue}{blue}.
    In the first two cases with $m=3$ and $m=5$, the green and blue lines coincide. For $\brs{3}{4}{13}$ the value from resurgence is significantly smaller than the actual $\ceff$. We can also see that $\brs{3}{4}{13}$ provides a counterexample to Conjecture \ref{coj:z-hat at leading order} since $m$ is an irrational number ($m=3\sqrt{\frac{13}{5}}\sim 4.8375$, see Table \ref{tab:exact values of m from modularity}) and is not related to any Chern-Simons invariant of a flat connection.
    }
    \label{fig:stst+1_combined}
\end{figure}

\clearpage
\begin{figure}
    \centering
    \includegraphics[width=0.9\linewidth]{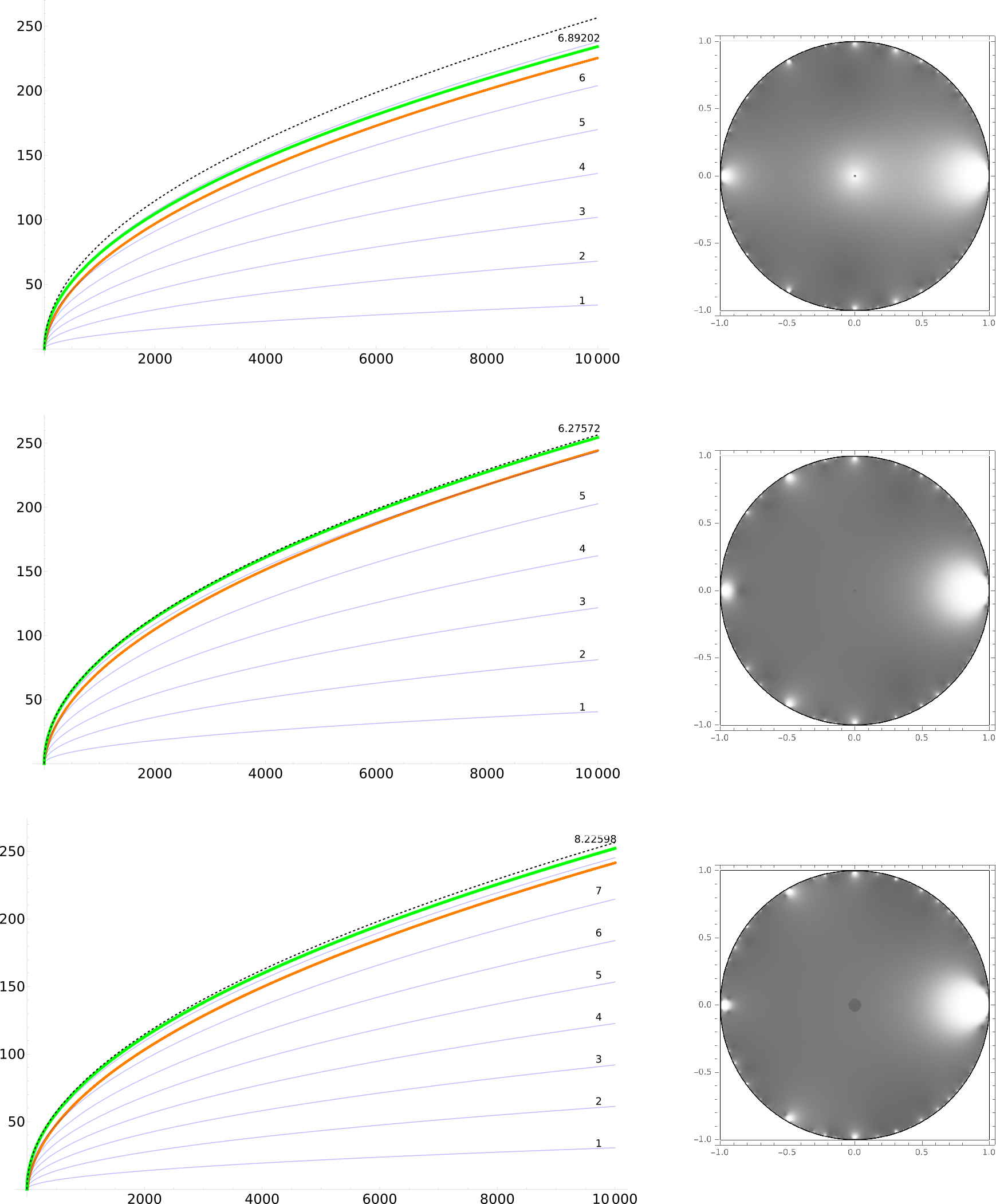}
    \caption{$c_{\rm eff}$ data for $\overline{\Sigma(2,9,19)}$,  $\overline{\Sigma(3,5,16)}$ and $\overline{\Sigma(4,5,21)}$ (from top to bottom). The numerical estimates for $m$ are shown by \textcolor{orange}{orange} lines.
    The exact value of $m$ from mixed mock modularity is shown in \textcolor{green}{green}.
    All cases show an excellent agreement of the mixed modular analysis and the numerics. On the other hand, they provide another set of counterexamples to Conjecture \ref{coj:z-hat at leading order} since $m$ is an irrational number and is not related to any Chern-Simons invariant of a flat connection, see Table \ref{tab:exact values of m from modularity}.
    }
    \label{fig:stst+1_combined2}
\end{figure}

\subsection{\texorpdfstring{$\overline{\Sigma(s,t,st-1)}$}{sigma}}

For this class of examples, a mock modular description is not known, and we are only left to rely on the numerical estimates. Note that the resurgence analysis has been performed independently in \cite{ACDGO}, and we can also compare our estimates with the values found there (although based on $\overline{\Sigma(s,t,st+1)}$, we know that the value coming from resurgence analysis may not produce the actual $\ceff$). Table \ref{tab:numerics for m, case 1} shows the numerical estimates of $m$.
We further compare the numerical results for the regularized +1 surgery formula with the resurgence -- the detailed comparative results for some small values of $s,t$ are shown in Figure \ref{fig:stst-1_combined}.

\begin{table}[h!]
\resizebox{16cm}{!}{ 
\begin{tabular}{||c|ccccccccccccccccc||}
\hline
\backslashbox{$s$}{$t$} & 3 & 4 & . & . & . & . & . & . & . & . & . & . & . & . & . & . & . \\
 \hline
 2 & 0.951 & . & . & . & . & . & . & . & . & . & . & . & . & . & . & . & . \\
 3 & . & 4.34 & . & . & . & . & . & . & . & . & . & . & . & . & . & . & . \\
 . & 2.90 & 5.63 & 7.50 & . & . & . & . & . & . & . & . & . & . & . & . & . & . \\
 . & . & . & . & 11.4 & . & . & . & . & . & . & . & . & . & . & . & . & . \\
 . & 4.70 & 7.97 & 10.6 & 13.4 & 16.0 & . & . & . & . & . & . & . & . & . & . & . & . \\
 . & . & 9.09 & . & 15.3 & . & 21.4 & . & . & . & . & . & . & . & . & . & . & . \\
 . & 6.38 & . & 13.7 & 17.2 & . & 24.0 & 27.3 & . & . & . & . & . & . & . & . & . & . \\
 . & . & 11.4 & . & . & . & 26.6 & . & 33.8 & . & . & . & . & . & . & . & . & . \\
 . & 8.02 & 12.6 & 16.8 & 21.0 & 25.1 & 29.2 & 33.1 & 36.9 & 40.7 & . & . & . & . & . & . & . & . \\
 . & . & . & . & 22.9 & . & 31.7 & . & . & . & 47.8 & . & . & . & . & . & . & . \\
 . & 9.62 & 14.9 & 19.8 & 24.8 & 29.5 & 34.2 & 38.7 & 43.0 & 47.2 & 51.2 & 55.0 & . & . & . & . & . & . \\
 . & . & 16.1 & . & 26.6 & . & . & . & 45.9 & . & 54.4 & . & 62.0 & . & . & . & . & . \\
 . & 11.2 & . & 22.9 & . & . & 39.1 & 44.0 & . & . & 57.5 & . & 65.2 & 68.7 & . & . & . & . \\
 . & . & 18.4 & . & 30.3 & . & 41.4 & . & 51.5 & . & 60.5 & . & 68.3 & . & 75.0 & . & . & . \\
 . & 12.8 & 19.5 & 25.9 & 32.1 & 38.1 & 43.8 & 49.1 & 54.2 & 58.9 & 63.3 & 67.4 & 71.2 & 74.6 & 77.8 & 80.7 & . & . \\
 . & . & . & . & 33.9 & . & 46.0 & . & . & . & 66.0 & . & 73.9 & . & . & . & 85.9 & . \\
 . & 14.4 & 21.8 & 28.8 & 35.7 & 42.2 & 48.3 & 54.0 & 59.3 & 64.1 & 68.6 & 72.7 & 76.4 & 79.8 & 82.8 & 85.6 & 88.1 & 90.4 \\
 . & . & 23.0 & . & . & . & 50.5 & . & 61.7 & . & 71.1 & . & 78.8 & . & . & . & 90.2 & . \\
 \hline
\end{tabular}
}

\caption{The lower bounds on $m$ derived from the numerical analysis; $M_3=\overline{\Sigma(s,t,st-1)}$.}
\label{tab:numerics for m, case 1}
\end{table}

\begin{figure}
    \centering
    \includegraphics[width=0.9\linewidth]{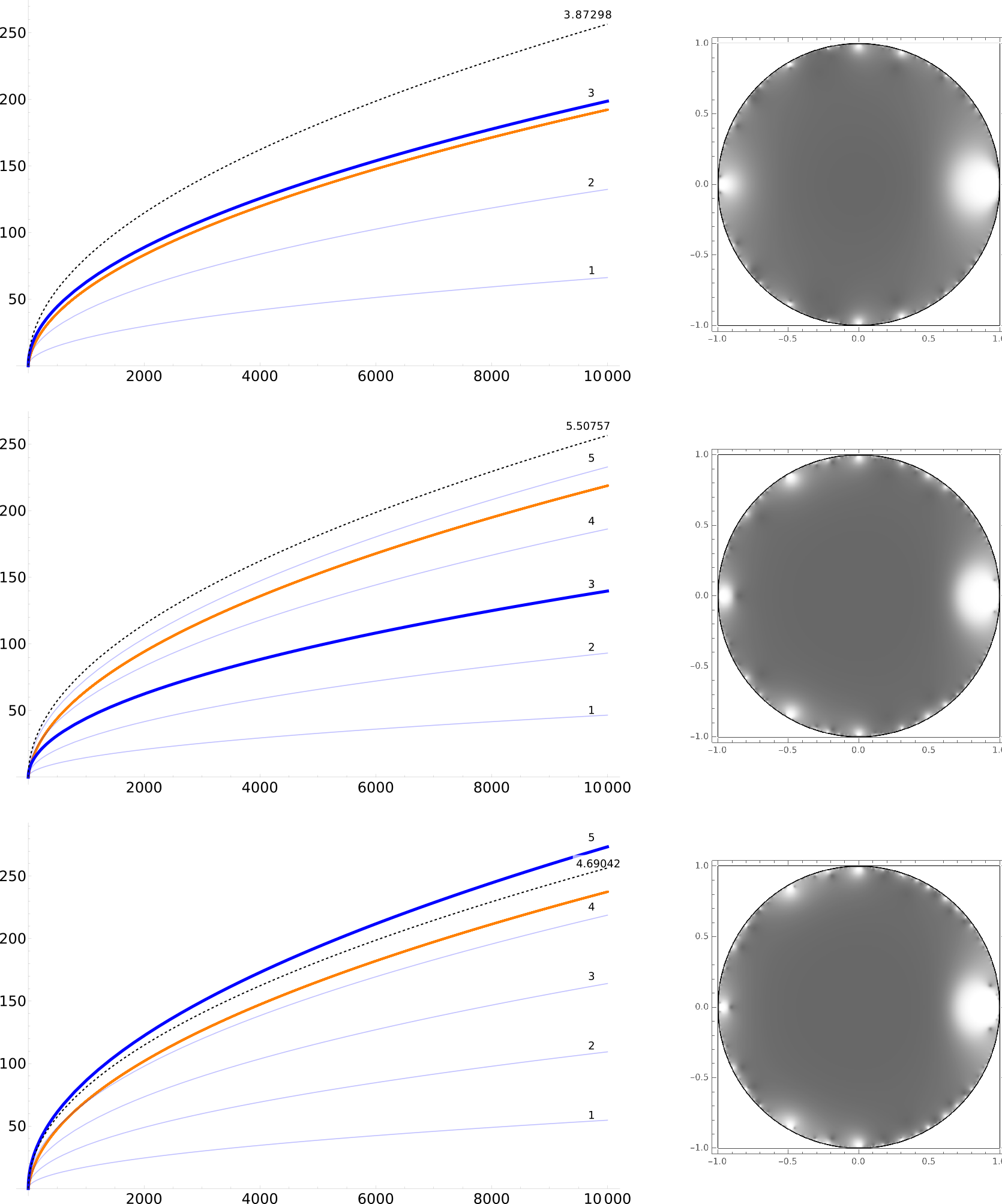}
    \caption{$c_{\rm eff}$ data for $\overline{\Sigma(2,5,9)}$,  $\overline{\Sigma(2,7,13)}$ and $\overline{\Sigma(3,4,11)}$ (from top to bottom). The numerical estimates for $m$ are shown by \textcolor{orange}{orange} lines, and the value of $m$ obtained from resurgence is shown in \textcolor{blue}{blue}. The first case gives a nice agreement with the resurgence, while the other two show important discrepancies. For $\overline{\Sigma(2,7,13)}$, the resurgence value lies below the numerical estimate, while for $\overline{\Sigma(3,4,11)}$ it lies above the upper bound and thus cannot be consistent with the regularized $+1$-surgery formula \eqref{eq:Zhat regularized +1 surgery}. $\overline{\Sigma(3,4,11)}$ is another counterexample to Conjecture \ref{coj:z-hat at leading order}, since
    there are no integer values of $m$ between the numerical estimate and the upper bound.}
    \label{fig:stst-1_combined}
\end{figure}

\subsection{\texorpdfstring{$\overline{\Sigma(2,3,13)}$}{sigma}}

We also include one example with dominant singularity not located at $q=1$. Numerical analysis shows that in general, for higher values of $r$ the position of the dominant singularity will change as a function of $r$. The simplest case is $\overline{\Sigma(2,3,13)}$. In this case we find the dominant singularities at $q=\mathrm{e}^{-2\pi i\frac{2}{5}}$ and $q=\mathrm{e}^{-2\pi i \frac{3}{5}}$, similarly to the case shown in Figure \ref{fig:singular examples}, left.
We can bring the dominant singularity to $q=1$ by dividing the $\widehat{Z}(\overline{\Sigma(2,3,13)},\tau)$ by a Dedekind eta function $\eta(q)$. Using modularity we see that the asymptotic behavior of $\widehat{Z}(\overline{\Sigma(2,3,13)};\mathrm{e}^{2\pi i \tau})/ \eta(q)$ is given by 
\begin{equation}
    \frac{\widehat{Z}(\overline{\Sigma(2,3,13)};\mathrm{e}^{2\pi i \tau})}{\eta(q)} \sim \mathrm{e}^{\frac{7 i \pi }{78 \tau}} \hspace{1cm} \text{ as } q=\mathrm{e}^{2\pi i \tau} \rightarrow 1 .  
\end{equation}
This implies that the asymptotic behavior of the coefficients of the $q$-series $\widehat{Z}(\overline{\Sigma(2,3,13)};\mathrm{e}^{2\pi i \tau})/ \eta(q)$ is given by $a_{n} \sim  \mathrm{e}^{\sqrt{ \frac{2 \pi^{2}}{3} \frac{14}{13} n  }}$. Therefore the $\ceff$ is given by
\begin{equation}
    \ceff = \frac{14}{13}.
\end{equation} 
\begin{figure}[H]
    \centering
    \includegraphics[width=0.7\linewidth]{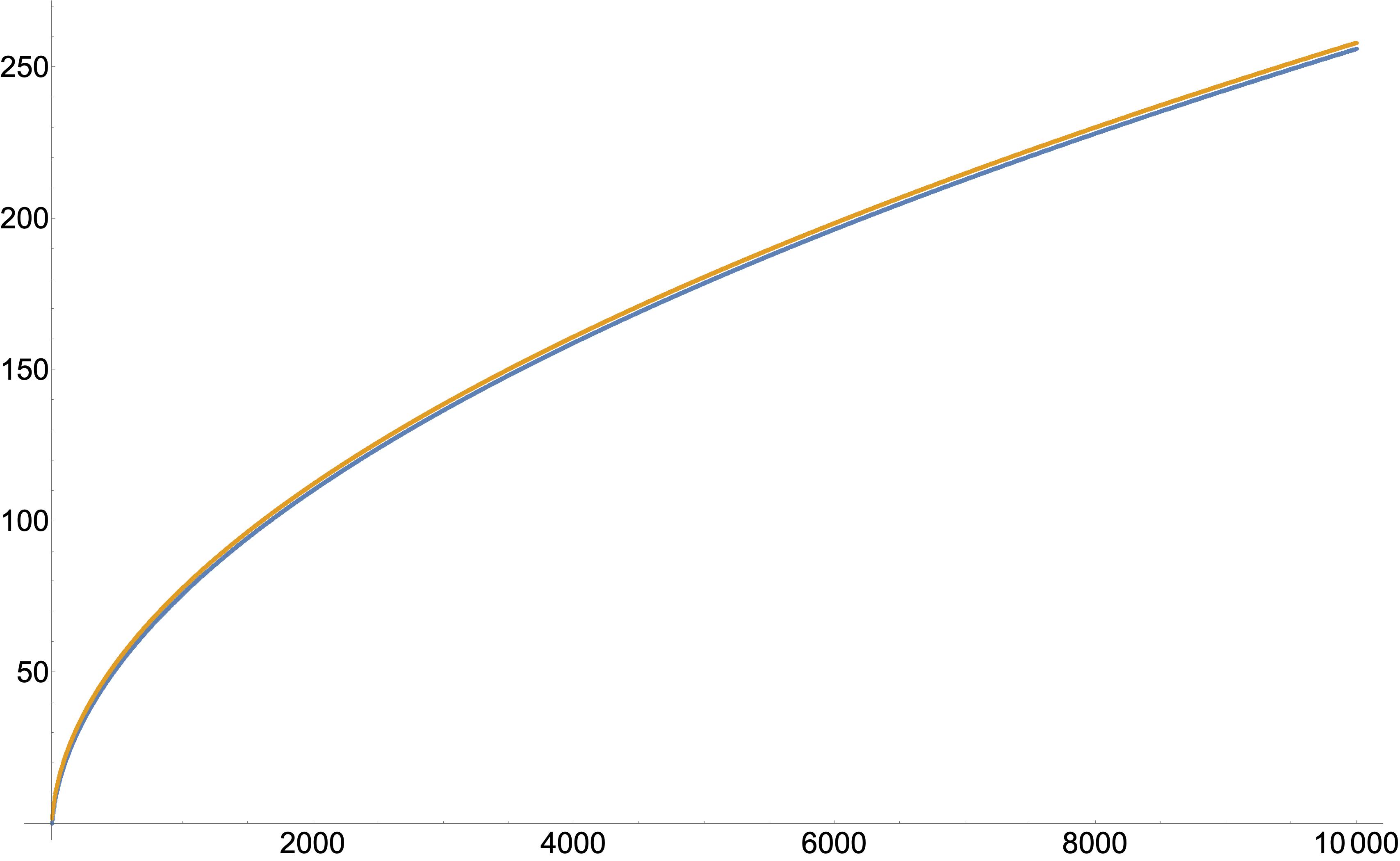}
    \caption{Comparison of coefficients of $\widehat{Z}(\overline{\Sigma(2,3,13)};q)/ \eta(q)$ (in blue) and $ \mathrm{e}^{\sqrt{ \frac{2 \pi^{2}}{3} \frac{14}{13} n  }} $ (in orange).}
    \label{ceff2313byeta}
\end{figure}

\section{Discussion of the results}

In this paper we focused on the conjectural form of $\zhat$ invariants for Brieskorn spheres $\brs{s}{t}{rst\pm1}$ from the regularized $+1/r$-surgery formula (the same approach to $\zhat$ was considered, for example, in \cite{Cheng:2024vou}). Having studied the asymptotic growth of its coefficients encoded by the effective central charge $\ceff\in\mathbb{R}$, we were able 
to highlight certain distinctions with another, independent proposal \cite{ACDGO} rooted in resurgence. The more subtle analysis of the mock modular structure of the regularized surgery showed the mutual incompatibility of the two proposals; as a result, they also produce very different values of $\ceff$ for the same 3-manifold. This is a very interesting (and somewhat surprising) result -- we anticipate that these findings are an important step towards a comprehensive definition of $\widehat{Z}$ for a general positive definite plumbed 3-manifold. Among the examples where the two proposals agree on $\ceff$ are surgeries on the trefoil knot, as well as $\overline{\Sigma(2,5,9)}$, $\overline{\Sigma(2,5,11)}$ and $\overline{\Sigma(2,7,15)}$. On the contrary, for $\overline{\Sigma(3,4,11)}$ the resurgent method produces $\ceff$ which exceeds the upper bound using our approach, while for $\brs{3}{4}{13}$ the corresponding values of $\ceff$ differ significantly. Many other case studies produce similar inconsistencies.

Apart from all that, we revealed numerous counterexamples to the expected relation between $\ceff$ and Chern-Simons invariants of flat connections -- assuming, of course, the surgery prescription for $\widehat{Z}$ which we use throughout the paper. Among all such cases, the first one happens for $\overline{\Sigma(3,4,11)}$, shown on the bottom plot of Figure \ref{fig:stst-1_combined}. In this case there are no integer values of $m$ lying between the lower and upper bounds, indicating that $\ceff$ is not related to any Chern-Simons invariant.
The same phenomenon happens for $\overline{\Sigma(2,2k+1,4k+1)}$, starting from $k=5$. As for the $+1$-surgery on a negative torus knot, the first such example is $\overline{\Sigma(3,5,16)}$.
Another notable example is $\overline{\Sigma(4,5,21)}$, where
the only integer value of $m$ within the bounded region is $m=8$.
However, it is divisible by $4$ and therefore not a Chern-Simons invariant. It is plausible that in these cases $\ceff$ may correspond to ``flat connections at infinity'' \cite{Gukov:2024vbr}; it would be interesting to investigate this in future work.

In order to confirm these numerical suspicions with the exact values of $\ceff$, we used the mock modular properties of the conjectural form of $\widehat{Z}$ for $\brs{s}{t}{rst+1}$. By using the technique of modular completions, we were able to derive asymptotics of all terms in the $S$-transformation of $\widehat{Z}$, approximated by an appropriate regularized theta series -- in agreement with the numerical estimates for $\ceff$ for this class of manifolds.
Interestingly, we also found that $\widehat{Z}$ decomposes into a number of summands, not all of which correspond to non-abelian flat connections.

Summing up, our findings put a number of current conjectures about $\widehat{Z}$ invariants of positive definite plumbed manifolds to a strong test. We keep their interpretation open and leave it for future much-anticipated research.

\section{Acknowledgements}
The authors would like to thank Griffen Adams, Miranda Cheng, Ioana Coman, Ovidiu Costin, Gerald V. Dunne,  Sergei Gukov and O\v{g}uz \"Oner for very helpful discussions.
While writing this manuscript, the authors learned that \cite{ACDGO} was soon to be published, and are especially grateful to Griffen Adams, Ovidiu Costin, Gerald V. Dunne, Sergei Gukov and O\v{g}uz \"Oner for agreeing to coordinate publications.
The authors would like to thank Sergei Gukov again for his mentorship and encouragement to pursue this direction.
The work of DP is supported by a Sherman Fairchild Postdoctoral Fellowship sponsored by the Walter Burke Institute for Theoretical Physics.
The work of MJ is supported by the Walter Burke Institute for Theoretical
Physics, the U.S. Department of Energy, Oﬃce of Science, Oﬃce of High Energy Physics,
under Award No. DE-SC0011632. The work of SH is supported by the FirstRand FNB 2020 Fund Education Scholarship.

\appendix
\section{Modular \texorpdfstring{$S$}{S}-transformation of \texorpdfstring{$\widehat{\Theta}_{A, \mu + b , b , c_{1}, c_{2} }$}{Theta}}\label{stransformofnonholomorphicsec}
Suppose $A$ is a $2 \times 2$ indefinite matrix, $\mu \in L^{*}$, $2 b \in L^{*}$, $B(x,y)= x^{T} A y$, $|n|^{2}= B(n,n)$  and $c_{1}, c_{2} \in \ZZ^{2} $ such that $|c_{i}|<0$. The modular $S$-transform of $\widehat{\Theta}_{A, \mu + b , b , c_{1}, c_{2} }(\tau, \bar{\tau})$ is given in \cite{Zwegers2008MockTF}, 
\begin{equation}
    \widehat{\Theta}_{A, \mu + b , b , c_{1}, c_{2} }(\tau, \bar{\tau}) = \frac{- \tau^{-1} \exp(2 \pi i B(\mu + b, b))}{\sqrt{|\det(A)|}} \sum_{ \lambda \in L^{*} } \widehat{\Theta}_{A, \lambda + b ,  -\mu - b , c_{1}, c_{2}}(-\frac{1}{\tau},- \frac{1}{\bar{\tau}}).
\end{equation}
In this appendix we will simplify this equation. Let's look at the term on the right hand side of above equation,
\begin{align}
    & \exp(2 \pi i B(\mu + b, b)) \sum_{ \lambda \in L^{*} } \widehat{\Theta}_{A, \lambda + b ,  -\mu - b , c_{1}, c_{2}}(-\frac{1}{\tau},- \frac{1}{\bar{\tau}}), \endline
    = & \exp(2 \pi i B(\mu + b, b)) \sum_{ \lambda \in L^{*} } \sum_{n \ZZ^{2}} \hat{\rho}^{c_{1}, c_{2}}( n + \lambda + b, \tau, \bar{\tau}) \exp(2 \pi i B(n + \lambda+ b, - \mu - b)) \tilde{q}^{\frac{|n+ \lambda + b |^{2}}{2}}, \endline
    = &  \sum_{ \lambda \in L^{*} } \sum_{n \ZZ^{2}} \hat{\rho}^{c_{1}, c_{2}}( n + \lambda+ b, \tau, \bar{\tau}) \exp(2 \pi i B(n + \lambda, - \mu - b)) \tilde{q}^{\frac{|n+ \lambda + b |^{2}}{2}}, \endline
   = & \exp(- 2 \pi i B(b,b))  \sum_{ \lambda \in L^{*} } \exp(- 2\pi i B(\lambda, \mu + 2 b)) \endline & \hspace{4cm} \times \sum_{n \ZZ^{2}} \hat{\rho}^{c_{1}, c_{2}}( n + \lambda+ b, \tau, \bar{\tau}) \exp(2 \pi i B(n + \lambda + b, b)) \tilde{q}^{\frac{|n+ \lambda + b |^{2}}{2}}, \endline
   = &  \sum_{ \lambda \in L^{*} }   \exp(- 2\pi i B(\mu +  b, \lambda + b)) \exp(2 \pi i B(\mu - \lambda, b)) \widehat{\Theta}_{A,\lambda + b , b, c_{1}, c_{2}} (-\frac{1}{\tau},- \frac{1}{\bar{\tau}}).
\end{align}
Therefore, 
\begin{equation}
   \mathrm{e}^{- 2\pi i B(\mu, b )} \widehat{\Theta}_{A, \mu + b , b , c_{1}, c_{2} }(\tau, \bar{\tau}) = \frac{- \tau^{-1} }{\sqrt{|\det(A)|}} \sum_{ \lambda \in L^{*} } \mathrm{e}^{- 2\pi i B(\mu +  b, \lambda + b)} \mathrm{e}^{-2 \pi i B( \lambda, b)} \widehat{\Theta}_{A,\lambda + b , b, c_{1}, c_{2}} (-\frac{1}{\tau},- \frac{1}{\bar{\tau}}).
\end{equation}

\section{Asymptotic structure, proofs}\label{apx:proofs}
In this appendix we will provide a more detailed analysis of the asymptotics of the $\zhat$ invariant on positive Brieskorn spheres when the proposal of \ref{conj:regform} is utilized.
Our aim will be to give a proof for Lemma \ref{lem:asympt}. 
This appendix is divided into three sections, mimicking Section \ref{sec:asympt}.
\subsection{Asymptotics of \texorpdfstring{$W(\tau)$}{W(~q)}}
We shall compute the asymptotic order of divergence of $W(\tau)$ as $q\to 1$, or equivalently as $\tau\to0$ for the indefinite theta functions which form its building blocks. 
Since $\tilde{q}=\mathrm{e}^{\frac{2\pi i}{\tau}}$ this corresponds to the leading $\tilde{q}$-power of these series.

The strategy that we will use is the following:
we will demonstrate that within the cone of summation \eqref{eq:indef-theta-cone} of the indefinite theta series the $\tilde{q}$-power increases as the index of summation moves further away from the origin.
With this information, we will construct a finite polynomial with two terms which will have the same leading $\tilde{q}$-power as the indefinite theta function. 
Each polynomial will then be summed over in the same way as in  the definition of $W(\tau)$, resulting in a polynomial with the same leading $\tilde{q}$-order as the original sum of indefinite theta series.

\begin{lemma}
 Using the same conventions as Section \ref{sec:preliminaries}, the leading order of $\Theta_{A,\mu+b,b,\mathsf{c}_1,\mathsf{c}_2}$, where 
 \begin{equation}
  \mu = \left( \begin{array}{c} \frac{\alpha_1}{2st(str+1)} \\  \frac{\alpha_2}{2r+1} \end{array} \right)~,
 \end{equation}
 with $1\leq \alpha_1 < 2st(str+1)$, $0\leq \alpha_2 < 2r+1 $, is the same as
 \begin{equation}
     \mathcal{P}(s,t,r,\mu;\tau) = 
     \begin{cases}
     \sum_{(\ell_1,\ell_2)\in \{(0,0),(-1,-1)\}}q^{R(\mu,s,t,r;\ell_1,\ell_2)} & \text{if }\delta \neq 0\\
     \sum_{(\ell_1,\ell_2)\in \{(0,0),(-1,0)\}}q^{R(\mu,s,t,r;\ell_1,\ell_2)} & \text{if }\delta = 0
     \end{cases}
 \end{equation}
  where
 \begin{multline}
  R(\alpha_1,\alpha_2,r,s,t;\ell_1,\ell_2)=(2 r+1) \left(-\left\lfloor \frac{1}{2} \left(\frac{2 \alpha_2+1}{2 r+1}-\frac{\alpha_1}{r s t+1}\right)\right\rfloor +\ell_1 s t+\ell_2+\frac{2 \alpha_2+1}{4 r+2}\right)^2\\
  -2 s t (r s t+1) \left(\ell_1+\frac{\alpha_1}{2r s^2 t^2+2 s t}\right)^2
 \end{multline}
 and
 \begin{equation}
    \delta =\frac{2\alpha_2 + 1}{2(2r+1)}- \frac{\alpha_1}{2(rst+1)}~.
 \end{equation}
\end{lemma}
Note that we are not considering the case where $\alpha_1=0 \mod 2st(str+1)$.
This is sufficient for our goals, as if $\alpha_1$ were zero then the sines in \eqref{eq:Wqtilde} vanish, so the $\tilde{q}$-power should not be considered.
\begin{proof}
  The indefinite theta series is given by
  \begin{equation}
    \Theta_{A, \mu+b,b,\mathsf{c}_1,\mathsf{c}_2}(\tau) = \sum_{n\in \ZZ^2} \rho^{\mathsf{c}_1, \mathsf{c}_2}(\mu+b+n)\mathrm{e}^{2\pi i B(\mu + b + n, b)}q^{B(\mu+b+n, \mu+b+n)}.
  \end{equation}
To analyze the leading $q$-power we shall look at the derivatives of the $q$-exponent along the rays of the cone.
Let 
\begin{equation}
  \ell_1 = n_1 + \left\lfloor \frac{\alpha_1}{2st(str+1)}\right\rfloor = n_1, \quad \ell_2 = n_2 -n_1st + \left\lfloor \frac{2\alpha_2 + 1}{2(2r+1)}- \frac{\alpha_1}{2(rst+1)}\right\rfloor
\end{equation}
then, for $\delta =\frac{2\alpha_2 + 1}{2(2r+1)}- \frac{\alpha_1}{2(rst+1)}$, we have $B\left(\mu+b+n,\mu+b+n\right)=R(\alpha_1,\alpha_2,r,s,t;\ell_1,\ell_2)$ and
\begin{equation}
  \begin{split}
  \frac{d}{d\ell_1}R(\alpha_1,\alpha_2,r,s,t;\ell_1,\ell_2) = 2st(2r+1)\left[\frac{st-2}{2r+1}\left(\ell_1 + \frac{\alpha_1}{2st(str+1)}\right)+\ell_2 + \delta - \left\lfloor \delta \right\rfloor\right]\\
  \frac{d}{d\ell_2}R(\alpha_1,\alpha_2,r,s,t;\ell_1,\ell_2) = (2r+1)\left[st\left(\ell_1 + \frac{\alpha_1}{2st(str+1)}\right)+\ell_2 + \delta - \left\lfloor \delta \right\rfloor\right]
  \end{split}
\end{equation}
which are respectively non-negative, negative for $\ell_1,\ell_2\geq 0$, $\ell_1,\ell_2<0$.
Furthermore, we can expand the kernel of the sum $\rho^{\mathsf{c}_1,\mathsf{c}_2}$
\begin{equation}
  \rho^{\mathsf{c}_1, \mathsf{c}_2}= - \text{sgn}\left(\ell_1 + \frac{\alpha_1}{2st(rst+1)} \right)-\text{sgn}\left(\ell_2+ \delta-\lfloor\delta\rfloor\right).
\end{equation}
therefore, for $\ell_1\geq 0$, resp. $\ell_1<0$, the first sign is positive, resp. negative.
If $\delta\neq 0$ for $\ell_2\geq 0$, resp $\ell_2<0$, the second sign is positive, resp negative.
If $\delta=0$ then the second sign reduces to $\mathrm{sgn}(\ell_2)$.

For $\delta\neq 0$, the theta function reduces to
\begin{equation}
  \begin{split}
    \Theta_{A, \mu+b,b,\mathsf{c}_1,\mathsf{c}_2}(\tau)&=2\sum_{\ell_1=-\infty}^{-1}\sum_{\ell_2=-\infty}^{-1}\mathrm{e}^{2\pi i B(\mu + b + n, b)}q^{B(\mu+b+n, \mu+b+n)}\\
                                                       &-2\sum_{\ell_1=0}^{\infty}\sum_{\ell_2=0}^{\infty}\mathrm{e}^{2\pi i R(\alpha_1,\alpha_2,r,s,t;\ell_1,\ell_2)}q^{R(\alpha_1,\alpha_2,r,s,t;\ell_1,\ell_2)}
  \end{split}
\end{equation}
hence, given the derivatives of the $R(\alpha_1,\alpha_2,r,s,t;\ell_1,\ell_2)$ the leading $q$-power is given for either $\left(\ell_1,\ell_2\right)=(0,0)$ or $\left(\ell_1,\ell_2\right)=(-1,-1)$.

If $\delta = 0$, the theta function becomes
\begin{equation}
 \begin{split}
    &\Theta_{A, \mu+b,b,\mathsf{c}_1,\mathsf{c}_2}(\tau)=\\
    &2\sum_{\ell_1=-\infty}^{-1}\sum_{\ell_2=-\infty}^{-1}\mathrm{e}^{2\pi i B(\mu + b + n, b)}q^{R(\alpha_1,\alpha_2,r,s,t;\ell_1,\ell_2)}   -2\sum_{\ell_1=0}^{\infty}\sum_{\ell_2=1}^{\infty}\mathrm{e}^{2\pi i B(\mu + b + n, b)}q^{R(\alpha_1,\alpha_2,r,s,t;\ell_1,\ell_2)}\\
    &+\sum_{\ell_1=-\infty}^{-1}\mathrm{e}^{2\pi i B(\mu + b + n, b)}q^{R(\alpha_1,\alpha_2,r,s,t;\ell_1,0)}
    -\sum_{\ell_1=-\infty}^{-1}\mathrm{e}^{2\pi i B(\mu + b + n, b)}q^{R(\alpha_1,\alpha_2,r,s,t;\ell_1,0)}
 \end{split} .
\end{equation}
Here
\begin{equation}
  R(\alpha_1,\alpha_2,r,s,t;-1,-1)-R(\alpha_1,\alpha_2,r,s,t;-1,0)= 2r+1+\frac{2r+1}{str+1}\left(2st(str+1)-\alpha_1\right)>0
\end{equation}
and
\begin{equation}
  R(\alpha_1,\alpha_2,r,s,t;1,0)-R(\alpha_1,\alpha_2,r,s,t;0,0)= \frac{(st-2)(st(str+1)+\alpha_1)}{str+1}>0
\end{equation}
hence the leading $q$-power is given by $(\ell_1,\ell_2)=(0,0)$ or $(\ell_1,\ell_2)=(-1,0)$.
\end{proof}
\subsection{Asymptotics of \texorpdfstring{$ZG^{\textsf{c}_1}$}{ZGc1}} \label{borelmordellintegral}
In this section we will show that $ZG^{c_{1}}(\tau, \bar{\tau})$ is convergent. Recall that

\begin{multline}
     ZG^{c_{1}}(\tau, \bar{\tau} ) =
     \frac{C_{\Gamma}(q^{-1})}{2 f_{2r+1,1}(\tau)} \endline
     \times\sum_{\substack{\hat\epsilon=(\epsilon_1,\epsilon_2,\epsilon_3)\\\in (\ZZ/2)^{\otimes 3}}}
(-1)^{\hat\epsilon} \Bigg[    \frac{ \sum_{\ell_{1}=0}^{2 s t(r s t+1)-1} \sum_{\ell_{2}=0}^{2r} \mathrm{e}^{- 2\pi i B(\mathsf{a}_{\hat\epsilon}, \lambda+ \mathsf{b})} \mathrm{e}^{- 2 \pi i B(\lambda + \mathsf{b}, \mathsf{b})}     G_{A, \lambda + \mathsf{b} , \mathsf{b},\mathsf{c}_{1}}\left(-\frac{1}{\tau}, -\frac{1}{\bar{\tau}}\right) }{ (-\tau) \sqrt{|\det(A)|} } \endline
   -  \mathrm{e}^{-2 \pi i B(\mathsf{a}_{\hat\epsilon},\mathsf{b})} G_{A,\mathsf{a}_{\hat\epsilon},\mathsf{b},\mathsf{c}_{1}}(\tau, \bar{\tau})\Bigg].
\end{multline}
For $\mu = (\frac{m_{1}}{2 s t (r s t+1)}, \frac{m_{2}}{2r+1}) \in L^{*}$, 
\begin{align}
   \mathrm{e}^{- 2 \pi i B(\mu + b, b)} G_{A, \mu+ b,b,c_{1}}(\tau, \bar{\tau})   = &   \frac{-i }{\sqrt{2 s t(r s t+1)}} f_{2r+1,2m_{2}+1} (\tau) \int_{- \bar{\tau}}^{i \infty} \frac{\theta^{1}_{s t (r s t+1), m_{1}}(y)}{\sqrt{-i (y + \tau)}}.
\end{align}
Therefore, for $\mu = (\frac{m_{1}}{2 s t (r s t+1)}, \frac{m_{2}}{2r+1}) \in L^{*}$, $\lambda = (\frac{\ell_{1}}{2 s t (r s t+1)}, \frac{\ell_{2}}{2r+1}) \in L^{*}$,
\begin{align}\label{BMStransform1}
     &\frac{1}{(- \tau)\sqrt{|\det(A)|} }   \sum_{\ell_{1}=0}^{2 s t(r s t+1)-1} \sum_{\ell_{2}=0}^{2r} \mathrm{e}^{- 2 \pi i B(\mu + b, \lambda+ b)} \mathrm{e}^{- 2 \pi i B(\lambda + b, b)} G_{A, \lambda+ b, b, c} \left(-\frac{1}{\tau}, -\frac{1}{\bar{\tau}}\right) \endline
     = & \frac{1}{(- \tau)\sqrt{|\det(A)|} } \left( \sum_{\ell_{1}=0}^{2 s t(r s t+1)-1}   \frac{-i \mathrm{e}^{2 \pi i \frac{m_{1} \ell_{1}}{2 s t (r s t +1)}}}{\sqrt{2 s t(r s t+1)}} \int_{ \frac{1}{\bar{\tau}} }^{i \infty} \frac{\theta^{1}_{s t (r s t+1), \ell_{1}}(y)}{\sqrt{-i(y- \frac{1}{\tau})}} \dd y  \right) \endline & \hspace{6cm} \times  \left( \sum_{\ell_{2}=0}^{2r} \mathrm{e}^{-i \pi \frac{(2 \ell_{2}+1)(2 m_{2}+1)}{2(2r+1)} } f_{2r+1, 2 \ell_{2}+1} \left(-\frac{1}{\tau}\right) \right) 
\end{align}
The term in the second bracket is the $S$-transformation of the theta function $f_{2r+1,2m_{2}+1}$,
\begin{align}\label{BMcperpterm}
    \sum_{\ell_{2}=0}^{2r} \mathrm{e}^{-i \pi \frac{(2 \ell_{2}+1)(2 m_{2}+1)}{2(2r+1)} } f_{2r+1, 2 \ell_{2}+1} \left(-\frac{1}{\tau}\right) = \sqrt{2r+1} \sqrt{- i \tau} f_{2r+1,2m_{2}+1}(\tau)
\end{align}
Now let's look at the first bracket, 
\begin{align}\label{BMcparallelterm}
   & \sum_{\ell_{1}=0}^{2 s t(r s t+1)-1}   \frac{-i \mathrm{e}^{2 \pi i \frac{m_{1} \ell_{1}}{2 s t (r s t +1)}}}{\sqrt{2 s t(r s t+1)}} \int_{ \frac{1}{\bar{\tau}} }^{i \infty} \frac{\theta^{1}_{s t (r s t+1), \ell_{1}}(y)}{\sqrt{-i(y- \frac{1}{\tau})}} \dd y  \endline
   = &   - \sqrt{\tau}  \int_{0}^{- \bar{\tau}} \frac{ y^{- \frac{3}{2}} }{\sqrt{-i(y + \tau)}} \sum_{\ell_{1}=0}^{2 s t(r s t+1)-1}   \frac{ \mathrm{e}^{2 \pi i \frac{m_{1} \ell_{1}}{2 s t (r s t +1)}}}{\sqrt{2 s t(r s t+1)}} \theta^{1}_{s t(r s t+1), \ell_{1}}(\frac{-1}{y}) \dd y  \endline
   = &   - i^{\frac{3}{2}} \sqrt{\tau}  \int_{0}^{- \bar{\tau}} \frac{ \theta^{1}_{s t (r s t+1), - m_{1}}(y) }{\sqrt{-i(y + \tau)}} \dd y \endline
   = &   i^{\frac{3}{2}} \sqrt{\tau}  \int_{0}^{- \bar{\tau}} \frac{ \theta^{1}_{s t (r s t+1),  m_{1}}(y) }{\sqrt{-i(y + \tau)}} \dd y 
\end{align}
Here we have changed the integration variable $y \rightarrow -1/y$ in the first step and then used the $S$-transformation for unary theta functions in the second step and finally we used the property of unary theta function $- \theta^{1}_{s t (r s t+1),  m_{1}}(y) = \theta^{1}_{s t (r s t+1), - m_{1}}(y)$. Combining \eqref{BMStransform1}, \eqref{BMcperpterm}, \eqref{BMcparallelterm}, we get 
\begin{align}\label{BMStransform2}
     &\frac{1}{(- \tau)\sqrt{|\det(A)|} }   \sum_{\ell_{1}=0}^{2 s t(r s t+1)-1} \sum_{\ell_{2}=0}^{2r} \mathrm{e}^{- 2 \pi i B(\mu + b, \lambda+ b)} \mathrm{e}^{- 2 \pi i B(\lambda + b, b)} G_{A, \lambda+ b, b, c} (\frac{-1}{\tau}, \frac{-1}{\bar{\tau}}) \endline
     = & \frac{i}{\sqrt{2 s t (r s t+1)} } f_{2r+1,2m_{2}+1}(\tau)    \int_{0}^{- \bar{\tau}} \frac{ \theta^{1}_{s t (r s t+1),  m_{1}}(y) }{\sqrt{-i(y + \tau)}} \dd y .   
\end{align}
This implies, for $\mu = (\frac{m_{1}}{2 s t (r s t+1)}, \frac{m_{2}}{2r+1}) \in L^{*}$, $\lambda = (\frac{\ell_{1}}{2 s t (r s t+1)}, \frac{\ell_{2}}{2r+1}) \in L^{*}$,
\begin{align}
&  \frac{1}{(- \tau)\sqrt{|\det(A)|} }   \sum_{\ell_{1}=0}^{2 s t(r s t+1)-1} \sum_{\ell_{2}=0}^{2r} \mathrm{e}^{- 2 \pi i B(\mu + b, \lambda+ b)} \mathrm{e}^{- 2 \pi i B(\lambda + b, b)} G_{A, \lambda+ b, b, c} (\frac{-1}{\tau}, \frac{-1}{\bar{\tau}})   \endline & \hspace{0.5cm} - \mathrm{e}^{- 2 \pi i B(\mu + b, b)} G_{A, \mu+ b,b,c_{1}}(\tau, \bar{\tau}) \endline
= & \frac{i}{\sqrt{2 s t (r s t+1)} } f_{2r+1,2m_{2}+1}(\tau)    \int_{0}^{i \infty} \frac{ \theta^{1}_{s t (r s t+1),  m_{1}}(y) }{\sqrt{-i(y + \tau)}} \dd y .
\end{align}
Therefore, 
\begin{align}
     ZG^{c_{1}}(\tau, \bar{\tau} )    = & \frac{C_{\Gamma}(q^{-1})}{2} \sum_{\substack{\hat\epsilon=(\epsilon_1,\epsilon_2,\epsilon_3)\\\in (\ZZ/2)^{\otimes 3}}}
(-1)^{\hat\epsilon} \Bigg[ \frac{i}{\sqrt{2 s t (r s t+1)} }     \int_{0}^{i \infty} \frac{ \theta^{1}_{s t (r s t+1),2s t (r s t+1)(\mathsf{a}_{\hat\epsilon})_{1}   }(y) }{\sqrt{-i(y + \tau)}} \dd y    \Bigg],
\end{align}
where $(\mathsf{a}_{\hat\epsilon})_{1}$ denotes the first component of $\mathsf{a}_{\hat\epsilon}$.
For $\tau= i \alpha$, we can show that each integral converges with the following Lemma.

\begin{lemma}
\label{lem:alpha_limit}
Let $m\in\mathbb{N}$ and let $r>0$ be real.  Define, for $0\le \alpha<1$,
\begin{equation}
   I(\alpha)\;\coloneq\;
   \int_{0}^{\infty}
     \frac{
       \displaystyle
       \sum_{n\in\mathbb{Z}}
       \Bigl(n+\frac{r}{2m}\Bigr)\,
       \exp\!\Bigl[-\,\frac{2\pi z}{4m}\,(2mn+r)^{2}\Bigr]
     }{\sqrt{z+\alpha}}
     \,\mathrm dz .\label{eq:Iint}
\end{equation}
Then the integral $I(\alpha)$ converges absolutely for $\alpha\in [0,1)$.

\end{lemma}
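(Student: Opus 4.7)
The plan is to establish absolute convergence of $I(\alpha)$ by showing that the numerator
\[
N(z) \coloneq \sum_{n\in\mathbb{Z}}\bigl(n+\tfrac{r}{2m}\bigr)\exp\!\bigl[-\tfrac{\pi z}{2m}(2mn+r)^2\bigr]
\]
decays exponentially at both $z\to\infty$ and $z\to 0^+$, so that the only genuinely singular factor of the integrand, namely $1/\sqrt{z+\alpha}$ at $z=0$ when $\alpha=0$, is dominated. The key observation is that, after rewriting the summand with $k=n+r/(2m)$, one has $N(z)=\sum_{k\in(r/(2m))+\mathbb{Z}} k\,\mathrm{e}^{-2\pi m z k^2}$, which is an instance of the weight-$3/2$ unary theta series $g_{r/(2m),0}$ introduced in Section~\ref{sec:preliminaries}, evaluated at $\tau=2imz$. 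Its modular $S$-transformation governs the behavior near $z=0$.

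For the tail $\int_1^\infty$, I would factor out $\mathrm{e}^{-\pi z r_{\min}^2/(2m)}$, where $r_{\min}=\min_{n\in\mathbb{Z}}|2mn+r|$ is strictly positive provided $r/(2m)\notin\mathbb{Z}$, and control the remaining Gaussian sum by a convergent geometric tail uniform in $z\geq 1$. This yields a bound $|N(z)|\leq C\,\mathrm{e}^{-\pi z r_{\min}^2/(2m)}$, and since $1/\sqrt{z+\alpha}\leq 1$ on $[1,\infty)$ for all $\alpha\geq 0$, the tail is dominated by a convergent exponential. For the region $z\in(0,1]$ I would instead apply Poisson summation: computing the Fourier transform of $y\mapsto y\,\mathrm{e}^{-\pi t y^2}$ by differentiating a Gaussian in the frequency parameter gives the dual representation
\[
N(z) = \frac{-i}{(2mz)^{3/2}}\sum_{\ell\in\mathbb{Z}}\ell\,\mathrm{e}^{i\pi\ell r/m}\exp\!\bigl(-\tfrac{\pi\ell^2}{2mz}\bigr).
\]
The $\ell=0$ term vanishes by the weight factor and the remaining ones are bounded by $\mathrm{e}^{-\pi/(2mz)}$, so $N(z)=O\bigl(z^{-3/2}\mathrm{e}^{-\pi/(2mz)}\bigr)$ as $z\to 0^+$. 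Consequently $N(z)/\sqrt{z+\alpha}$ extends continuously to $z=0$ for every $\alpha\in[0,1)$ and is integrable on $(0,1]$.

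Combining both regimes yields absolute convergence of $I(\alpha)$. The main obstacle is the careful justification of the Poisson-summation step: because of the extra weight factor $k$, one cannot invoke the standard Jacobi theta $S$-transform directly, and must establish the identity through differentiation of a Gaussian under the integral sign followed by a verification that the dual sum converges absolutely and uniformly for $z$ in a neighborhood of $0$. A secondary subtlety is the degenerate situation $r/(2m)\in\mathbb{Z}$, which the hypothesis $r>0$ does not formally exclude; in that case $N(z)\equiv 0$ by the oddness of the summand in $n$, so the claim holds trivially and can either be noted or ruled out by slightly strengthening the hypothesis.
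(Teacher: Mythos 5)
Your proposal is correct and follows essentially the same route as the paper's proof: exponential decay of the theta sum for $z\ge 1$ from the minimal nonzero Gaussian exponent, and the differentiated Jacobi/Poisson transform giving the $O\bigl(z^{-3/2}\mathrm{e}^{-c/z}\bigr)$ bound near $z=0$, which together dominate the $1/\sqrt{z+\alpha}$ factor uniformly in $\alpha$. Your remark on the degenerate case $r/(2m)\in\mathbb{Z}$ (where the sum vanishes identically) is a small point the paper glosses over, but it does not change the argument.
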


\begin{proof}
Write the numerator as
\[
   S(z)\;\coloneq\;\sum_{n\in\mathbb{Z}}(n+\delta)\,
   \mathrm{e}^{-\pi a z (n+\delta)^{2}},\qquad a\coloneq2m,\, \delta \coloneq \frac{r}{2m},\quad z>0.
\]
Differentiating the Jacobi identity
\[
   \vartheta(b,t)=\sum_{n\in\mathbb{Z}}\mathrm{e}^{-\pi t (n+b)^{2}}
 =t^{-1/2}\sum_{k\in\mathbb{Z}}\mathrm{e}^{-\pi k^{2}/t}\,\mathrm{e}^{2\pi i k b}, \quad \forall b \in \mathbb{R}
\]
in $b$ gives the exact formula
\[
   S(z)\;=\;
   (a z)^{-3/2}
   \sum_{k\in\mathbb{Z}}
      k\,\mathrm{e}^{-\pi k^{2}/(a z)}\,\mathrm{e}^{2\pi i k\delta},
\]
in which the term $k=0$ is absent.

For $z\in(0,1]$ the leading terms in \eqref{eq:Iint} are $k=\pm1$; all others are
exponentially smaller.  Hence there exist constants $C,c>0$, such that
\begin{equation}
   |S(z)|\;\le\;C\,z^{-3/2}\,\mathrm{e}^{-c/z},
   \qquad 0<z\le1.
   \label{eq:Iintproof1}
 \end{equation}

For $z\ge1$ one term of the sum (\emph{viz.} $n_0\in\{0,-1\}$ such that $|n_0+\delta^{\!*}|=\min_{n\in\mathbb{Z}}|n+\delta|$, for $\delta^{\!*}= \delta - \lfloor \delta \rfloor$) is singled out, and each
other term has $|n+\delta|\ge|n_0+\delta^{\!*}|+1$.  Consequently
\begin{equation}
   |S(z)|\;\le\;C'\,\mathrm{e}^{-c' z},
   \qquad z\ge1,
   \label{eq:Iintproof2}
 \end{equation}
for some $c'\!>0$ depending only on $m$.

Combine \eqref{eq:Iintproof1} and \eqref{eq:Iintproof2} to obtain
\[
   F(z)\;\coloneq\;
   \frac{C\,z^{-3/2}\mathrm{e}^{-c/z}}{\sqrt z}\,\mathbf 1_{(0,1]}
        \;+\;
   \frac{C'\mathrm{e}^{-c' z}}{\sqrt z}\,\mathbf 1_{[1,\infty)}
   \;\in L^{1}(0,\infty),
\]
and
\[
   0\;\le\;\frac{|S(z)|}{\sqrt{z+\alpha}}\;\le\;F(z),
   \qquad 0\le\alpha<1,\;z>0.
\]
Since $F\in L^{1}(0,\infty)$, the dominated convergence theorem applies:
\[
   I(\alpha)
     =\int_{0}^{\infty}\frac{|S(z)|}{\sqrt{z+\alpha}}\,dz
     \xrightarrow[\alpha\to0^{+}]{}
     \int_{0}^{\infty}\frac{|S(z)|}{\sqrt z}\,dz
     =I(0),
\]
and each $I(\alpha)$ is finite.
\end{proof}
\subsection{Asymptotics of \texorpdfstring{$ZG^{\textsf{c}_2}$}{ZGc2}} 
In this subsection, we will look at the asymptotics of $ZG^{\textsf{c}_2}$. We can split $ZG^{\textsf{c}_2}$ in two parts, $ZG^{\textsf{c}_2}(\tau, \bar{\tau})=ZG^{\textsf{c}_2}_{1}(\tau, \bar{\tau}) - ZG^{\textsf{c}_2}_{2}(\tau, \bar{\tau})$. Where 
\begin{multline}
    ZG^{\textsf{c}_2}_{1}(\tau, \bar{\tau}) \coloneq   \frac{4 i C_{\Gamma}(q^{-1}) }{   (-\tau) \sqrt{|\det(A)|} f_{2r+1,1}(\tau) }   \endline   \times  \sum_{\ell_{1}=0}^{2 s t(r s t+1)-1} (-1)^{\ell_{1}} \sin(\frac{ \ell_{1} \pi }{s}) \sin(\frac{ \ell_{1} \pi }{t}) \sin(\frac{ \ell_{1} \pi }{r s t+1})   \endline
   \times\sum_{\ell_{2}=0}^{2r} G_{A, \lambda + \mathsf{b} , \mathsf{b},\mathsf{c}_{2}}\left(-\frac{1}{\tau}, -\frac{1}{\bar{\tau}}\right),
\end{multline}
\begin{multline}
    ZG^{\textsf{c}_2}_{2}(\tau, \bar{\tau}) \coloneq   \frac{C_{\Gamma}(q^{-1})}{2 f_{2r+1,1}(\tau)} \sum_{\substack{\hat\epsilon=(\epsilon_1,\epsilon_2,\epsilon_3)\\\in (\ZZ/2)^{\otimes 3}}}
(-1)^{\hat\epsilon} \mathrm{e}^{-2 \pi i B(\mathsf{a}_{\hat\epsilon},\mathsf{b})} G_{A,\mathsf{a}_{\hat\epsilon},\mathsf{b},\mathsf{c}_{2}}(\tau, \bar{\tau}).\hfill
\end{multline}
Using 
\begin{equation}
    \beta(x)= \frac{\mathrm{e}^{-\pi x}}{\pi\sqrt{x}},\ \text{as }x\to\infty~,
\end{equation}
we can approximate $G_{A, \lambda + \mathsf{b} , \mathsf{b},c}\left(-\frac{1}{\tau}, -\frac{1}{\bar{\tau}}\right)$ near $\tau=0$ as follows\footnote{In equation \eqref{eq:Gapproxc2}, when $B(\mathsf{c}_{2},n) = 0 $ we need to take $\frac{\mathrm{sgn}(B(\mathsf{c}_{2},n))}{|B(\mathsf{c}_{2},n))|} =0$.  } 
\begin{equation}\label{eq:Gapproxc2}
    G_{A, \lambda + \mathsf{b} , \mathsf{b},c}\left(-\frac{1}{\tau}, -\frac{1}{\bar{\tau}}\right)
    = - \sum_{n \in \lambda + \mathsf{b} + \ZZ^{2}  }  \frac{ \sqrt{- i \tau} \mathrm{e}^{2 \pi i B(n,\mathsf{b})} }{ \pi  } \frac{\mathrm{sgn}(B(\mathsf{c}_{2},n))}{|B(\mathsf{c}_{2},n)|} \sqrt{(s t-2)(2r+1)(r s t+1)} \tilde{q}^{ Q_{\mathsf{c}_{2}} (n)   },
\end{equation}
with $Q_{\mathsf{c}_{2}}$ is the quadratic form given by, 
\begin{equation}
   Q_{\mathsf{c}_{2}} (n) \coloneq  \frac{|n|^{2}}{2} - \frac{(B(\mathsf{c}_{2},n))^{2}}{|\mathsf{c}_{2}|^{2}}.
\end{equation}
Lemma 2.5 of \cite{Zwegers2008MockTF}, we can see that $Q_{\mathsf{c}_{2}}$ is a positive definite quadratic form. 

Near $\tau =0$, using modular S-transform of $f_{2r+1,1}$ we can approximate $1/f_{2r+1,1}(\tau)$ as 
\begin{equation}
   \frac{1}{f_{2r+1,1}(\tau)} \approx \frac{\sqrt{-i \tau } \sqrt{2r+1} }{2 \sin(\frac{\pi r}{2r+1})} \tilde{q}^{- \frac{1}{8(2r+1)}}. 
\end{equation}
We get exponential singularity from $ ZG^{\textsf{c}_2}_{1}(\tau, \bar{\tau})$ if $ZG^{\textsf{c}_2}_{1}(\tau, \bar{\tau}) \approx \tilde{q}^{s} $ for some $s<0$. Therefore, we can ignore the $n$'s from the sum in \eqref{eq:Gapproxc2} for which $Q_{\mathsf{c}_{2}}(n) > \frac{1}{8(2r+1)}$. 

Let's now look at which $n$'s can be ignored. For a fixed $n_{2}$, minima of $Q_{\mathsf{c}_{2}}(n) $ are at $n_{1} = \frac{2n_{2}(2r+1)}{4 r s t + s t + 2}$. At this minimum, $Q_{\mathsf{c}_{2}}(n) > \frac{1}{8(2r+1)}$ if $|n_{2}| > \frac{\sqrt{4 r s t+ s t+2}}{2(2r+1)\sqrt{s t-2}}$. Therefore, if $|n_{2}| > \frac{\sqrt{4 r s t+ s t+2}}{2(2r+1)\sqrt{s t-2}}$, then $Q_{\mathsf{c}_{2}}(n) > \frac{1}{8(2r+1)}$ for all $n_{1}$.
We can argue similarly that if $|n_{1}| > \frac{\sqrt{4 r s t+ s t+2}}{\sqrt{8 s t (r s t+1)(2r+1)(st-2)}}$, then  $Q_{\mathsf{c}_{2}}(n) > \frac{1}{8(2r+1)}$ for all $n_{2}$. Therefore, we can restrict the sum in \eqref{eq:Gapproxc2} to $|n_{1}| \leq \frac{\sqrt{4 r s t+ s t+2}}{\sqrt{8 s t (r s t+1)(2r+1)(st-2)}}  $ and $|n_{2}|\leq  \frac{\sqrt{4 r s t+ s t+2}}{2(2r+1)\sqrt{s t-2}}$. 

\begin{align}\label{ZGc21singpoly2}
    & ZG^{\textsf{c}_2}_{1}(\tau, \bar{\tau}) \endline 
    \approx & \frac{4 i C_{\Gamma}(q^{-1}) \sqrt{-i \tau} \sqrt{(2r+1)(s t -2)}}{(- \tau) \sqrt{8  s t}   \sin(\frac{\pi r}{2r+1})   }   \sum_{\ell_{1}=-L_{1}}^{L_{1}} \sum_{\ell_{2}=-L_{2}}^{L_{2}} (-1)^{\ell_{1}} \sin(\frac{ \ell_{1} \pi }{s}) \sin(\frac{ \ell_{1} \pi }{t}) \sin(\frac{ \ell_{1} \pi }{r s t+1}) \endline & \hspace{3cm} \times   \frac{ \mathrm{e}^{\frac{i \pi (2 \ell_{2}+1)}{2(2r+1)}} \mathrm{sgn}(B(\mathsf{c}_{2}, (\frac{\ell_{1}}{2 s t (r s t+1)}, \frac{2 \ell_{2}+1}{2(2r+1)}) ))  \tilde{q}^{ Q_{\mathsf{c}_{2}} ( \frac{\ell_{1}}{2 s t (r s t+1)}, \frac{2 \ell_{2}+1}{2(2r+1)}  )  - \frac{1}{8(2r+1)} }}{\pi |B(\mathsf{c}_{2}, (\frac{\ell_{1}}{2 s t (r s t+1)}, \frac{2 \ell_{2}+1}{2(2r+1)}) )| }  
\end{align}
where the bounds $L_{1}$, and $L_{2}$ are given by 
\begin{align}
    L_{1}& = \left\lceil \frac{\sqrt{2 s t(r s t+1)(4 r s t+ s  t+ 2)}}{2\sqrt{(2r+1)(s t -2)}} \right\rceil  & L_{2}&=  \left\lceil \frac{\sqrt{4 r s t+ s t+2}}{2\sqrt{s t-2}} - \frac{1}{2} \right\rceil.
\end{align}
The leading order singularity of the $\tilde{q}$-polynomial in \eqref{ZGc21singpoly2} gives us the leading order singularity of $ZG^{\textsf{c}_2}_{1}(\tau, \bar{\tau})$.

Now let us examine at the asymptotics of $ZG^{\textsf{c}_2}_{2}(\tau, \bar{\tau})$. Using equation \eqref{nonholparteichint}, we can write $G_{A,\mathsf{a}_{\hat\epsilon},\mathsf{b},\mathsf{c}_{2}}(\tau, \bar{\tau})$ as, 
\begin{align}\label{GinZGc2}
    G_{A,\mathsf{a}_{\hat\epsilon},\mathsf{b},\mathsf{c}_{2}}(\tau, \bar{\tau}) = &  -i \sqrt{-|\mathsf{c}_{2}|^{2}} \sum_{I \in P_{2}} \left( \sum_{\ell \in (I+\mathsf{a}_{\hat\epsilon})^{\perp} + c_{2,\perp}\ZZ } \mathrm{e}^{2 \pi i B(\ell, b^{\perp})} q^{\frac{|\ell|^{2}}{2}} \right) \int_{-\bar{\tau}}^{i \infty} \frac{g_{\frac{B(I+\mathsf{a}_{\hat\epsilon},\mathsf{c}_{2})}{|\mathsf{c}_{2}|^{2}},B(\mathsf{c}_{2},b)}(-|\mathsf{c}_{2}|^{2}y)}{\sqrt{-i(y+ \tau)}} \dd y \endline
    = &  -i \sqrt{-|\mathsf{c}_{2}|^{2}} \sum_{I \in P_{2}} \mathrm{e}^{i  \pi  \frac{B(I+\mathsf{a}_{\hat\epsilon},c_{2,\perp}) }{  (s t-2)}}  \left( \sum_{k \in \ZZ }  (-1)^{k s t} q^{ \frac{s t (s t-2)}{2} ( k +  \frac{2B(I+\mathsf{a}_{\hat\epsilon}, c_{2, \perp})}{2 s t (s t-2)}   )^{2} }  \right) \endline & \hspace{2cm} \times  \int_{-\bar{\tau}}^{i \infty} \frac{g_{-\frac{B(I+\mathsf{a}_{\hat\epsilon},\mathsf{c}_{2})}{ 2 (1 + 2 r) ( s t- 2) ( r s t+1)},(r s t +1)}(2 (1 + 2 r) ( s t- 2) ( r s t+1)y)}{\sqrt{-i(y+ \tau)}} \dd y
\end{align}
where we can choose $P_{2}=\{ (0,x)| 1 \leq x \leq s t-2 \}$, and take the generator to be $c_{\perp}=(1, s t)$.

Using Poisson summation, we can write the integrand as, 
\begin{align}
  & g_{-\frac{B(I+\mathsf{a}_{\hat\epsilon},\mathsf{c}_{2})}{ 2 (1 + 2 r) ( s t- 2) ( r s t+1)},(r s t +1)}(2 (1 + 2 r) ( s t- 2) ( r s t+1)y)  \endline  =  & i \sum_{k \in \ZZ } \frac{k \mathrm{e}^{\frac{ - i \pi (k+ r s t+1) B(I+\mathsf{a}_{\hat\epsilon},\mathsf{c}_{2})  }{(2r+1)(s t-2)(r s t +1)}} \mathrm{e}^{ - \frac{ i \pi  k^{2}  }{2 (2r+1)(s t-2)(r s t +1) y } } }{(2 (2r+1)(s t-2)(r s t+1))^{\frac{3}{2}} (-i y)^{\frac{3}{2}}}
\end{align} 
Therefore,
\begin{align}
     & \int_{-\bar{\tau}}^{i \infty} \frac{g_{-\frac{B(I+\mathsf{a}_{\hat\epsilon},\mathsf{c}_{2})}{ 2 (1 + 2 r) ( s t- 2) ( r s t+1)},(r s t +1)}(2 (1 + 2 r) ( s t- 2) ( r s t+1)y)}{\sqrt{-i(y+ \tau)}} \dd y \endline
     = & i  \sum_{k \in \ZZ } \frac{k \mathrm{e}^{\frac{ - i \pi (k+ r s t+1) B(I+\mathsf{a}_{\hat\epsilon},\mathsf{c}_{2})  }{(2r+1)(s t-2)(r s t +1)}}}{(2 (2r+1)(s t-2)(r s t+1))^{\frac{3}{2}}}    \int_{-\bar{\tau}}^{i \infty} \frac{  \mathrm{e}^{ - \frac{  \pi  k^{2}  }{2 (2r+1)(s t-2)(r s t +1) (-i y) } }}{(-i y)^{\frac{3}{2}} \sqrt{-i(y+ \tau)}} \dd y 
\end{align}
To write down the asymptotics of the integral above we use the following integral identity for $\alpha, \beta >0$,
    \begin{align}
    \int_{\alpha}^{\infty} \frac{\mathrm{e}^{-\frac{ \beta}{x}} }{x^{\frac{3}{2}} \sqrt{x + \alpha} } \dd x = & \sqrt{\frac{\pi}{\alpha \beta }} \mathrm{e}^{\frac{\beta}{\alpha}} \left( \mathrm{Erfc} \left(\sqrt{\frac{\beta}{\alpha}}\right) - \mathrm{Erfc} \left(\sqrt{\frac{2\beta}{\alpha}}\right) \right)
\end{align}
As $\alpha \rightarrow 0^{+}$,
\begin{align}
 \lim_{\alpha \rightarrow 0^{+}} \sqrt{\frac{\pi}{\alpha \beta }} \mathrm{e}^{\frac{\beta}{\alpha}} \left( \mathrm{Erfc} \left(\sqrt{\frac{\beta}{\alpha}}\right) - \mathrm{Erfc} \left(\sqrt{\frac{2\beta}{\alpha}}\right) \right) = \frac{1}{\beta} .
\end{align}
Therefore, near $\tau =0$,
\begin{align}
      \int_{-\bar{\tau}}^{i \infty} \frac{  \mathrm{e}^{ - \frac{  \pi  k^{2}  }{2 (2r+1)(s t-2)(r s t +1) (-i y) } }}{(-i y)^{\frac{3}{2}} \sqrt{-i(y+ \tau)}} \dd y  
     \approx &  \frac{2 (2r+1)(s t-2)(r s t +1) i }{  \pi  k^{2}  }
\end{align}
Therefore, near $\tau =0$,
\begin{align}
      & \int_{-\bar{\tau}}^{i \infty} \frac{g_{-\frac{B(I+\mathsf{a}_{\hat\epsilon},\mathsf{c}_{2})}{ 2 (1 + 2 r) ( s t- 2) ( r s t+1)},(r s t +1)}(2 (1 + 2 r) ( s t- 2) ( r s t+1)y)}{\sqrt{-i(y+ \tau)}} \dd y \endline 
     \approx & -  \frac{ \mathrm{e}^{\frac{ - i \pi  B(I+\mathsf{a}_{\hat\epsilon},\mathsf{c}_{2})  }{(2r+1)(s t-2)}} }{ \pi (2 (2r+1)(s t-2)(r s t+1))^{\frac{1}{2}} } \sum_{\substack{k \in \ZZ  \\ k \neq 0}} \frac{ \mathrm{e}^{\frac{ - i \pi k B(I+\mathsf{a}_{\hat\epsilon},\mathsf{c}_{2})  }{(2r+1)(s t-2)(r s t +1)}}    }{   k  }  \endline 
     \approx & -  \frac{ \mathrm{e}^{\frac{ - i \pi  B(I+\mathsf{a}_{\hat\epsilon},\mathsf{c}_{2})    }{(2r+1)(s t-2)}}  \log( \mathrm{e}^{ i \pi ( \frac{    B(I+\mathsf{a}_{\hat\epsilon},\mathsf{c}_{2})  }{(2r+1)(s t-2)(r s t +1)} +1 ) } ) }{ \pi \sqrt{2 (2r+1)(s t-2)(r s t+1)} } .
\end{align}
Using Poisson summation, we can write the theta function in the equation \eqref{GinZGc2} as, 
\begin{equation}
   \sum_{k \in \ZZ }  (-1)^{k s t} q^{ \frac{s t (s t-2)}{2} ( k +  \frac{2B(I+\mathsf{a}_{\hat\epsilon}, c_{2, \perp})}{2 s t (s t-2)}   )^{2} } = \frac{\mathrm{e}^{- i \pi  \frac{ B(I+\mathsf{a}_{\hat\epsilon}, c_{2, \perp}) }{ (s t -2)}} }{\sqrt{- i s t (s t-2) \tau}} \sum_{k \in \ZZ} \mathrm{e}^{- \frac{ 2 \pi i k B(I+\mathsf{a}_{\hat\epsilon}, c_{2, \perp}) }{s t (s t -2)}} \tilde{q}^{\frac{(2k+ s t)^{2}}{8 s t (s t -2)}}. 
\end{equation}
Therefore, we can approximate $ZG^{\textsf{c}_2}_{2}(\tau, \bar{\tau})$ by,
\begin{align}\label{ZGc22singpoly2}
 & ZG^{\textsf{c}_2}_{2}(\tau, \bar{\tau}) \endline
\approx &  \frac{ i C_{\Gamma}(q^{-1}) \sqrt{2r+1} }{4 \pi \sin(\frac{\pi r}{2r+1}) \sqrt{ s t (s t-2) } }   \sum_{\substack{\hat\epsilon=(\epsilon_1,\epsilon_2,\epsilon_3)\\\in (\ZZ/2)^{\otimes 3}}}
(-1)^{\hat\epsilon} \mathrm{e}^{-2 \pi i B(\mathsf{a}_{\hat\epsilon},\mathsf{b})}    \sum_{I \in P_{2}}   \sum_{k = L_{3} }^{L_{4}}  \tilde{q}^{ \frac{(2k+ s t)^{2}}{8 s t (s t -2)} - \frac{1}{8(2r+1)} }  \endline & \hspace{2cm} \times   \log( \mathrm{e}^{ i \pi ( \frac{    B(I+\mathsf{a}_{\hat\epsilon},\mathsf{c}_{2})  }{(2r+1)(s t-2)(r s t +1)} +1 ) } ) \mathrm{e}^{- i \pi \frac{B(I+\mathsf{a}_{\hat\epsilon},\mathsf{c}_{2})}{(2r+1)(s t -2)} - \frac{ 2 \pi i k B(I+\mathsf{a}_{\hat\epsilon}, c_{2, \perp}) }{s t (s t -2)} }
\end{align}
where,
\begin{align}
    L_{3} & = \left\lceil -\frac{s t}{2} - \sqrt{\frac{s t(s t -2)}{4(2r+1)}} \right\rceil & L_{4} & = \left\lfloor -\frac{s t}{2} + \sqrt{\frac{s t(s t -2)}{4(2r+1)}} \right\rfloor.
\end{align}
By restricting the sum over $k$ to the range $ L_{3} \leq k \leq L_{4} $ we are throwing away the terms that do not contribute to the exponential singularity of $ZG_{2}^{\mathsf{c}_{2}}(\tau, \bar{\tau})$. Therefore, the leading order singularity of the $\tilde{q}$-polynomial in \eqref{ZGc22singpoly2} yields the leading order singularity of $ZG^{\textsf{c}_2}_{2}(\tau, \bar{\tau})$.

\bibliographystyle{utphys}
\bibliography{bibliography}

\end{document}